\documentclass[12pt]{article}
\usepackage{amsmath,amsfonts, amsthm,amssymb,hyperref,bbm,eucal,verbatim,enumerate,leftidx}
\usepackage{placeins}

\usepackage{pstricks, pst-node}
\psset{arrows=->, labelsep=3pt, mnode=circle}
\input pst-plot
\numberwithin{equation}{section}

\newtheorem{thm}{Theorem}
\newtheorem{lemma}{Lemma}[section]

\newtheorem{cor}[lemma]{Corollary}

\newtheorem*{bel*}{Belief}

\newtheoremstyle{named}{}{}{\itshape}{}{\bfseries}{.}{.5em}{\thmnote{#3}}
\theoremstyle{named}
\newtheorem*{thm1*}{Theorem}

\def\Ai{\mathrm{Ai}}
\def\tr{\operatorname{tr}}


\newcounter{remnr1}
\def\res1{
    \addtocounter{remnr1}{1}
    \vspace{2mm}\noindent{\bf (\Alph{remnr1})} }

\def\ltau{\,^{(\tau)}\!}
\def\ltaup{\,^{(\tau')}\!}
\def\eqD{\overset{\mathrm{distr}}{=}}
\def\AD{\mathrm{A\!D}}

\def\invb{\overset{\mathrm{inv},\beta}}
\def\invo{\overset{\mathrm{inv},1}}
\def\invu{\overset{\mathrm{inv},2}}
\def\t{{\bf r}}
\def\T{\mathfrak{R}}

\begin{document}

\title{A limit theorem at the spectral edge for corners of time-dependent Wigner matrices}
\author{Sasha Sodin\footnote{Department of Mathematics, Princeton University, Princeton, NJ 08544, USA \& School of Mathematical Sciences, Tel Aviv University, Ramat Aviv, Tel Aviv 6997801,
Israel. E-mail: asodin@princeton.edu. 
Supported in part by NSF grant PHY-1305472.}}
\maketitle

\begin{abstract}
For the eigenvalues of principal submatrices of stochastically evolving 
Wigner matrices, we construct and study the edge scaling limit: a random 
decreasing sequence of continuous functions of two variables, which at 
every point has the distribution of the Airy point process. The analysis is
based on the methods developed by
Soshnikov to study the extreme eigenvalues of a single Wigner matrix. 
\end{abstract}

\section{Introduction}

In this paper, we consider a time-dependent infinite Hermitian random matrix
\begin{equation}
\ltau{H} = \left( \ltau{H}(i, j) \right)_{i,j \geq 1} \quad (-1 \leq \tau \leq 1)
\end{equation}
such that the entries
\[ \left(\ltau{H}(i,j) \right)_{i \leq j} \]
of $\ltau{H}$ above the diagonal are independent random processes, and the
following properties hold:
\begin{enumerate}
 \item[A1)] the distribution of $\ltau{H}(i, j)$ is symmetric: 
 $\ltau{H}(i, j) \eqD - \ltau{H}(i, j)$ ($i,j \geq 1$, jointly in $\tau$);
 \item[A2)] the tails of $\ltau{H}(i, j)$ are subgaussian: $\mathbb{E} |\ltau{H}(i,j)|^{2k} \leq (C_0k)^k$
($k \geq 1$);
 \item[A3)] the covariances of the off-diagonal elements near $\tau = 0$ satisfy either
\[\tag{A3$_1$}
\begin{split}\mathbb{E} ^{(\tau_1)}\!{H}(i, j) ^{(\tau_2)}\!{H}(i, j) &= 1 -|\tau_1 - \tau_2| + o(|\tau_1|+|\tau_2|) \\
\mathbb{E} ^{(\tau_1)}\!{H}(i, j) ^{(\tau_2)}\!{H}(j, i)&= 1 - |\tau_1 - \tau_2| + o(|\tau_1|+|\tau_2|)
\end{split}\quad (i \neq j)\]
or
\[\tag{A3$_2$}\begin{split}\mathbb{E} ^{(\tau_1)}\!{H}(i, j) ^{(\tau_2)}\!{H}(i, j) &= 0 \\
\mathbb{E} ^{(\tau_1)}\!{H}(i, j) ^{(\tau_2)}\!{H}(j, i)&= 1 - |\tau_1 - \tau_2| + o(|\tau_1|+|\tau_2|)
\end{split}\quad (i \neq j)~,\]
where the $o$-term is uniform in $i,j \geq 1$.
\end{enumerate}
As customary in random matrix theory, the case
A3$_1$) (corresponding to the universality class of
orthogonal symmetry, $\beta=1$) and A3$_2$) 
(corresponding to the universality class of unitary
symmetry, $\beta=2$) are considered in parallel.
The setting is similar to that considered by Borodin \cite{B2}, see below. 

\vspace{2mm}\noindent To the infinite matrix $\ltau{H}$ we associate a nested sequence of finite submatrices
$\left(\ltau{H}^{(N)}\right)_{N \geq 1}$,
\[ \ltau{H}^{(N)} = \left( \ltau{H}(i, j) \right)_{1 \leq i,j \leq N}~. \]
The eigenvalues
\[ \ltau{\xi}^{(N)} = \left\{ \ltau{\xi}_1^{(N)} \geq \ltau{\xi}_2^{(N)} \geq \cdots \geq \ltau{\xi}_N^{(N)} \right\}\]
boast the interlacing property
\[ \ltau{\xi}_1^{(N+1)} \geq \ltau{\xi}_1^{(N)} \geq \ltau{\xi}_2^{(N+1)} \geq \ltau{\xi}_2^{(N)} \geq \ltau{\xi}_3^{(N+1)} \geq \cdots~, \]
thus, for every $-1 \leq \tau \leq 1$, $\ltau{\Xi} = \left( \ltau{\xi}^{(N)}_j \right)_{1 \leq j \leq N < \infty}$ is a triangular
array, called the Wigner corner process.

\paragraph{Example 1: Dyson Brownian motion} The most important example of a time-dependent random matrix is arguably the Dyson Brownian motion,
introduced  by Dyson \cite{D} (see also the book of Mehta \cite[Chapter 9]{Mehta}). One of the versions
is as follows. Let $\ltau{X}$, $\ltau{Y}$ be infinite matrices composed of independent copies
of the Ornstein--Uhlenbeck process,
\[ \mathbb{E} \ltau{X}(i, j) \ltaup{X}(i, j) = \mathbb{E} \ltau{Y}(i, j) \ltaup{Y}(i, j)  =
e^{-|\tau - \tau'|}~.\]
For $\beta = 1$, set $\ltau{H} = \left( \ltau{X} + \ltau{X}^T \right) / \sqrt{2}$. At a fixed time
$\tau$, the matrix $\ltau{H}$ has the distribution of the Gaussian Orthogonal Ensemble (GOE), 
i.e.\ the probability density of the submatrix $\ltau{H}^{(N)}$ with respect to the Lebesgue
measure on the space of $N \times N$ real symmetric matrices is equal to
\begin{equation}\label{eq:GOE}
\frac{1}{Z_1^{(N)}} \exp \left\{ - \frac{1}{2} \tr (H^{(N)})^2 \right\}~.
\end{equation}
For $\beta = 2$, set $\ltau{H} = \left( \ltau{X} + \ltau{X}^T  + i \ltau{Y} -i \ltau{Y}^T \right) / 2$. At
a fixed time $\tau$,  the matrix $\ltau{H}$ has the distribution of the Gaussian Unitary Ensemble 
(GUE), i.e.\ the probability density of the submatrix $\ltau{H}^{(N)}$ with respect to the Lebesgue
measure on the space of $N \times N$ Hermitian matrices is equal to
\begin{equation}\label{eq:GUE}
\frac{1}{Z_2^{(N)}} \exp \left\{ -\tr (H^{(N)})^2 \right\}~. 
\end{equation}
We denote the invariant ensembles constructed above $\ltau\invb{H}$, and refer to  
the books \cite{Mehta, Fbook, PShch} for properties, history, and references.
\paragraph{Example 2: Resampled Wigner matrix} The second
example is a resampled Wig\-ner matrix, constructed
as follows. Pick a random Hermitian matrix
$H = (H(i, j))_{i,j\geq 1}$ such that the entries $(H(i,j))_{i\leq j}$ are independent, and satisfy
\begin{enumerate}
\item[a1)] $H(i, j) \eqD - H(i, j)$;
\item[a2)] $\mathbb{E} |H(i,j)|^{2k} \leq (C_0 k)^k$, $k \geq 1$;
\item[a3)] for $i \neq j$,  $\mathbb{E} |H(i, j)|^2 = 1$, and 
$\mathbb{E} H(i, j)^2 = \begin{cases} 1~, &\beta = 1 \qquad \text{a3}_1)\\ 0~, &\beta = 2 \qquad \text{a3}_2)\end{cases}$.
\end{enumerate}
The time-dependent random matrix $\ltau{H}$ is obtained by resampling every entry of $H$ at
independent Poisson times (of intensity one).

\paragraph{Edge scaling} The edge scaling of the time-dependent triangular array  $\ltau{\Xi}$ is carried out as follows. For $M \geq 1$ (which will play the r\^ole of a
large parameter, and which we assume, for simplicity of notation, to be integer),
define  ${_M}\tau(s)$ and ${_M}N(t)$ ($s,t \in \mathbb{R}$) by the formul{\ae}
\[\begin{split}
{_M}\tau(s) &= s M^{-1/3}~, \\
{_M}N(t) &= M(1 + 2 t M^{-1/3})~.
  \end{split}\]
If ${_M}N(t)$ is a natural number, set
\begin{equation}\label{eq:scaling}
{_M}\lambda_j(s, t) = M^{1/6} \left( ^{(_M\tau(s))}\xi_j^{({_M}N(t))} - 2 \sqrt{{_M}N(t)} \right)~;
\end{equation}
then interpolate ${_M}\lambda_j(s, \bullet)$ as a piecewise linear function, and extrapolate $_M\lambda_j(\bullet,\bullet)$
as a constant.

So far we have constructed, for every $M \geq 1$, a random decreasing sequence ${_M}\Lambda$ of functions ${_M}\lambda_j(s, t)$ ($s,t \in \mathbb{R}$); $s$ is the rescaled time, whereas $t$ is the rescaled number of the corner.

\vspace{2mm}\noindent
The first theorem asserts that the scaling limit exists, and depends  on the choice of the matrix $\ltau{H}$
only via $\beta \in \{1, 2\}$. Both the statement and the proof may be viewed as extensions
of Soshnikov's universality theorem \cite{Sosh}, and are firmly based on the arguments 
developed there.

\begin{thm}\label{thm:sosh}
For $\beta \in \{1,2\}$, there exists a (non-trivial) random process
\[ \AD_\beta(s, t) = (\lambda_1(s, t) \geq \lambda_2(s, t) \geq \cdots) \quad (s, t \in \mathbb{R}) \]
such that, for any time-dependent random matrix $\ltau{H}$ satisfying A1), A2), and A3$_\beta$),
one has:
\begin{equation}\label{eq:conv} {_M}\Lambda(s, t) \overset{\mathrm{distr}}{\to} \AD_\beta(s, t) \quad (M \to \infty)\end{equation}
in the sense of finite-dimensional marginals.
\end{thm}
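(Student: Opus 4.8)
The plan is to establish \eqref{eq:conv} by the method of moments, in the tradition of Soshnikov~\cite{Sosh}: one encodes the rescaled edge configuration through traces of high even powers of the submatrices and reduces the theorem to the convergence of the joint moments of these traces. Fix a parameter $\theta>0$, put $k=k_M(\theta)=\lfloor\theta M^{2/3}\rfloor$, and — for those $M$ with $N={_M}N(t)$ integral — set
\[
Z_M(\theta;s,t)=\big(2\sqrt N\big)^{-2k}\,\tr\big(\,^{({_M}\tau(s))}\!H^{(N)}\big)^{2k}~.
\]
The scaling \eqref{eq:scaling} is calibrated so that, on the relevant range of $j$, one has $\big(\,^{({_M}\tau(s))}\xi_j^{(N)}/(2\sqrt N)\big)^{2k}=\exp\{\theta\,{_M}\lambda_j(s,t)+o(1)\}$; hence, after the routine separation of the lower spectral edge (asymptotically independent of the upper one, as in~\cite{Sosh}), $Z_M(\theta;s,t)$ is essentially the Laplace-type functional $\sum_j e^{\theta\lambda_j}$ of the configuration we wish to construct. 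A decreasing configuration on $\mathbb R$ bounded from above is determined by these functionals for all $\theta>0$, and, as in~\cite{Sosh}, the functionals have moments of every order that grow slowly enough to be determining; so it suffices to prove that for every finite family $(\theta_l,s_l,t_l)_{l=1}^L$ and every $(m_l)_{l=1}^L$ the joint moment $\mathbb E\prod_l Z_M(\theta_l;s_l,t_l)^{m_l}$ converges as $M\to\infty$ (along $M$ with all ${_M}N(t_l)$ integral) and that the limits are Kolmogorov-consistent. The piecewise-linear interpolation in $t$ and the constant extrapolation built into ${_M}\Lambda$ are then a soft last step.

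The next step is the combinatorial expansion. Writing $\mathbb E\prod_l\tr\big(\,^{(\tau_l)}\!H^{(N_l)}\big)^{2k_l}$ as a sum over $L$-tuples of closed walks — the $l$-th of length $2k_l$ on $\{1,\dots,N_l\}$, its steps carrying entries sampled at time $\tau_l$ — the expectation factorises over the distinct unordered pairs $\{i,j\}$ used; A1) forces every pair to occur an even number of times in total, and A2) supplies the a~priori bounds needed to sum the contributions. One groups the tuples by their combinatorial type and isolates the \emph{near-optimal} diagrams in the sense of Soshnikov — those maximising the number of free vertex labels — which in the single-matrix/single-time/single-size case already reproduce the moments of the $\beta$-Airy point process.

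The heart of the argument is to run Soshnikov's classification and estimation of near-optimal diagrams while carrying two additional weights. First, a pair $\{i,j\}$ traversed in both the $l$-th and the $l'$-th walk contributes the covariance $\mathbb E\,^{(\tau_l)}\!H(i,j)\,^{(\tau_{l'})}\!H(i,j)=1-|\tau_l-\tau_{l'}|+o(M^{-1/3})$ from A3$_\beta$) (with the matching $H(i,j)\leftrightarrow H(j,i)$ bookkeeping when $\beta=2$) in place of the variance $1$. Second, a vertex of the $l$-th walk ranges only over $\{1,\dots,N_l\}$ with $N_l={_M}N(t_l)$, so, through the nested structure of the corners, the free-label counts and the normalisations depend on the $t_l$. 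The scaling $\tau_l=s_lM^{-1/3}$, $k_l\asymp M^{2/3}$ is the correct one because — as the elementary identity $\mathbb E|\,^{(\tau_l)}\!H(i,j)-\,^{(\tau_{l'})}\!H(i,j)|^2=2|\tau_l-\tau_{l'}|+o$ indicates — the near-optimal cross-diagrams should link two walks through of order $M^{1/3}$ shared pairs, so that products $(1-|\tau_l-\tau_{l'}|)^{O(M^{1/3})}$, and likewise the ratios $N_l/N_{l'}$ raised to the pertinent powers, converge to non-degenerate limits; tracking these through the combinatorics yields explicit formulas for $\lim\mathbb E\prod_l Z_M(\theta_l;s_l,t_l)^{m_l}$. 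I expect the main obstacle to be precisely this bookkeeping: one must show that \emph{all} sub-leading diagrams — both those already present for a single matrix and the new ones produced by several times and sizes interacting — remain negligible \emph{uniformly} in the parameters, and that the surviving contributions assemble into a consistent family. As in~\cite{Sosh}, the $o(M^{-1/3})$ errors in A3$_\beta$) and the use of merely sub-Gaussian (rather than Gaussian) tails in A2) affect the answer only through such sub-leading diagrams, which is what gives the universality in~\eqref{eq:conv}: the limit depends on $\,^{(\tau)}\!H$ only via $\beta$.

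Finally, to exhibit $\AD_\beta$ and verify it is non-trivial, one computes the limiting moments on the Dyson Brownian motion model $\ltau\invb{H}$ of Example~1, for which at each fixed $(s,t)$ the edge limit is, by the classical Airy-kernel asymptotics for GOE/GUE, the $\beta$-Airy point process; since the limiting joint moments are model-independent, this pins down the one-point marginals of $\AD_\beta$ and, together with the moment formulas above, characterises the process. Reassembling the Laplace functionals then yields \eqref{eq:conv} in the sense of finite-dimensional marginals.
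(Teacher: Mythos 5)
Your proposal is correct in its essentials and follows the same moment-method architecture as the paper: mixed moments of traces of powers of the rescaled corners, combinatorial expansion over tuples of closed walks, diagram classification carrying the covariance weight $1-|\tau_l-\tau_{l'}|$ and the size-restriction weight from the nested corners, convergence of these moments to a family of limits that determine the process via Laplace functionals, with the moment-problem determinacy supplied by the Pfaffian/determinantal structure at a single time (as in Soshnikov, via Petersen). The scaling heuristic you identify — of order $M^{1/3}$ shared pairs, so that $(1-|\tau_l-\tau_{l'}|)^{O(M^{1/3})}$ and $(N_l/N_{l'})^{O(M^{1/3})}$ converge — is exactly what drives the $\ell_1$-distance dependence in the limit.

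The genuine difference is tactical, and it lies precisely at the step you flag as the ``main obstacle.'' You propose to extend Soshnikov's classification of near-optimal diagrams directly, tracking the extra weights through the original walk combinatorics. The paper instead makes a two-stage reduction. First (Lemma~\ref{c:sosh}, Section~\ref{s:sosh}), it shows via H\"older and an isomorphism-class argument that the estimates of \cite{Sosh} (proved for equal times and sizes) transfer to the regime of nearby times and sizes, and hence — up to negligible terms — one may replace the general Wigner matrix by one with unimodular entries (\ref{eq:unimod'}). Second, for unimodular entries the Chebyshev identity (\ref{eq:cheb}) reduces the trace of $P_n^{(N)}$ to a sum over \emph{non-backtracking, loop-free} closed paths, and the path counting becomes the clean $k$-diagram formalism of \cite{FS} (Lemma~\ref{l:sprop1}); the Snyder identities (\ref{eq:snyder}) and (\ref{eq:geom}) then pass back from Chebyshev to ordinary moments. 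The paper explicitly remarks that a direct extension of the \cite{FS} combinatorics to the general A1)--A3) case is feasible but that the unimodular reduction ``seems to be simpler'' — so your route is viable in principle, but it re-derives the hard uniformity estimates from scratch rather than importing them from \cite{Sosh}. One further difference: you propose to pin down $\AD_\beta$ by matching moments with Dyson Brownian motion, whereas the paper constructs the limit intrinsically, via the explicit $k$-diagram integrals $\psi_\beta^\#$ of (\ref{eq:defpsi1sh})--(\ref{eq:defpsi2sh}); the paper's route gives a concrete description of the limit for both $\beta=1$ and $\beta=2$ simultaneously (important for $\beta=1$, where no alternative description is available), while the comparison route merely identifies the one-point marginals and would require a separate argument for the joint distribution.

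One small inaccuracy worth correcting: the heuristic identity you write, $\mathbb E\bigl|\,^{(\tau_l)}\!H(i,j)-\,^{(\tau_{l'})}\!H(i,j)\bigr|^2=2|\tau_l-\tau_{l'}|+o$, uses the Hermitian conjugate covariance $\mathbb E\,^{(\tau_l)}\!H(i,j)\,^{(\tau_{l'})}\!H(j,i)$, which in A3$_2$) is the one normalised to $1-|\tau_l-\tau_{l'}|$ (the ``straight'' covariance $\mathbb E\,^{(\tau_l)}\!H(i,j)\,^{(\tau_{l'})}\!H(i,j)$ vanishes for $\beta=2$); your bookkeeping comment indicates you are aware of this, but the display as written blurs the distinction that ultimately produces the orientable-diagram restriction distinguishing $\psi_2^\#$ from $\psi_1^\#$.
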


In Section~\ref{s:prop}, we give a characterisation of the process $\AD_\beta(s, t)$ (which is not very compact
but in principle suited, for example, for the study of asymptotics). For
now, we summarise some  basic features.
\begin{thm}\label{thm:prop}
Fix $\beta \in \{1, 2\}$. The random process $\AD_\beta(s, t)$ boasts the following properties:
\begin{enumerate}
 \item for any $k \geq 1$, the joint distribution of $(\AD_\beta(s_p, t_p))_{1 \leq p\leq k}$ depends only
on the $\ell_1$ distances $|s_p - s_r| + |t_p - t_r|$ ($1 \leq p < r \leq k$);
 \item $\AD_\beta(s, t)$ has a modification in which every $\lambda_j(s, t)$ is a continuous 
function of $(s, t) \in \mathbb{R}^2$.
\end{enumerate}
\end{thm}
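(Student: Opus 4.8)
The plan is to establish the two assertions separately, using Theorem~\ref{thm:sosh} to reduce statements about $\AD_\beta$ to statements about its (model-independent) pre-limits and about the combinatorial expansion that produces them.

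\textbf{(1) Dependence on $\ell_1$ distances.} Since $\AD_\beta$ is the edge limit of \emph{every} ensemble satisfying A1)--A3$_\beta$), any invariance of $\AD_\beta$ can be read off either from a symmetry of a conveniently chosen pre-limit family or from the path expansion underlying Theorem~\ref{thm:sosh}. Invariance under translation and reflection of the rescaled time $s$ is of the first, soft, kind: take $\ltau{H}$ to be the Ornstein--Uhlenbeck Dyson motion (Example~1), which is stationary and reversible in $\tau$ and --- since $e^{-|\tau_1-\tau_2|}=1-|\tau_1-\tau_2|+o(|\tau_1|+|\tau_2|)$ --- satisfies A3$_\beta$) around every $\tau_0$; for it the matrices at the times $\{\tau_0\pm s_pM^{-1/3}\}_p$ have the same joint distribution as at $\{s_pM^{-1/3}\}_p$, so $({_M}\Lambda(s_p+s_0,t_p))_p\eqD({_M}\Lambda(-s_p,t_p))_p\eqD({_M}\Lambda(s_p,t_p))_p$ for every $M$, and $M\to\infty$ transfers these identities to $\AD_\beta$. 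What does \emph{not} follow from a symmetry of a single ensemble --- and is, to my mind, the crux of (1) --- is the rest: that a shift of the corner variable $t$ is a symmetry, and, more strikingly, that a unit of $s$-separation can be traded for a unit of $t$-separation, so that only the combination $|s_p-s_r|+|t_p-t_r|$ survives.

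\textbf{(1), continued.} To obtain this I would revisit the proof of Theorem~\ref{thm:sosh}, where the finite-dimensional marginals of ${_M}\Lambda$ are computed by expanding joint moments of $\tr\big[\big({}^{(\tau_p)}H^{(N_p)}\big)^{2s_p}\big]$ over admissible families of closed paths and extracting the leading term. Two sheets $p\ne r$ interact only through edges their paths have in common; each such edge carries a covariance factor of the form $1-|\tau_p-\tau_r|+o(|\tau_p|+|\tau_r|)$ \emph{and} is constrained to lie among the first $\min(N_p,N_r)$ indices --- a constraint which, on the dominant tree-like paths (of length $\sim M^{1/3}$ drawn from $\sim M$ vertices) and accumulated over the interaction sites, produces a further penalty governed by $M^{-2/3}|N_p-N_r|$. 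Since $M^{1/3}|\tau_p-\tau_r|=|s_p-s_r|$ and $\tfrac12 M^{-2/3}|N_p-N_r|=|t_p-t_r|$, in the limit the two mechanisms merge into a single function of $|s_p-s_r|+|t_p-t_r|$; the work is to re-run the bookkeeping of \cite{Sosh} (adapted as in the proof of Theorem~\ref{thm:sosh}) while tracking the dimension constraint, and to verify that no surviving term distinguishes the time shift from the dimension shift beyond this $\ell_1$ combination. The same conclusion should be transparent from the explicit characterisation of $\AD_\beta$ in Section~\ref{s:prop}, from which (1) can be read off; I would use whichever route turns out shorter.

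\textbf{(2) Continuous modification.} I would apply the two-dimensional Kolmogorov--Chentsov criterion. Fixing $j$, the key estimate is an increment bound, uniform in large $M$ and over a bounded box:
\[
  \mathbb{E}\,\big|{_M}\lambda_j(s,t)-{_M}\lambda_j(s',t')\big|^{2m}\;\le\;C_{m,j}\,\big(|s-s'|+|t-t'|\big)^{m}\qquad(m\ge 1);
\]
this expresses that each $\lambda_j$ is locally Brownian in both variables. It too is a consequence of the trace method: the difference of the two eigenvalue functionals is controlled by the difference of the corresponding traces of high powers, which the covariance/dimension bookkeeping above bounds by the stated order when the sheets are close. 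Passing to the $\liminf$ (justified by Fatou together with a Skorokhod coupling of the marginals $({_M}\lambda_j(s,t),{_M}\lambda_j(s',t'))\to(\lambda_j(s,t),\lambda_j(s',t'))$) gives $\mathbb{E}\,|\lambda_j(s,t)-\lambda_j(s',t')|^{2m}\le C_{m,j}(|s-s'|+|t-t'|)^{m}$; choosing $m\ge 3>2=\dim$ produces, for each $j$, a modification $\widetilde\lambda_j$ that is locally H\"older of any exponent $<\tfrac12$ on $\mathbb{R}^2$. Finally, since $\widetilde\lambda_j(s,t)=\lambda_j(s,t)\ge\lambda_{j+1}(s,t)=\widetilde\lambda_{j+1}(s,t)$ for all $(s,t)$ in a fixed countable dense set almost surely, continuity propagates the ordering $\widetilde\lambda_1\ge\widetilde\lambda_2\ge\cdots$ to every $(s,t)$, so $(\widetilde\lambda_j)_{j\ge1}$ is a modification by a decreasing sequence of continuous functions. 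The main obstacle in (2) is the uniform increment bound of the trace method; as a by-product it yields equicontinuity of the family $\{{_M}\Lambda\}$, which would make a purely soft ``change of the large parameter $M\mapsto M(1+2t_0M^{-1/3})$'' proof of $t$-translation invariance in (1) rigorous, should one prefer that route.
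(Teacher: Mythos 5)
Your treatment of item (1) is essentially the paper's: the limiting transforms are computed by the path/diagram expansion, and the two interaction mechanisms you identify --- the covariance factor $1-|\tau_p-\tau_r|$ per shared edge and the constraint that shared vertices lie in $\{1,\dots,N_p\wedge N_r\}$ --- are exactly the factors $C(\bar\ell)$ and $B(\bar\ell)$ in the proof of Lemma~\ref{l:sprop1}, which in the scaling limit combine into $\exp\{-\sum_e[|s_{p_+(e)}-s_{p_-(e)}|+|t_{p_+(e)}-t_{p_-(e)}|]w(e)\}$. The soft OU symmetry argument for the $s$-marginal is correct but redundant once this formula is in hand. So (1) is fine, modulo actually carrying out the bookkeeping you defer to.

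Item (2) has a genuine gap at its central step. You claim the increment bound
$\mathbb{E}\,|{_M}\lambda_j(s,t)-{_M}\lambda_j(s',t')|^{2m}\le C(|s-s'|+|t-t'|)^m$ for an \emph{individual} eigenvalue ``as a consequence of the trace method''. But the trace/moment method only gives access to linear statistics: differences of traces control increments of $\sum_j e^{\alpha\lambda_j(\cdot)}$, not of a single $\lambda_j$, and there is no direct way to extract the $j$-th summand from such a sum with quantitative moment bounds. The paper circumvents this by applying the Kolmogorov criterion to the Laplace transforms $\varphi(\alpha;\t)=\sum_j\exp(\alpha\lambda_j(\t))$ (Lemma~\ref{l:mod} with $\eta=6$, fed by the trace-increment estimates of Lemmata~\ref{l:cont-s} and \ref{l:cont-t}), obtaining a modification in which every transform is continuous, and then invoking the purely deterministic Lemma~\ref{l:5}: if $\sum_j e^{\alpha\lambda_j(\t)}$ is continuous in $\t$ for all rational $\alpha>1$ and locally bounded for $\alpha=1$, then each $\lambda_j$ is continuous (via the compactness of Corollary~\ref{c:4} and the injectivity of Lemma~\ref{l:2}). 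You need either this detour or a genuinely new argument extracting individual-eigenvalue increments from traces. A secondary issue: your bound is asserted ``uniformly in large $M$'', but the remark after Lemma~\ref{l:cont-s} points out that for the resampled Wigner matrix no such pre-limit estimate can hold (it would force a continuous modification before the limit, which does not exist); the correct pre-limit statement necessarily carries an additive $o(1)$ term that does not vanish with $|\t-\t_1|$, and one must pass to the limit (as your Fatou step does, or as the paper does by transferring (\ref{eq:condmod}) to the limit process) before invoking Kolmogorov.
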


From the first statement of Theorem~\ref{thm:prop}, the distribution of the restriction $\AD_\beta(\t(a))$ of $\AD_\beta$
to an $\ell_1$ geodesic
\[ \t: \mathbb{R} \to \mathbb{R}^2~, \quad
\| \t(a) - \t(a_1) \|_1 = |a - a_1| \quad (a,a_1 \in \mathbb{R}) 
\]
does not depend on the choice of the geodesic. For $\beta=2$, this distribution is given by the Airy$_2$
time-dependent point process, see Ferrari \cite[Section~4.4]{Fer} and the discussion below.

\vspace{1mm}\noindent
In the next statement, we upgrade Theorem~\ref{thm:sosh} to a functional limit theorem. The
most general result would state that (\ref{eq:conv}) holds in an appropriate Skorokhod space of c\`adl\`ag functions (see Billingsley \cite{Bil}).
Such a statement could probably be proved using the methods of the current paper. We favour simplicity and prove the following result, pertaining to corners of one random matrix.
\begin{thm}\label{thm:conv}
Let $\beta = 1$ or $2$, and let ${H}$ be an infinite random matrix satisfying
a1), a2), and a3$_\beta$), such that moreover
\begin{equation}\label{eq:unimod}
|{H}_{i,j}| = 1 \quad (i \neq j)~, \quad {H}(i, i) = 0 \quad (-1 \leq \tau \leq 1, \, 1 \leq i, j)~.
\end{equation}
Then
\[ {_M}\Lambda(t) \overset{\mathrm{distr}}{\to} \AD_\beta(0, t)  \quad (M \to \infty)\]
as random variables taking values in the space of decreasing sequences of continuous functions
of $t \in \mathbb{R}$; in other words,
\[ {_M}\lambda_j(t) \overset{\mathrm{distr}}{\to} \lambda_j(0, t) \quad (M \to \infty)\]
as random continuous functions (for any $j \geq 1$).
\end{thm}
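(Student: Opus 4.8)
The plan is to upgrade the finite-dimensional convergence of Theorem~\ref{thm:sosh} (specialised to $s=0$, which is legitimate since the hypothesis \eqref{eq:unimod} guarantees A1)--A2)--A3$_\beta$)) to convergence of random continuous functions of $t$ by establishing tightness in $C(\mathbb{R})$. By the standard criterion (Billingsley \cite{Bil}), for each fixed $j$ it suffices to control, locally uniformly in $t$, a modulus-of-continuity estimate of the form
\begin{equation}\label{eq:tight-goal}
\mathbb{E} \left| {_M}\lambda_j(t) - {_M}\lambda_j(t') \right|^{2\ell} \leq C_{\ell,j,L} \, |t - t'|^{1 + \delta} \quad (|t|, |t'| \leq L)
\end{equation}
for some $\ell \geq 1$, $\delta > 0$, uniformly in $M$; together with tightness of ${_M}\lambda_j(0)$ (which follows from Theorem~\ref{thm:sosh}) and the interlacing bound ${_M}\lambda_1 \geq {_M}\lambda_j$ controlling all coordinates at once, this gives tightness of ${_M}\Lambda$ in the space of decreasing sequences of continuous functions. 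Since the finite-dimensional limits are already identified as $\AD_\beta(0, \cdot)$, the theorem follows.

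To prove \eqref{eq:tight-goal} I would not work with individual eigenvalues directly — differences of order statistics are awkward — but instead with linear statistics and the interlacing structure, exactly in the spirit of Soshnikov's method as used for Theorem~\ref{thm:sosh}. Between $N = {_M}N(t)$ and $N' = {_M}N(t')$, the corner process changes by $|N - N'| = O(M^{2/3} |t-t'|)$ rows/columns being added (at fixed $\tau = 0$, so there is no resampling in this direction — the randomness is that of the single matrix $H$). The relevant moment input is the trace estimate underlying Theorem~\ref{thm:sosh}: for $k \sim M^{2/3}$,
\[
\mathbb{E} \, \tr \left( \frac{H^{(N)}}{2\sqrt{N}} \right)^{2k}
\]
is computed via the path-counting expansion, and the analogous \emph{difference} $\mathbb{E}\,\tr (H^{(N)}/2\sqrt{N})^{2k} - \mathbb{E}\,\tr(H^{(N')}/2\sqrt{N'})^{2k}$ should be shown to be smaller by a factor proportional to a power of $|N-N'|/N \sim |t - t'| M^{-1/3}$. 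Concretely: the paths contributing to the difference must visit at least one of the $|N - N'|$ extra indices, which in Soshnikov's combinatorial weighting costs a factor of that order; raising to a suitable power $\ell$ of the linear statistic and using the interlacing inequalities $\lambda_j^{(N')} \le \lambda_j^{(N)} \le \lambda_{j-1}^{(N')}$ (valid for $N' < N$) to squeeze $\lambda_j$ between two quantities that differ by a controlled amount converts the trace bound into \eqref{eq:tight-goal}. The hypothesis \eqref{eq:unimod} that $|H_{i,j}| = 1$ and $H(i,i) = 0$ is what makes the combinatorics clean — it removes the on-diagonal contributions and fixes the moments of the off-diagonal entries to their extreme values, so the path count is the same as in the underlying estimates for Theorem~\ref{thm:sosh} without error terms.

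The main obstacle I anticipate is precisely this quantitative refinement of the Soshnikov path-counting bound: one needs not just that $\mathbb{E}\,\tr(H^{(N)}/2\sqrt N)^{2k}$ has the right size, but that the contribution of paths touching a prescribed small set of $|N-N'|$ indices is suppressed by the right power of $|N-N'|/N$, \emph{uniformly} for $k$ up to order $M^{2/3}$ and $t$ in a compact set. This requires revisiting the combinatorial classification of the dominant ("Soshnikov") paths and the subleading corrections, and tracking how a marked vertex constrains the equivalence classes of paths — a bookkeeping-heavy but, I believe, routine-in-principle adaptation of the machinery already in place for Theorem~\ref{thm:sosh}. A secondary, more technical point is deducing an honest modulus-of-continuity bound for the $j$-th eigenvalue from trace (power-sum) bounds: this goes through the observation that $\lambda_j^{(N)} \le x$ iff $\#\{i : \xi_i^{(N)} > x\} < j$, so a high moment of $\sum_i \mathbbm{1}[\xi_i > x]$ over the relevant window controls the fluctuation of $\lambda_j$; combined with the interlacing squeeze this yields \eqref{eq:tight-goal}. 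Once tightness is in hand, Prokhorov's theorem plus the already-known finite-dimensional limits close the argument.
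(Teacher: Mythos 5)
Your combinatorial core is exactly the paper's: the difference of traces between corner sizes $N_1,N_2$ is expanded over paths that must visit at least one ``special'' index in $\{N_1\wedge N_2+1,\dots,N_1\vee N_2\}$, and each such marked vertex costs a factor $O(n|N_1-N_2|/M)$; this is precisely Lemma~\ref{l:cont-t}, whose right-hand side has no additive $o(1)$ term --- the feature the paper explicitly relies on. Where you diverge is in how tightness is formulated, and that is where your argument has a real gap. You aim at a Kolmogorov bound $\mathbb{E}|{_M}\lambda_j(t)-{_M}\lambda_j(t')|^{2\ell}\le C|t-t'|^{1+\delta}$ for the individual eigenvalue, to be extracted from trace bounds via counting functions and interlacing. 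The counting-function route gives tail probabilities for $\#\{i:\xi_i>x\}$, not an $L^{2\ell}$ modulus bound for an order statistic, and the interlacing squeeze $\xi^{(N)}_{j+\Delta}\le\xi^{(N')}_j\le\xi^{(N)}_j$ involves an index shift $\Delta=|N-N'|\sim M^{2/3}|t-t'|$, which for $|t-t'|$ of constant order pushes $\lambda_{j+\Delta}$ far from the edge (to rescaled position $\sim -M^{4/9}|t-t'|^{2/3}$), so the squeeze degenerates exactly where you need it. The paper avoids this conversion entirely: it proves the Kolmogorov-type increment bound for the Laplace-transform processes $\varphi(\alpha;t)=\sum_j e^{\alpha\,{_M}\lambda_j(t)}$ (Lemma~\ref{l:prop}, with $d=1$, $\eta=4$, $\varepsilon=1$, fed by Lemma~\ref{l:cont-t} with $k=2$), and then transfers tightness to the individual $\lambda_j$ by the soft compactness results of Section~\ref{s:top} (Corollary~\ref{c:4}, Lemma~\ref{l:5}, Corollary~\ref{c:6}): uniformly bounded and equicontinuous Laplace transforms force precompactness in $P[\T]$ and hence continuity of each coordinate. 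If you adopt that formulation, your ``secondary, more technical point'' disappears.

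A second issue you do not address: even at the level of linear statistics, the passage from moments of traces to Laplace transforms of the rescaled eigenvalues (the analogue of (\ref{eq:req})) carries an additive $o(1)$ error which is \emph{not} small compared with $|t-t'|^{1+\varepsilon}$ when $|t-t'|$ is small at fixed $M$. The paper must, and does, argue separately that for eigenvalues within $M^{-1/6+\epsilon}$ of the edge the approximation error can be made multiplicative (by differentiating the approximation of $e^{\alpha\lambda}$ by a power of $1+\lambda/(2N^{2/3})$), and that the bulk and tail contributions, being exponentially small, can be absorbed into $|t_1-t_2|^2$ because only $|t_1-t_2|\gtrsim M^{-2/3}$ is relevant. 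Any complete write-up along your lines would need this step as well.
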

The unimodularity condition (\ref{eq:unimod}) is assumed to simplify the proof (see (\ref{eq:cheb}) below). In \cite[Part~III]{FS}, a (somewhat technical) argument allowing
to drop the assumption (\ref{eq:unimod}) was 
presented. Another argument allowing to
work with distributions violating (\ref{eq:unimod})
was developed by Erd\H{o}s and Knowles \cite{EK}
in the more general setting of random band matrices. Either of
these approaches could
probably be adapted to the current setting as well.

\vspace{2mm}\noindent
Finally, we mention two other possible extensions of
Theorems~\ref{thm:sosh}--\ref{thm:conv}. First,
in addition to $\beta = 1,2$, one may consider the universality class of symplectic
symmetry, $\beta = 4$. Second, instead of the corners
$H^{(N)}$, one may consider general sequences of
nested submatrices as in the work of Borodin 
\cite{B1,B2}. We expect  the methods of the
current paper to be adequate for these extensions. 

\subsection{Previous results} Let us place our results in context, and state some remarks and questions.

\paragraph{Fixed $s, t$, Gaussian invariant ensembles} For fixed $s, t$ (e.g.\ $s = t = 0$), the point processes
\[ _M\!\!\invb{\Lambda}(0, 0) = ( {_M\!\!\invb{\lambda}\!\!_1} (0, 0) \geq {_M\!\!\invb{\lambda}\!\!_2} (0, 0) \geq \cdots )\quad
(\beta=1,2) \]
associated to GOE (\ref{eq:GOE}) and GUE (\ref{eq:GUE}) were studied by Tracy and Widom \cite{TW1,TW2} and by Forrester \cite{F} (along with $\beta=4$ not discussed in this paper); they showed
that the sequence $(_M\!\!\invb{\Lambda}(0, 0))_M$ has a distributional limit,  
called the Airy$_\beta$ point
process. For $\beta = 2$, the Airy$_\beta$ point process is a determinantal
point process on $\mathbb{R}$ (see Johansson \cite{Joh} for definitions) with kernel
\[ A(\lambda_1, \lambda_2) = \int_0^\infty \Ai(\lambda_1 + u) \Ai(\lambda_2 + u) du~. \]
For $\beta = 1$, the Airy$_\beta$ point process is a Pfaffian point process, see \cite{TW2}
and the book of Mehta \cite{Mehta}.

The distribution of the largest eigenvalue of the Airy$_\beta$ point process (for $\beta=1,2,4$) is named the Tracy--Widom distribution
in honour of Tracy and Widom, who expressed it in terms of the Hastings--McLeod
solution to the Painlev\'e-II equation.

\paragraph{Fixed $s,t$, general Wigner matrices} In \cite{Sosh}, Soshnikov proved a general limit
theorem (universality theorem), stating that  the results of \cite{TW1,TW2,F} quoted above
are valid for any random Hermitian matrix $H$ satisfying a1)--a3). Namely, if $H$ satisfies a1), a2), and a3$_1$), then the point
process obtained by edge scaling converges to the Airy$_1$ point process, whereas if $H$ 
satisfies a1), a2), and a3$_2$), the limit is the Airy$_2$ point process. We refer
for example to the book of Pastur and Shcherbina \cite[Section~1.3]{PShch} for
a discussion of universality in random matrix theory.

The proof of Soshnikov is based on the moment method, see further Section~\ref{s:sosh}. An additional proof,
using a modification of the moment method involving Chebyshev polynomials,  was given in \cite{FS}; see further Section~\ref{s:prop}.

The assumptions a1) and a2) were gradually  relaxed by Ruzmaikina \cite{R}, Khorunzhiy \cite{Kh},
and finally Lee and Yin \cite{LY}, who showed that, in the case when the off-diagonal entries are identically 
distributed, have zero mean, and satisfy a3$_\beta$), the necessary and sufficient condition for convergence to the 
Airy$_\beta$ point process 
is given by
\[ \mathbb{P} \left\{ |H(1, 2)| \geq R \right\} = o(R^{-4})~, \quad R \to +\infty~.\]

It may be interesting to extend Theorem~\ref{thm:sosh} to the generality of the results of Lee and Yin \cite{LY}.

\vspace{3mm}\noindent For varying $s,t$, we are mainly aware of previous work pertaining to the invariant ensembles
$\ltau\invb{H}$ of Example~1.

\paragraph{Time-dependent matrix (fixed $t$, varying $s$) with unitary symmetry ($\beta=2$)} The process $_M\!\!\invu\Lambda(s,0)$ was
studied by Mac\^edo \cite{Mac} and by Forrester, Nagao,
and Honner \cite{FNH}; see also the book of Forrester \cite[7.1.5]{Fbook}. They introduced
the extended Airy kernel\footnote{in a slightly different form; we follow the
conventions of Johansson \cite{Joh}}
\begin{equation}\label{eq:airyker}\begin{split}
&A(s_1, \lambda_1; s_1, \lambda_2)\\
&\qquad=
\begin{cases}
\int_0^\infty e^{-u(s_1 - s_2)}\Ai(\lambda_1 + u)\Ai(\lambda_2 + u)
du, &s_1 \geq s_2\\
-\int_{-\infty}^0 e^{-u(s_1 - s_2)}\Ai(\lambda_1+u)\Ai(\lambda_2+u)
du,  &s_1 < s_2,
\end{cases}\end{split}\end{equation}
and showed that the finite-dimensional marginals of $_M\!\!\invu\Lambda(s,0)$ converge (as $M \to \infty$)
to those of the Airy$_2$ time-dependent point process, a determinantal process with kernel given by (\ref{eq:airyker}).

Comparing the result of \cite{Mac,FNH} with Theorem~\ref{thm:sosh}, we see that $\AD_2(s, 0)$
has the distribution of the Airy$_2$ time-dependent
point process. Theorem~\ref{thm:sosh} shows
that the results of \cite{Mac,FNH} remain valid in
the general setting of time-dependent Wigner
matrices.

\vspace{2mm}\noindent
The Airy$_2$ time-dependent point process was further studied by Pr\"ahofer and Spohn \cite{PS}, who
identified it as a limit process for a certain random growth model (polynuclear growth). It
is conjectured to be a universal limit for a wide class of random growth models in
the Kardar--Parisi--Zhang universality class. For several models, the convergence
to the Airy$_2$ process was rigorously established; see especially Johansson \cite{JohPNG,Joh}.
Subsequent developments, too numerous to be discussed here, are surveyed by Borodin--Gorin
in \cite{BGsurv} and by Corwin in \cite{Cor}.

Pr\"ahofer and Spohn \cite{PS} also proved that the Airy$_2$ time-dependent point process
has a modification with continuous paths; an additional proof was given by
Johansson \cite{JohPNG} (note that, although the results of \cite{PS,JohPNG} are stated for
the right-most point of the process, the proofs apply to a point with any fixed index). A detailed
study of the Airy$_2$ time-dependent point process was performed by Corwin and 
Hammond \cite{CH}, who coined the term ``Airy line ensemble'' and established, among
other results, local absolute continuity with respect to the Brownian motion, which also implies
the existence of a continuous modification. The second part of Theorem~\ref{thm:prop} can be seen
as an extension of the continuous modification to the two-parameter process $\AD_2(s,t)$, and also
as a generalisation to the case $\beta=1$.

\paragraph{Corners of a fixed matrix (fixed $s$, varying $t$), unitary symmetry ($\beta=2$)}
For fixed $\tau$ (say, $\tau = 0$), the process $^{(0)}\invu\Xi$  is known
as the GUE corner process. Its distribution can be explicitly written down,
see Gelfand and Naimark \cite{GN}, Baryshnikov \cite{Bar},
and Neretin \cite{N}. Johansson and Nordenstam \cite{JN} and Okounkov and
Reshetikhin \cite{OR} showed that the GUE corner process is determinantal, and
derived explicit expressions for the kernel. We refer to 
\cite{JN,OR}, and the  work of Gorin \cite{G}
(and references therein) for a discussion of the GUE
corner process as a universal  scaling limit for
random tilings where the disordered region touches
the boundary.

The edge scaling $_M\invu\Lambda(0, t)$ was studied by Forrester and Nagao \cite{FN}
(see also the book of Forrester \cite[Section~11.7.2]{Fbook}), who proved the convergence
to the Airy$_2$ time-dependent point process in the sense of finite-dimensional
marginals.

Theorem~\ref{thm:sosh} shows
that the results of \cite{FN} remain valid in
the general setting of Wigner
corners.

\paragraph{Time-dependent minors (varying $s,t$), unitary symmetry ($\beta = 2$)} The time-dependent process $^{(\tau)}\invu\Xi$ was studied by Adler, Nordenstam,
and van Moerbeke \cite{ANvM} and by Ferrari and Frings \cite{FF}, who showed that the process is determinantal along space-like
paths (in the terminology of Borodin and Ferrari \cite{BF}, i.e.\
$\tau_1 \leq \cdots \leq \tau_k$ and $N_1 \geq \cdots \geq N_k$). In \cite[Section~4.4]{Fer}, Ferrari
showed that the edge scaling limit along space-like paths is the Airy$_2$ time-dependent
point process (note that a space-like path lies on an $\ell_1$ geodesic,
thus this result is consistent with the first part of Theorem~\ref{thm:prop}).

It would be interesting to derive an explicit connection between the extended Airy kernel
(\ref{eq:airyker}) and the formul{\ae} (\ref{eq:defpsi1sh}), (\ref{eq:defpsi2sh}), (\ref{eq:defpsibeta}),
(\ref{eq:sh}) of Section~\ref{s:prop}.

\paragraph{Orthogonal symmetry ($\beta =1$)} Much less is known for $\beta = 1$, even for $\ltau\invo{H}$. It was conjectured
that the distribution of the right-most point of the process $_M\!\!\invo\Lambda(s, 0)$ converges to the Airy$_1$  process introduced by Sasamoto \cite{Sas}. However, the
numerical evidence of Bornemann, Ferrari, and Pr\"ahofer \cite{BFP} strongly suggests
that this conjecture is false. See further Ferrari \cite{Fer} for a discussion.

To the best of our knowledge, the only explicit description of the scaling limit of  
$_M\!\!\invo\Lambda(s, 0)$ which is currently available is the one of Section~\ref{s:prop}.

\paragraph{Global regime: fluctuations of linear statistics} 
In the  works \cite{B1,B2}, Borodin studied the eigenvalue
statistics of time-dependent Hermitian random matrices (with $\beta$ being either
$1$ or $2$), who showed that the global fluctuations of $\ltau\Xi$ (and of eigenvalues
of more general submatrices of $\ltau{H}$) are given by a certain family of
correlated  Gaussian free fields. The results of \cite{B1,B2} do not require symmetric distribution
A1) nor subgaussian tails A2); the distribution of the entries is assumed to have finite moments,
and vanishing fourth cumulant. Kargin \cite{Kar} extended the results of  \cite{B1} to the case
when the fourth cumulant does not vanish.

\paragraph{Extension to $\beta \neq 1,2$} Borodin and Gorin \cite{BG} constructed
an extension of the GUE and GOE corner processes (in fact, of general Jacobi corner processes) to
 general $\beta > 0$, and
proved a limit theorem for fluctuations of
eigenvalue statistics in this general setting. 

We also mention two other possible approaches to the extension of time-dependent Wigner corner
processes to arbitrary $\beta$: the ghosts and shadows formalism of Edelman \cite{E}, which
is yet to find a rigorous justification, and the matrix model of $\beta$-Dyson Brownian motion put
forth by Allez, Bouchaud,
and Guionnet \cite{ABG,AG}. 

\subsection{Plan of the following sections}
The proofs of the results are based on the moment method. Following Soshnikov \cite{Sosh},
we consider the mixed moments
\begin{equation}\label{eq:mixedmoments}
\mathcal{M}(m_1, \cdots, m_k; \tau_1, \cdots, \tau_k; N_1, \cdots, N_k)
= \mathbb{E} \prod_{p=1}^k \tr \left( \frac{^{(\tau_p)}{H}^{(N_p)}}{2 \sqrt{N_p}}\right)^{m_p}~.
\end{equation}
In the asymptotic regime
\[ m_p \sim 2 \alpha_p M^{2/3}~,\quad \tau_p \sim s_p M^{-1/3}~,
\quad N_p - M \sim 2 \tau_p M^{-1/3}~, \]
(\ref{eq:mixedmoments}) is an approximation to the Laplace transform
\[ \mathbb{E} \prod_{p=1}^k \sum_j \left\{  \exp(\alpha_p \, _M\!\lambda_j(s_p, t_p)) + 
(-1)^{m_p} \exp(\alpha_p \, _M\!\lambda_j^\sim(s_p, t_p))\right\}~, \] 
where $_M\!\lambda_j^\sim$ are constructed (\ref{eq:scaling}) from the eigenvalues of
the matrix $-H$. In Section~\ref{s:sosh}, we
show that the asymptotics of (\ref{eq:mixedmoments}) does not depend on the choice of a matrix
$H$ satisfying A1)--A3). This statement is derived from the estimates of \cite{Sosh}.

In Section~\ref{s:prop}, we focus on the special case of  matrices with unimodular entries (\ref{eq:unimod}).
In this case, the combinatorial approach of \cite{FS}, based on a modification of the moment method
involving Chebyshev polynomials, takes a particularly simple form; we apply
it to find the asymptotics of (\ref{eq:mixedmoments}). In principle, the arguments of \cite{FS}
could be applied directly to the general case A1)--A3); however, the reduction to (\ref{eq:unimod})
using the estimates available from \cite{Sosh} seems to be simpler.

An additional set of combinatorial estimates required for the proof of the second half of Theorem~\ref{thm:prop} and of Theorem~\ref{thm:conv} is established in Section~\ref{s:conv}.

In Section~\ref{s:pf}, Theorems~\ref{thm:sosh}, \ref{thm:prop}, and \ref{thm:conv} are 
derived from the results of Sections~\ref{s:sosh}, \ref{s:prop}, and \ref{s:conv}. 

Section~\ref{s:pf} is preceded by Section~\ref{s:top}, collecting preliminaries pertaining to 
parameter-dependent point processes. The content of Section~\ref{s:pf} is  
well-known to experts (in particular, some of the arguments do not differ much from those developed
in \cite{Sosh} for a single point process), and is recounted  for convenience of reference.

\paragraph{Terminology}
In the table below, we summarise some of the common terminology and compare it with our notation. The letters $\AD$ stand for G.~B.~Airy and F.~J.~Dyson.
\FloatBarrier
\begin{table}
 \begin{tabular}{lll}
 {\bf Our notation} & \parbox[t]{2in}{\bf the process \\ \{and its modification\}} & \parbox[t]{1in}{\bf the right\-most point} \\ \\
 \hline \hline
 $\AD_2(0,0)$ & Airy$_2$ point process & Tracy-Widom$_2$ \\ \hline
 $\AD_2(s,0) \eqD \AD_2(0, s)$ & \parbox[t]{2in}{time-dependent Airy$_2$ \\ point process = multiline \\ Airy$_2$ process \\ \{Airy$_2$ line ensemble\}} & Airy$_2$ process \\ \hline
 $\AD_1(0, 0)$ & Airy$_1$ point process & Tracy-Widom$_1$ \\ \hline
 $\AD_1(s,0) \eqD \AD_1(0, s)$ & \parbox[t]{2in}{?} & \parbox[t]{1.2in}{? (not the Airy$_1$\\ process of \cite{Sas})}\\ \hline
 \end{tabular}
\end{table}
\FloatBarrier

\paragraph{Notation} Throughout this text, $C$
denotes positive constants the value of which may
vary from line to line; $C_{[\ast]}$ will denote a positive number
depending only on the quantity $[\ast]$. The symbol $\vee$ stands for maximum,
and $\wedge$ stands for minimum.

We hope the reader will forgive
the multitude of indices, which we try to use
consistently:  the
time ($\tau$ or its scaling $s$) is placed in
the left superscript, the number of the submatrix ($N$ or
its scaling $t$) -- in the right one; the large parameter
is placed in the left subscript, and, finally, the
right subscript is reserved for the index of an element
in a decreasing sequence.

\paragraph{Acknowledgment}  This work was inspired by a series of talks given by
Alexei Borodin at the IAS, and owes a lot to our
subsequent discussions and correspondence; I thank him very much. I am  
grateful to  Ivan Corwin, Ohad Feldheim, Patrik Ferrari, Vadim Gorin, Kurt Johansson,
and Jeremy Quastel for helpful suggestions and comments.

\section{Reduction to unimodular entries}\label{s:sosh}

Consider the mixed moments
\[\begin{split} \mathcal{M}(\bar{m}, \bar{\tau}, \bar{N})
&= \mathcal{M}(m_1, \cdots, m_k; \tau_1, \cdots, \tau_k; N_1, \cdots, N_k)\\
&=\mathbb{E} \prod_{p=1}^k \tr \left(\frac{^{(\tau_p)}{H}^{(N_p)}}{2 \sqrt{N_p}}\right)^{m_p}~.
  \end{split} \]
One can express $\mathcal{M}(\bar{m}, \bar{\tau}, \bar{N})$ as a sum
\begin{equation}\label{eq:paths} \mathcal{M}(\bar{m}, \bar{\tau}, \bar{N})
= \sum \mathbb{E} \prod_{p=1}^{k} \left[ N_p^{-m_p/2} {\prod_{i=1}^{m_p}} {^{(\tau_p)}}H^{(N_p)}(u_{p, i}, u_{p, i+1}) \right]
\end{equation}
over $k$-tuples
\begin{equation}\label{eq:ktuple}
u_{1,0}u_{1,1}u_{1,2}\cdots u_{1,m_1}; u_{2,0}u_{2,1}u_{2,2}\cdots u_{2,m_2};  \cdots; u_{k,0}u_{k,1}u_{k,2}\cdots u_{k,m_k} 
\end{equation}
such that $u_{p,i} \in \{1, \cdots, N_p\}$ ($1 \leq p \leq k$, $0 \leq i \leq m_p$) and 
$u_{p, m_p} = u_{p, 0}$ ($1 \leq p \leq k$).  Such a $k$-tuple is interpreted as
a $k$-tuple of closed paths in the complete graph with loops:
\[ K_\infty^+ = (V_\infty, E_\infty^+)~, \quad
V_\infty = \mathbb{N}~, \quad
E_\infty^+ = \left\{ (i, j) \, \mid \, (i, j) \in \mathbb{N} \right\}\]

Due to A1), only $k$-tuples in which every  (non-oriented) edge is traversed an even number of times give a non-zero contribution to (\ref{eq:paths}).

In this section, we prove
\begin{lemma}\label{c:sosh}
For $M/2 \leq N_1, \cdots, N_k \leq 2M$, $-1 \leq \tau_1, \cdots, \tau_k \leq 1$, and arbitrary
$m_1, \cdots, m_k$, the following estimates hold:
\begin{enumerate}
\item $\mathcal{M}(\bar{m}, \bar{\tau}, \bar{N}) \leq \prod_{p=1}^k \frac{CM}{m_p^{3/2}} \exp(C_km_p^3/M^2)$;
\item for $m_p = O(M^{2/3})$, the sub-sum of (\ref{eq:paths}) corresponding to $k$-tuples of paths
in which at least one non-oriented edge is traversed more than twice or at least one path has a loop
(i.e.\ $u_{p, i} = u_{p, i+1}$ for some $p, i$) vanishes in the limit $M \to \infty$.
\end{enumerate}
\end{lemma}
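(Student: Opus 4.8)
The plan is to adapt Soshnikov's combinatorial analysis of the trace moments of a single Wigner matrix to the present mixed-moment setting, exploiting the fact that the estimates of \cite{Sosh} are insensitive to the precise distribution of the entries (they rely only on A1) and A2)) and to the time parameter $\tau$ (which enters only through the covariances, and these are bounded by $1$). First I would record the standard reductions: by A1), only $k$-tuples of closed paths in which every non-oriented edge is traversed an even number of times contribute; since each path $p$ has length $m_p$, and the number of distinct vertices on path $p$ is at most $\frac{m_p}{2}+1$, the total number of free vertex labels across all $k$ paths is bounded by $\sum_p(\frac{m_p}{2}+1)$, each ranging over $\{1,\dots,N_p\}\subseteq\{1,\dots,2M\}$. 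This already gives the bare counting bound $(2M)^{\sum_p(m_p/2+1)}$ on the number of contributing $k$-tuples with a fixed ``shape,'' and the subgaussian bound A2) controls the expectation of the product of entries along each path by $(C_0 m_p)^{m_p/2}$-type factors. The non-trivial part, exactly as in \cite{Sosh}, is that the naive bound is far too lossy and one must refine the count of path shapes.

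Next I would invoke Soshnikov's classification of path shapes. For a single closed path of length $m$ traversing each edge exactly twice, the shape is (essentially) a plane tree with $m/2$ edges, and the number of such trees is the Catalan number $\sim 4^{m/2} m^{-3/2}$; the factor $4^{m/2}$ cancels against the $N^{-m/2}\cdot(2\sqrt N)^{?}$ normalisation so that the leading contribution of path $p$ is $\asymp \frac{CM}{m_p^{3/2}}$, which is the first assertion of the lemma once one shows the error terms are subdominant. For the error terms one uses Soshnikov's counting of trees with $r$ ``chains'' and vertices of excess degree, together with the bound on self-intersections (edges traversed four or more times, and loops), which contributes the correction factor $\exp(C_k m_p^3/M^2)$; here the constant $C_k$ depends on $k$ because when several paths share vertices the combinatorics of the joint shape introduces a bounded number of extra ``junction'' vertices per pair of paths. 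The key point to check is that passing from one path to $k$ paths does not destroy the per-path factorised bound: the paths interact only through shared vertices, and each shared vertex costs a factor $O(M^{-1})$ (one fewer free label) while gaining at most a combinatorial factor depending on $k$ and the $m_p$'s, so the cross-terms are absorbed into the $\exp(C_k m_p^3/M^2)$ factors. This is the first statement of the lemma.

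For the second statement, I would isolate the sub-sum over $k$-tuples in which some non-oriented edge is traversed at least four times, or some path has a loop $u_{p,i}=u_{p,i+1}$. In the regime $m_p=O(M^{2/3})$ each such ``defect'' reduces the number of free vertex labels by at least one (for a loop) or by at least one relative to the tree count (for a four-fold edge, since the shape is then a tree with strictly fewer than $m/2$ edges), i.e. it costs a factor $O(M^{-1})$; against this one gains only the combinatorial factor already accounted for in the $\exp(C_k m_p^3/M^2)\le\exp(O(1))$ bound, which is $O(1)$ when $m_p=O(M^{2/3})$. Hence every defect produces a net factor $O(M^{-1})\cdot O(m_p)=O(M^{-1/3})\to 0$, and summing over the bounded number of possible locations of a single defect shows the whole defective sub-sum is $O(M^{-1/3})$ times the main term, hence vanishes in the limit. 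The main obstacle is the bookkeeping in the first part: one must reproduce, uniformly in the positions of the shared vertices and uniformly in $i,j$ (using the uniformity of the $o$-term in A3)), Soshnikov's delicate count of tree shapes with a prescribed number of high-degree vertices and chains, and verify that the dependence on $k$ enters only through the harmless constant $C_k$. I would handle this by citing the relevant estimates of \cite[\S\S 2--4]{Sosh} essentially verbatim for each path, and treating the junctions between paths by an elementary inclusion–exclusion that costs at most $C_k$ per pair.
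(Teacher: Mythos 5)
You take a genuinely different route from the paper, and while the plan is viable in principle, it is substantially harder and leaves the essential steps unaddressed. The paper does \emph{not} re-run Soshnikov's combinatorics in the mixed-parameter setting. It quotes the special case $N_1=\cdots=N_k$, $\tau_1=\cdots=\tau_k$ as Lemma~\ref{l:sosh} (Soshnikov's Theorem~3) and then \emph{reduces} the general case to it. For item~1 the reduction is one line of H\"older's inequality,
\[
\mathcal{M}(\bar m,\bar\tau,\bar N)\ \leq\ \prod_{p=1}^k\Bigl[\,\mathbb{E}\bigl\{\tr\bigl(\tfrac{^{(\tau_p)}H^{(N_p)}}{2\sqrt{N_p}}\bigr)^{m_p}\bigr\}^{2k}\Bigr]^{1/2k}\,,
\]
and each factor on the right involves only a single $(N_p,\tau_p)$, so Lemma~\ref{l:sosh} applies directly. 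For item~2 the paper groups $k$-tuples of paths into permutation-isomorphism classes and shows that the contribution of each class to $\mathcal{M}(\bar m,\bar\tau,\bar N)$ is bounded by a constant (depending on $k$) times its contribution to $\mathcal{M}(\bar m;0,\dots,0;M/2,\dots,M/2)$, using two observations: (i) the expectation of the product of entries along an isomorphism class is \emph{maximal} when all $\tau_p$ coincide (by the covariance structure A3)), and (ii) the cardinality of the class, normalised by $\prod_p N_p^{-m_p/2}$, is controlled uniformly for $M/2\leq N_p\leq 2M$ via the factor $\prod_v\min\{N_p:\,v\text{ lies on the $p$-th path}\}$. You will not find either of these reductions in your proposal.

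Beyond taking the longer road, the proposal has two concrete gaps. First, you identify ``passing from one path to $k$ paths'' as the key point to check, but Soshnikov's Theorem~3 already treats $k$-tuples of paths; what genuinely needs extension is the passage from equal $(N,\tau)$ to distinct $(N_p,\tau_p)$, and you address that only in passing (``$\tau$ enters only through covariances bounded by $1$''), without the monotonicity-in-$\tau$ observation that actually makes the domination work, and without a device for comparing different $N_p$. Second, the claim that shared vertices ``cost a factor $O(M^{-1})$ while gaining at most a combinatorial factor\ldots absorbed into $\exp(C_k m_p^3/M^2)$'' does not survive naive accounting: placing a shared vertex on two paths contributes $O(m_p m_q)\sim M^{4/3}$ choices against the $O(M^{-1})$ saving, a net $O(M^{1/3})$ per junction. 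That this does not blow up is precisely the delicate content of Soshnikov's analysis (the parity constraint on edge multiplicities drastically restricts the joint shapes), and ``an elementary inclusion--exclusion that costs at most $C_k$ per pair'' underestimates this. In the paper, the $k$-dependence of the constant comes simply from H\"older turning $k$-fold products into $2k$-th powers, not from junction combinatorics.
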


We derive Lemma~\ref{c:sosh} from its special case proved in \cite[Theorem~3]{Sosh} (see 
further \cite[Section~I.5]{FS}):
\begin{lemma}\label{l:sosh}
For $N_1 = \cdots = N_k = M$ , $\tau_1 = \cdots = \tau_k$, and arbitrary $m_1, \cdots, m_k$, the
following estimates hold:
\begin{enumerate}
\item $\mathcal{M}(\bar{m}, \bar{\tau}, \bar{N}) \leq \prod_{p=1}^k \frac{CM}{m_p^{3/2}} \exp(C_km_p^3/M^2)$,
where the numbers $C>0$ and $C_k > 0$ may depend on $C_0$ from A2);
\item for $m_p = O(M^{2/3})$, the sub-sum of (\ref{eq:paths}) corresponding to $k$-tuples of paths
in which at least one non-oriented edge is traversed more than twice or at least one path has a loop
(i.e.\ $u_{p, i} = u_{p, i+1}$ for some $p, i$) vanishes in the limit $M \to \infty$.
\end{enumerate}
\end{lemma}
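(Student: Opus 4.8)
\emph{Since Lemma~\ref{l:sosh} is exactly \cite[Theorem~3]{Sosh}, a proof amounts to reproducing Soshnikov's moment--method argument; here is its skeleton.} When $N_1=\cdots=N_k=M$ and $\tau_1=\cdots=\tau_k=\tau$, all $k$ factors in $\mathcal M(\bar m,\bar\tau,\bar N)$ involve the one matrix $A={^{(\tau)}}H^{(M)}/(2\sqrt M)$, so (\ref{eq:paths}) expresses $\mathcal M$ as $\prod_p(2\sqrt M)^{-m_p}$ times a sum over $k$-tuples of closed walks $(w_p)_{p=1}^k$ of lengths $m_p$ in $K_M^+$, weighted by $\mathbb E\prod_{p,i}{^{(\tau)}}H(u_{p,i},u_{p,i+1})$. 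By A1) only tuples in which every non-oriented edge of $K_M^+$ is used an even \emph{total} number of times survive; by A2) the weight of such a tuple is at most $\prod_e (C_0r_e)^{r_e}$, where $2r_e$ is the multiplicity of the edge~$e$.

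\emph{Leading term: forests of doubled trees.} The dominant tuples are the \emph{simple} ones: the walks $w_p$ occupy pairwise disjoint vertex sets, and each $w_p$ traverses each edge of a spanning tree of its range exactly twice (forcing $m_p$ even). For one such walk the number of shape-classes is the Catalan number $\mathrm{Cat}_{m_p/2}$, each carrying the labelling factor $M(M-1)\cdots(M-m_p/2)\sim M^{m_p/2+1}$ and the weight $(2\sqrt M)^{-m_p}$ (since $\prod\mathbb E|H|^2=1$); using $\mathrm{Cat}_{m_p/2}\sim 4^{m_p/2}(\sqrt\pi)^{-1}(m_p/2)^{-3/2}$ this gives $\sim CM/m_p^{3/2}$, and disjointness makes the contribution multiplicative, producing $\prod_p CM/m_p^{3/2}$. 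Every non-simple tuple is down by at least $M^{-1+o(1)}$, which in particular already gives part~2 once the sharper bookkeeping below is in place.

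\emph{Control of the excess.} Encode a surviving tuple by Sinai--Soshnikov data recording, along the concatenated walk, the instants of first and last passage of each edge, the edges used $\ge 4$ times, the loops $u_{p,i}=u_{p,i+1}$, and the self-intersection instants (visits to an already-seen vertex not forced by the doubled-tree structure, including those closing up a cycle or joining two of the walks); let $\nu$ be the total number of such ``defects''. Each defect removes one otherwise-free vertex label (a factor $M^{-1}$) against at most a fixed power of $m=\max_p m_p$ for locating it, and Soshnikov's count shows the aggregate over all tuples with a prescribed $\nu$ is at most $\prod_p\tfrac{CM}{m_p^{3/2}}\cdot(C_k m^3/M^2)^\nu/\nu!$, the $1/\nu!$ arising from the unordered choice of the $\nu$ defect instants among $\sum_p m_p$ steps. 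Summing over $\nu\ge0$ gives $\prod_p\exp(C_k m_p^3/M^2)$, i.e.\ part~1. When every $m_p=O(M^{2/3})$ one has $m_p^3/M^2=O(1)$, so each \emph{individual} defect, in particular any edge of multiplicity $\ge4$ or any loop, contributes a net $M^{-1+o(1)}$; hence the sub-sum over tuples having at least one such defect is $O(M^{-1+o(1)})\to0$, which is part~2.

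\emph{Main obstacle.} The one genuinely delicate step is ``control of the excess'': one must design an encoding of walk-tuples that is injective enough to be counted yet coarse enough that the $M^{-1}$-per-defect suppression provably beats the combinatorial multiplicity \emph{uniformly} in all admissible $m_p$. This requires in particular a careful treatment of chains of self-intersections, of re-traversals that do not immediately close up an edge, and of the interplay of these with the diagonal (loop) terms; since all of this is done in \cite[Sections~2--5]{Sosh} (see also \cite[Section~I.5]{FS}), in the present work it is simply invoked.
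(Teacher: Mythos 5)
Your proposal is consistent with the paper, which offers no proof of this lemma at all: it is quoted verbatim as \cite[Theorem~3]{Sosh} (with a pointer to \cite[Section~I.5]{FS}), exactly as you ultimately do. Your added sketch of Soshnikov's argument --- the doubled-tree/Catalan leading term giving $CM/m_p^{3/2}$ per walk, and the self-intersection bookkeeping yielding a factor $\exp(C_k m_p^3/M^2)$ with the delicate uniform counting deferred to \cite{Sosh} --- is an accurate pr\'ecis of the cited proof, so the two approaches coincide.
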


\begin{proof}[Proof of Lemma~\ref{c:sosh}]
By H\"older's inequality and the first item of Lemma~\ref{l:sosh},
\[\begin{split}
 \mathcal{M}(\bar{m}, \bar{\tau}, \bar{N})
 &\leq \prod_{p=1}^k \left[ \mathbb{E} \left\{ \tr \left(\frac{^{(\tau_p)}{H}^{(N_p)}}{2 \sqrt{N_p}}\right)^{m_p} \right\}^{2k} \right]^{\frac{1}{2k}} \\
 &\leq  \prod_{p=1}^k \frac{CM}{m_p^{3/2}} \exp \left( C_k m_p^3 \, \big/ \, M^2 \right)~.
 \end{split}\]
 This proves item~1 of Lemma~\ref{c:sosh}. To prove the second
item, we need the following argument, which we
also use in the next section. Without loss of generality we may
assume that $N_1 \leq \cdots \leq N_p$. Divide the $k$-tuples of paths (\ref{eq:ktuple}) into
isomorphism classes as follows: $(u_{p,i}) \sim (u'_{p,i})$
if there exists a permutation $\pi \in S_\infty$ of $\mathbb{N}$ such that 
$u_{p,i} = \pi(u'_{p,i})$ ($1 \leq p \leq k$, $1 \leq i \leq m_p$). The value of
\[ \mathbb{E} \prod_{p=1}^k \prod_{i=1}^{m_p}  {^{(\tau_p)}}H^{(N_p)}(u_{p, i}, u_{p, i+1}) \]
is constant on every isomorphism class. It does not depend on $N_p$ ($1 \leq p \leq k$);
as a function of $\tau_p$ ($1 \leq p \leq k$), it is maximal when all $\tau_p$ coincide. Next,
the quantity 
\[ \prod_{p=1}^k N_p^{-m_p/2} \, \# (\text{isomorphism class})\]
is bounded from above by and asymptotic to
\begin{equation}\label{eq:contriso}
\prod_{p=1}^k N_p^{-m_p/2} \prod_{\text{vertices $v$}} 
\min \left\{ N_p \, \mid \, {1 \leq p \leq k~, \,\, \text{v lies on the $p$-th path} } \right\}~,
\end{equation}
since $m_p = O(M^{2/3}) = o(M^{1/2})$. The expression (\ref{eq:contriso})
is in turn bounded by $\mathrm{const}^k$ times its value when all $N_p$ are set to
$M/2$. Therefore we conclude that the contribution of an isomorphism
class to $\mathcal{M}(\bar{n}, \bar{\tau},\bar{N})$
is bounded by a constant (dependent only on $k$)
times its contribution to
\[ \mathcal{M}(m_1, \cdots, m_k; 0, \cdots, 0; M/2, \cdots, M/2)~.\]
Hence the second item of Lemma~\ref{c:sosh} follows from its
counterpart in Lemma~\ref{l:sosh}.
\end{proof}

\section{Asymptotics of mixed moments}\label{s:prop}

Next we turn to
\begin{lemma}\label{l:sprop} Let $\beta \in \{1,2\}$. There exists continuous
functions 
\[ \phi_\beta^\#: \mathbb{R}_+^k \times \mathbb{R}^k \times \mathbb{R}^k \to \mathbb{R}_+\]
so that for
\[\begin{split} &m_p \sim \alpha_p M^{2/3}, \quad \tau_p = s_p M^{-1/3}~, \quad N_p = M(1 + 2t_p M^{-1/3}) \quad (1 \leq p \leq k)~, \\
&\sum_{p=1}^k m_p \equiv 0 \mod 2 \end{split} \]
one has:
\begin{equation}\label{eq:sprop} \mathcal{M}(\bar{m}, \bar{\tau}, \bar{N}) = \phi_\beta^\#(\bar{\alpha}, \bar{s}, \bar{\tau}) + o(1)~,\end{equation}
and the function $\phi_\beta^\#(\bar{\alpha}, \bar{s}, \bar{t})$ depends on $\bar{s}$ and $\bar{t}$
only via the $\ell_1$ distances $|s_p - s_r|+|t_p-t_r|$ ($1 \leq p < r \leq k$).
\end{lemma}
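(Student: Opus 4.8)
\emph{Sketch of the argument.} The plan is to imitate, for the $k$ traces jointly, the combinatorial analysis of Soshnikov \cite{Sosh} and of \cite{FS}. I would establish \eqref{eq:sprop} first for the unimodular model obtained by resampling a matrix with $|H_{i,j}|=1$, $H(i,i)=0$ at unit-rate Poisson times — which satisfies A1)--A3$_\beta$) and \eqref{eq:unimod}, and for which the moment expansion is \emph{exact} — and then deduce the general case from Lemma~\ref{c:sosh}. Expanding $\mathcal M(\bar m,\bar\tau,\bar N)$ over $k$-tuples of closed walks \eqref{eq:ktuple} as in \eqref{eq:paths}: A1) kills all $k$-tuples in which some unoriented edge is traversed an odd number of times, and, in the regime $m_p\sim\alpha_p M^{2/3}$, the second item of Lemma~\ref{c:sosh} kills in the limit all those in which some edge is traversed more than twice or some walk has a loop. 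For the unimodular model the surviving sum is exact: the off-diagonal entries are $\pm1$-valued (for $\beta=1$) or phase-valued (for $\beta=2$) processes resampled at Poisson times, so the expectation over a surviving $k$-tuple factorises over edges, each edge $e$ contributing a factor equal to $1$ when its two traversals lie on a common walk (and, for $\beta=2$, run in opposite directions; if $\beta=2$ and they run in the same direction the factor is $0$) and equal to $e^{-|\tau_p-\tau_r|}$ when they lie on distinct walks $p\neq r$ (with the same orientation constraint when $\beta=2$). Thus, up to the normalising constants $\prod_p(2\sqrt{N_p})^{-m_p}$, the quantity $\mathcal M(\bar m,\bar\tau,\bar N)$ is, modulo $o(1)$, a sum over such $k$-tuples of the corresponding products of factors.

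Next I would group these $k$-tuples into isomorphism classes as in the proof of Lemma~\ref{c:sosh}. On each class the product of edge factors is constant and depends on $\bar\tau$ only, while the number of $k$-tuples in the class is, by \eqref{eq:contriso}, asymptotic to $\prod_{v}\min\{N_p : v\text{ lies on the }p\text{-th walk}\}$, so that after inserting $\prod_p(2\sqrt{N_p})^{-m_p}$ it converges, in the scaling window $N_p=M(1+2t_pM^{-1/3})$, to an explicit quantity depending on $\bar N$ only. The core of the argument, and the main obstacle, is the Soshnikov-type asymptotic analysis of the sum over classes: one must identify the classes that contribute in the limit — those in which every walk is a plane tree up to lower-order corrections, the corrections producing the Airy-type factors through a local limit theorem, and distinct walks overlap only along controlled common sub-structures — and sum their limiting contributions into a convergent series, which defines $\phi_\beta^\#(\bar\alpha,\bar s,\bar t)$; the uniform bound of the first item of Lemma~\ref{c:sosh} acts as a dominating function and yields the continuity of $\phi_\beta^\#$. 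Under \eqref{eq:unimod} it is the Chebyshev-polynomial variant of the moment method of \cite{FS} that makes both the identification of the dominant classes and the evaluation of the series manageable. Finally, the passage to an arbitrary matrix satisfying A1)--A3$_\beta$) is furnished by Lemma~\ref{c:sosh}: the same classes dominate, their sizes are the same, and their edge factors agree with those of the unimodular model in the scaling window up to errors that, by \cite{Sosh}'s estimates, are asymptotically negligible.

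It remains to read off the $\ell_1$-dependence from this description, in which all dependence on $(\bar s,\bar t)$ sits in the edge factors and in the label counts of the dominant classes. A simultaneous shift of all $\tau_p$ by a constant leaves the covariances postulated in A3$_\beta$) unchanged, so $\phi_\beta^\#$ depends on $\bar s$ only through the differences $s_p-s_r$; a simultaneous shift $t_p\mapsto t_p+c$ amounts to replacing the large parameter $M$ by $M(1+2cM^{-1/3})$ while keeping $\bar\alpha$ and $\bar s$ fixed, under which $\mathcal M(\bar m,\bar\tau,\bar N)$ has the same limit, so $\phi_\beta^\#$ depends on $\bar t$ only through the differences $t_p-t_r$. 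The symmetry A1), together with the invariance of the limiting covariance $1-|\tau_1-\tau_2|$ under $\tau\mapsto-\tau$, upgrades these to the absolute values $|s_p-s_r|$ and $|t_p-t_r|$. Last, the scalings ${_M}\tau(s)=sM^{-1/3}$ and ${_M}N(t)=M(1+2tM^{-1/3})$ are calibrated so that, in a dominant class, the part of a walk shared with another walk carries both a factor $e^{-|\tau_p-\tau_r|}$ per shared edge (time-separation) and a weight reflecting that its vertices must be labelled within $\{1,\dots,\min(N_p,N_r)\}$ (size-separation), and these assemble, in the limit, into a single factor that depends on $(\bar s,\bar t)$ only through $|s_p-s_r|+|t_p-t_r|$; hence so does $\phi_\beta^\#$. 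Checking that the two decorrelation mechanisms really combine into an honest $\ell_1$ distance, rather than some weighted linear combination, is the other delicate point.
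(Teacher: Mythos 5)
Your overall strategy (reduce to unimodular entries via Lemma~\ref{c:sosh}, expand over $k$-tuples of closed walks, discard odd/high-multiplicity/loop configurations, classify the survivors combinatorially) is the right one and matches the paper's. But the proposal stops short of the two points that constitute the actual content of the lemma, and you flag both of them yourself as unresolved. First, the ``Soshnikov-type asymptotic analysis of the sum over classes'' that you call the main obstacle \emph{is} the proof: the paper does not attack the ordinary moments directly via ``plane trees plus corrections and a local limit theorem,'' but instead passes to the modified moments $\widetilde{\mathcal M}$ built from the polynomials $P_n^{(N)}$, for which \eqref{eq:cheb} removes the tree part exactly and leaves a sum over closed non-backtracking loopless paths; these are classified into finitely many $k$-diagrams, each contributing a Riemann sum over a polytope $\Delta_{\mathcal D}(\bar n)$ that converges to an explicit integral $I^{\mathcal D}(\bar\alpha,\bar s,\bar t)$, giving Lemma~\ref{l:sprop1} with $\psi_\beta^\#$ as in \eqref{eq:defpsi1sh}--\eqref{eq:defpsi2sh}. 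One then returns to the ordinary moments via Snyder's identities \eqref{eq:snyder} and the telescoping \eqref{eq:geom}, using the a priori bound of Lemma~\ref{l:fs} to dominate the subdominant part of the sum; this yields $\phi_\beta^\#$ as the explicit integral \eqref{eq:phipsi} of $\psi_\beta^\#$, whence also its continuity. None of this machinery is set up in your sketch, and invoking ``the Chebyshev-polynomial variant of \cite{FS}'' without defining the modified moments or the passage back to $\mathcal M$ does not close the gap.

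Second, the $\ell_1$ claim does not follow from your shift-and-symmetry argument. Translation invariance in $\bar\tau$ and the reinterpretation of a common shift of $\bar t$ as a change of $M$ only show that $\phi_\beta^\#$ is a function of the two families of differences $(s_p-s_r)$ and $(t_p-t_r)$ \emph{separately}; the lemma asserts the much stronger statement that they enter only through the sums $|s_p-s_r|+|t_p-t_r|$. In the paper this drops out of the explicit per-diagram-edge factor
\[
\Bigl(1-\tfrac{|N_{p_+(e)}-N_{p_-(e)}|}{M}\Bigr)^{\ell(e)/2}\bigl(1-|\tau_{p_+(e)}-\tau_{p_-(e)}|\bigr)^{\ell(e)}
\longrightarrow \exp\bigl\{-\bigl(|t_{p_+(e)}-t_{p_-(e)}|+|s_{p_+(e)}-s_{p_-(e)}|\bigr)w(e)\bigr\}~,
\]
where the exponent $\ell(e)/2$ coming from the vertex-count $B(\bar\ell)$, the factor $2$ in the scaling $N_p=M(1+2t_pM^{-1/3})$, and the exponent $\ell(e)$ on the covariance factor conspire to produce exactly matching coefficients in front of $|t_{p_+}-t_{p_-}|$ and $|s_{p_+}-s_{p_-}|$. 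You correctly identify that this calibration is ``the other delicate point,'' but the proof of the lemma consists precisely in verifying it, which requires the explicit computation of the combinatorial factor and the covariance factor per isomorphism class that the proposal omits.
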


According to (both items of) Lemma~\ref{c:sosh}, it is sufficient to prove the lemma in our favourite special case,
and we pick the matrices with unimodular entries (\ref{eq:unimod}). In this case, the combinatorial
arguments of \cite{FS} are especially transparent, due to the following identity
(\ref{eq:cheb})  (see e.g.\ \cite[Claim~II.1.2]{FS}).

Denote
\[ P_n^{(N)}(\lambda) = U_n\left( \frac{\lambda}{2\sqrt{N-2}} \right) - \frac{1}{N-2} U_{n-2} \left( \frac{\lambda}{2\sqrt{N-2}} \right)~, \]
where
\[ U_n(\cos \theta) = \frac{\sin((n+1)\theta)}{\sin \theta} \]
are the Chebyshev polynomials of the second kind, and we formally set $U_{-2} \equiv U_{-1} \equiv 0$.
Then, for any Hermitian matrix $H$ satisfying
the unimodularity assumption
\begin{equation}\label{eq:unimod'}
|{H}_{i,j}| = 1 \quad (i \neq j)~, \quad H(i, i) = 0 \quad (1 \leq i, j)~,
\end{equation}
we have (see e.g.\ \cite[Claim~II.1.2]{FS}):
\begin{equation}\label{eq:cheb}
\tr P_n^{(N)} \left( H^{(N)} \right) = \frac{1}{(N-2)^{n/2}} \sum \prod_{i=1}^n H(u_i, u_{i+1})~,
\end{equation}
where the sum is over paths $u_0u_1\cdots u_n$ in $\{1,\cdots,N\}$ such that $u_n = u_0$ (the path is closed), $u_j \neq u_{j+1}$ (no loops), and
$u_j \neq u_{j+2}$ (no backtracking). Define the modified mixed moments $\widetilde{\mathcal{M}}(\bar{n}, \bar{\tau}, \bar{N})$
corresponding to a time-dependent Hermitian random matrix $\ltau{H}$ as follows:
\[
\widetilde{\mathcal{M}}(\bar{n}, \bar{N}, \bar{\tau}) = \mathbb{E} \prod_{p=1}^k \tr P_{n_p}^{(N_p)} \left( ^{(\tau_p)}H^{(N_p)} \right)~.\]
If $\ltau{H}$ satisfies the unimodularity
assumption (\ref{eq:unimod}), we have from (\ref{eq:cheb}):
\begin{equation}\label{eq:cpaths}
\widetilde{\mathcal{M}}(\bar{n}, \bar{N}, \bar{\tau})
= \sum  \mathbb{E} \prod_{p=1}^{k} \left[ (N_p-2)^{-n_p/2} {\prod_{i=1}^{n_p}} {^{(\tau_p)}}H(u_{p, i}, u_{p, i+1}) \right]~,
\end{equation}
where the sum is over $k$-tuples of closed non-backtracking paths without loops
\[ u_{1,0}u_{1,1}u_{1,2}\cdots u_{1,n_1}; u_{2,0}u_{2,1}u_{2,2}\cdots u_{2,n_2};  \cdots; u_{k,0}u_{k,1}u_{k,2}\cdots u_{k,n_k}\]
such that $u_{p,i} \in \{1,\cdots,N_p\}$. As in the previous section, we may only consider even $k$-tuples, i.e.\ $k$-tuples in which every edge is traversed
an even number of times. Also, we only consider the tuples $\bar{n}$ for which the sum $\sum_p n_p$ is even, otherwise our set
of $k$-tuples is empty and $\widetilde{\mathcal{M}}=0$.

We shall be interested in the asymptotics of $\widetilde{\mathcal{M}}(\bar{n}, \bar{N}, \bar{\tau})$ for
\begin{equation}\label{eq:regime}\begin{split} 
&n_p \sim \alpha_p M^{1/3}~, \quad \tau_p = s_p M^{-1/3}~, \quad  N_p = M(1 + 2t_p M^{-1/3})~, \\
&\sum_{p=1}^k n_p \equiv 0 \mod 2~.
\end{split}
\end{equation}

\begin{lemma}\label{l:sprop1} 
For  matrices with unimodular entries (\ref{eq:unimod'}), we have in the asymptotic regime (\ref{eq:regime}):
\[ \widetilde{\mathcal{M}}(\bar{n},\bar{N},\bar{\tau}) = M^{k/3}
(\psi_\beta^\#(\bar{\alpha}, \bar{s}, \bar{t}) + o(1))~,\]
where $\psi_\beta^\#(\bar{\alpha}, \bar{s}, \bar{t})$
depend on $\bar{s}$ and $\bar{t}$ only via the $\ell_1$ distances
\[ \{ |s_p - s_r|+ |t_p - t_r| \}_{1 \leq p < r \leq k}~.\]
\end{lemma}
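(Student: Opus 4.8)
The plan is to establish the asymptotics of $\widetilde{\mathcal{M}}(\bar{n},\bar{N},\bar{\tau})$ by classifying the $k$-tuples of closed non-backtracking loopless paths appearing in \eqref{eq:cpaths} according to the structure of their underlying (multi)graph, and then carrying out the combinatorics of \cite{FS} in this time- and corner-dependent setting. First I would restrict, using both items of Lemma~\ref{c:sosh} (which are available for unimodular entries via the reduction of Section~\ref{s:sosh}), to \emph{even} $k$-tuples in which every non-oriented edge is traversed \emph{exactly} twice and no path has a loop; the contribution of all other tuples is $o(M^{k/3})$. For such a tuple, the union of the $k$ paths, viewed as a graph $G$, has $V$ vertices and $E$ edges with $2E = \sum_p n_p$, and standard Euler-characteristic bookkeeping (as in \cite{Sosh,FS}) shows that the number of tuples realizing a given isomorphism type of $G$, weighted by $\prod_p (N_p-2)^{-n_p/2}$, is $(1+o(1))\prod_{\text{vertices } v}\min\{N_p : v \text{ lies on path } p\}\big/\prod_p N_p^{n_p/2}$, exactly as in \eqref{eq:contriso}. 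Since each $N_p = M(1+2t_pM^{-1/3})$ and $n_p \sim \alpha_p M^{1/3}$, this combinatorial weight is $M^{k/3}$ times a factor depending only on the $t_p$'s through their successive differences, up to $o(1)$, provided the graph is a tree-like union of cycles (one independent cycle per path, glued along shared vertices), which — again by the Euler count — is forced for the surviving tuples since an extra independent cycle costs a factor $M^{-1/3}$.

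Next I would incorporate the time dependence. For a surviving tuple each edge $e$ of $G$ is traversed twice, say by paths $p(e)$ and $p'(e)$ (possibly $p(e)=p'(e)$); by the covariance assumption A3$_\beta$) together with \eqref{eq:unimod}, the expectation $\mathbb{E}\prod_{p,i} {}^{(\tau_p)}H(u_{p,i},u_{p,i+1})$ factorizes over edges into a product of terms each equal to $1-|\tau_{p(e)}-\tau_{p'(e)}| + o(M^{-1/3})$ for $\beta=1$ (and, for $\beta=2$, the same once orientation-reversal of the two traversals is accounted for, with the other pairing contributing $0$). Writing $\tau_p = s_p M^{-1/3}$, each such factor is $1 - M^{-1/3}|s_{p(e)}-s_{p'(e)}| + o(M^{-1/3})$, and since $G$ has $E \sim \tfrac12\sum_p\alpha_p M^{1/3}$ edges, the product over all edges of $G$ converges to $\exp\{-\sum_e |s_{p(e)}-s_{p'(e)}|\}$, a quantity depending on $\bar s$ only through the differences $|s_p - s_r|$; here I need to be careful that the $o(M^{-1/3})$ errors, multiplied by $O(M^{1/3})$ edges, stay $o(1)$, which follows from the uniformity of the $o$-term in A3$_\beta$).

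Finally, I would combine the two ingredients: summing over isomorphism types of $G$, the contribution of a type is $M^{k/3}$ times (a $t$-dependent combinatorial weight) times (an $s$-dependent covariance weight) times $(1+o(1))$, and the full sum is absolutely convergent because the number of isomorphism types with a given $\sum_p n_p$ grows only combinatorially (Catalan-type bounds), while the tails of the $m_p$-distributions are controlled by item~1 of Lemma~\ref{c:sosh}. Setting $\psi_\beta^\#(\bar\alpha,\bar s,\bar t)$ equal to the resulting limit gives the claim, and by construction it depends on $\bar s,\bar t$ only through the $\ell_1$ distances $|s_p-s_r|+|t_p-t_r|$ — the $t$-part through the $\min$-over-$N_p$ weights and the $s$-part through the exponential covariance factor. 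The main obstacle I anticipate is bookkeeping the interaction between the corner indices $N_p$ and the graph structure: one must show that a vertex shared by paths $p$ and $r$ contributes $\min(N_p,N_r)$ \emph{and} that after expanding $\min(N_p,N_r) = M\min(1+2t_pM^{-1/3},1+2t_rM^{-1/3})$ the products telescope into something depending only on the ordered differences of the $t_p$, and simultaneously that the non-backtracking constraint (which is what makes the Chebyshev identity \eqref{eq:cheb} clean) does not destroy the tree-of-cycles structure needed for the Euler count; this is precisely the point where the arguments of \cite{FS} must be re-examined rather than quoted.
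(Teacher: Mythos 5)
Your overall strategy --- restrict to even tuples with every edge traversed exactly twice, classify by the topology of the union graph, compute the vertex-counting weight as in (\ref{eq:contriso}), and multiply by an edge-wise covariance factor coming from A3$_\beta$) --- is indeed the paper's strategy. But there is a genuine error at the central combinatorial step: you claim that the Euler count forces the surviving graphs to be ``tree-like unions of cycles, one independent cycle per path'', because ``an extra independent cycle costs a factor $M^{-1/3}$''. This is false in the edge regime $n_p\sim\alpha_p M^{1/3}$, and it is precisely the point where this regime differs from the global regime of Borodin \cite{B1,B2}. An extra unit of the complexity parameter $s$ of a $k$-diagram does cost a factor $M^{-1}$ in the vertex count of a \emph{single} isomorphism class of tuples; but it also adds three edges to the diagram, hence three dimensions to the polytope $\Delta_{\mathcal{D}}(\bar{n})$ of admissible arc-lengths, and therefore multiplies the \emph{number} of isomorphism classes realizing that diagram by a factor of order $n^3\sim M$. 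The two effects cancel exactly: every diagram, of every complexity $s\geq k$, contributes at the same order $M^{k/3}$. This is why $\psi_\beta^\#$ in (\ref{eq:defpsi1sh}) is an infinite sum over all diagrams (the topological expansion mentioned at the end of Section~\ref{s:prop}), and why the limit is the Airy process rather than a Gaussian field. Your argument would retain only the minimal diagrams and produce the wrong limit. Relatedly, your bookkeeping conflates a single isomorphism type of $G$ with a topological equivalence class: a single isomorphism class contributes only $O(M^{k-s})$, and one must sum over the $\asymp M^{s-2k/3}$ lattice points of $\Delta_{\mathcal{D}}(\bar{n})$ --- i.e.\ replace a Riemann sum by an integral over the polytope --- to obtain the quantity $M^{k/3}\, I^{\mathcal{D}}(\bar\alpha,\bar{s},\bar{t})$ appearing in the paper.

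A secondary point: for $\beta=2$ the restriction to orientable diagrams (each edge traversed in two opposite directions, forced by $\mathbb{E}H(i,j)^2=0$) must survive into the final formula; your parenthetical acknowledges the mechanism, but the conclusion --- that $\psi_2^\#$ is the sum of $I^{\mathcal{D}}$ over orientable diagrams only, as in (\ref{eq:defpsi2sh}) --- needs to be stated. The treatment of the covariance factor and the claim that the dependence on $\bar{s},\bar{t}$ is through $\ell_1$ distances are otherwise consistent with the paper's proof.
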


\begin{proof}[Proof of Lemma~\ref{l:sprop} using Lemma~\ref{l:sprop1}]

We follow the strategy of \cite[Section~I.5]{FS}. In view of Lemma~\ref{c:sosh} (item 2), we can
assume that the matrix $H$ has unimodular entries (\ref{eq:unimod'}); in this case Lemma~\ref{l:sprop1}
is applicable. 

Recall the identities (cf.\ Snyder \cite{Sn})
\begin{equation}\label{eq:snyder} \begin{split}
\lambda^{2m} &= \frac{1}{(2m+1)2^{2m}} \sum_{n=0}^m (2n+1) 
\binom{2m+1}{m-n} U_{2n}(\lambda)~; \\
\lambda^{2m+1} &= \frac{1}{2m 2^{2m-1}} \sum_{n=0}^m 2n \binom{2m}{m-n}
U_{2n-1}(\lambda)
\end{split}\end{equation}
expressing $\lambda^{2m}$ and $\lambda^{2m+1}$ in terms of the Chebyshev
polynomials of the second kind, and note that
\begin{equation}\label{eq:geom}
\begin{split}
U_n &= \left\{ U_n - \frac{1}{N-2} U_{n-2} \right\} 
+ \frac{1}{N-2} \left\{ U_{n-2} - \frac{1}{N-2} U_{n-4} \right\} \\
&+ \frac{1}{(N-2)^2} \left\{ U_{n-4} - \frac{1}{N-2} U_{n-6} \right\} + 
\cdots~.
\end{split}
\end{equation}

Next, we make use of the following estimate from \cite[Theorem~I.5.3, Proposition~II.3.4]{FS}:
\begin{lemma}\label{l:fs} Fix $k \geq 1$, and let $\bar{n}$ be
a $k$-tuple of indices. Then
\[ \widetilde{\mathcal{M}}(\bar{n},\bar{N},\bar{\tau})
\leq (Cn)^k \exp \left\{ C_k n^{3/2} M^{-1/2}\right\}~.\]
If $n_p = O(M^{1/3})$ ($1 \leq p \leq k$),
 the contribution of $k$-tuples of paths in which at least
one edge is traversed more than twice is $o(M^{k/3})$.
\end{lemma}
The estimate was proved in \cite{FS} for the case
of identical $\tau$'s and $N$'s, and extends to the
general case by the argument of Section~\ref{s:sosh}.

Now we express the mixed moments $\mathcal{M}$
in terms of the modified moments $\widetilde{\mathcal{M}}$
using (\ref{eq:snyder}) and (\ref{eq:geom}), and pass to the
limit as $M \to \infty$. We obtain (\ref{eq:sprop})
with 
\begin{equation}\label{eq:phipsi} \phi_\beta^\#(\bar{\alpha}, \bar{s}, \bar{t}) = 
\prod_{p=1}^k \int_0^\infty \frac{2 \xi_p d\xi_p}{\sqrt{\pi \alpha_p}} e^{-\xi_p^2}
\psi_\beta^\#(2\sqrt{\bar{\alpha}}\bar{\xi}, \bar{s}, \bar{t})~,\end{equation}
where $\sqrt{\bar{\alpha}}\bar{\xi}$ denotes coordinate-wise product.
To justify the passage to the limit, we first split  the  sums  (\ref{eq:snyder})
in two parts: the convergence of the dominant part  $\delta M^{1/3} \leq n \leq
\delta^{-1} M^{1/3}$ to the integral as $M \to \infty$ and then $\delta \to +0$ 
follows from Lemma~\ref{l:sprop1} and the first part of Lemma~\ref{l:fs},
whereas the convergence of the subdominant part to zero follows from the 
first part of Lemma~\ref{l:fs} (cf.\ \cite[Proof of Theorem~I.2.4]{FS}).

Thus we conclude the proof of Lemma~\ref{l:sprop}.
\end{proof}

\begin{proof}[Proof of Lemma~\ref{l:sprop1}]

Let us first consider the case $\beta = 1$. In \cite{FS},  even $k$-tuples of closed non-backtracking 
paths without loops were divided (up to
an asymptotically negligible fraction) into topological equivalence classes, 
called $k$-diagrams. We refer to \cite[Definition~2.3, Remark~2.4]{band} for precise definitions; here we content ourselves with
Figures~\ref{fig:diag1} and \ref{fig:diag2} (which we copy from \cite{band}, correcting a mistake in the caption),
and an example: the pair of paths
\begin{equation}\label{eq:example}
1 \, 2 \, 3 \, 4 \, 5 \, 3 \, 4 \, 5\, 3 \, 2 \, 6 \, 7 \, 8 \, 9 \, 7 \, 6 \, 2 \, 1~, \quad
 10 \, 11 \, 8 \, 7 \, 9 \, 8 \, 11 \, 10 
\end{equation}
corresponds to the rightmost diagram of Figure~\ref{fig:diag2}.

\begin{figure}[h]
\vspace{2cm}
\setlength{\unitlength}{1cm}
\begin{pspicture}(-1,0)
\psline(.2,1)(2, 1)
\psarcn(2.8,1){.8}{180}{195}
\psset{arrows=-}
\psline(2,.8)(2.1,.92)
\psarcn(2.8,1){.7}{185}{195}
\psline(2,.92)(2.1,.8)
\psset{arrows=->}
\psline(2,.9)(.2,.9) %
\psline(4.2,1)(6,1)
\psarcn(6.8,1){.8}{180}{185}
\psline(6,.9)(5,.9)
\pscurve(5,.9)(5.5,.7)(6,.5)(6.4,.5)
\psarc(6.8,1){.65}{230}{218}
\pscurve(6.27,.6)(5.9, .6)(5,.8)(4.8,.9)(4.2,.9) %
\psline(8.2,1)(9, 1)
\psline(9, 1)(9.2, 1.7)
\psarcn(9.3, 2.05){.35}{255}{270}
\psarcn(9.3, 2.05){.45}{265}{270}
\psline(9.3, 1.7)(9.1, 1)
\psline(9.1, 1)(10, 1)
\psarcn(10.8,1){.8}{180}{195}
\psset{arrows=-}
\psline(10,.8)(10.1,.92)
\psarcn(10.8,1){.7}{185}{195}
\psline(10,.92)(10.1,.8)
\psset{arrows=->}
\psline(10,.9)(8.2,.9)
\end{pspicture}
\caption[Fig. 4.5]{A few simple 1-diagrams: $s = 1$ (left),
    $s = 2$ (centre, right)}\label{fig:diag1}
\vspace{2mm}
\end{figure}
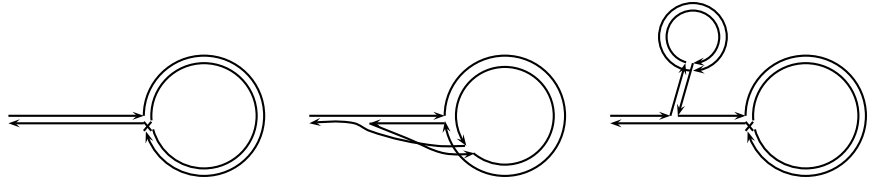

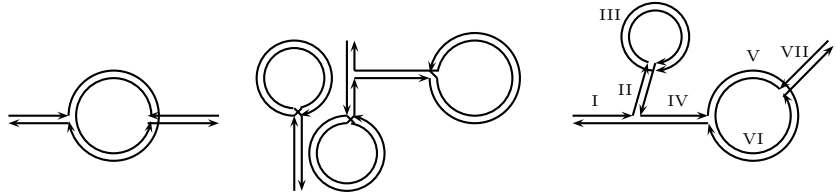
\begin{figure}[h]
\vspace{3cm}
\setlength{\unitlength}{1cm}
\begin{pspicture}(-1,0)
\psline(.2,1)(1, 1)
\psarcn(1.6,1){.6}{180}{187}
\psline(1,.9)(.2,.9)
\psline(3,1)(2.05, 1)
\psarc(1.6,1){.5}{0}{352}
\psline(2.05, .9)(3,.9)
\psline(4,0)(4, 1)
\psarcn(4,1.5){.5}{270}{278}
\psset{arrows=-}
\psline(4.1,1)(4,1.1)
\psarcn(4,1.5){.4}{268}{282}
\psline(4.1,1.1)(4,1)
\psset{arrows=->}
\psline(4.1,1)(4.1,0)
\psline(4.7,2)(4.7, 1)
\psarc(4.7,.5){.5}{90}{82}
\psset{arrows=-}
\psline(4.8,1)(4.7,.9)
\psarc(4.7,.5){.4}{90}{82}
\psline(4.8,.9)(4.7,1)
\psset{arrows=->}
\psline(4.8,1)(4.8,1.5)
\psline(4.8,1.5)(5.8, 1.5)
\psarc(6.4,1.5){.6}{180}{173}
\psset{arrows=-}
\psline(5.8, 1.6)(5.9, 1.5)
\psarc(6.4,1.5){.5}{180}{170}
\psline(5.9,1.6)(4.8, 1.6)
\psset{arrows=->}
\psline(4.8,1.6)(4.8,2)
\psline(7.7,1)(8.5, 1)
\rput(8, 1.2){\tiny I}
\psline(8.5, 1)(8.7, 1.7)
\rput(8.4, 1.35){\tiny II}
\psarcn(8.8, 2.05){.35}{255}{270}
\psarcn(8.8, 2.05){.45}{265}{270}
\rput(8.2, 2.35){\tiny III}
\psline(8.8, 1.7)(8.6, 1)
\psline(8.6, 1)(9.5, 1)
\rput(9.1, 1.2){\tiny IV}
\psarcn(10.1,1){.6}{180}{190}
\rput(10.1, 1.8){\tiny V}
\rput(10.1, 0.7){\tiny VI}
\psline(9.5,.9)(7.7,.9)
\psline(11.1, 2)(10.45, 1.35)
\psarc(10.1,1){.5}{45}{37}
\psline(10.5, 1.25)(11.17, 1.93)
\rput(10.65, 1.85){\tiny VII}
\end{pspicture}
\caption[Fig. 4.5]{A few simple 2-diagrams: $s = 2$ (left),
    $s = 3$ (centre, right). The leftmost diagram is the only one which survives in the asymptotic
    regime considered by Borodin \cite{B1,B2}, and it gives rise to the family of Gaussian
    free fields put forth in these works.}\label{fig:diag2}
\end{figure}

To every $k$-diagram $\mathcal{D}$, one associates an integer $s \geq k$, and
a  multigraph $(V, E)$ with $\# V = 2s$ vertices and $\# E = 3s-k$ edges. For $p = 1, \cdots, k$ and $e \in E$,
let $c_p(e) \in \{0, 1, 2\}$ be the number of times an edge corresponding to $e$ is traversed by the $p$-th path
in the tuple. Consider the system of equations $\mathfrak{E}_\mathcal{D}(\bar{n})$:
\[ \sum_{e \in E} c_p(e) \ell(e) = n_p \quad (1 \leq p \leq k)~.\]
The number $\ell(e)$ represents the number of edges in the $k$-tuple of paths corresponding to the edge $k$
of the $k$-diagram. 

\vspace{1mm}
\noindent In the example (\ref{eq:example}) above, 
\[ \begin{split}
&c_1(I) = c_1(II) = c_1(II)=c_1(IV) =c_2(VII) = 2~,\\
&c_1(V) = c_1(VI) = c_2(V) = c_2(VI) = 1~, \\
&c_1(VII) = c_2(I) = c_2(II) = c_2(III) = c_2(IV) = 0~,
\end{split}\]
where the Roman numerals number the edges of the
diagram as on Figure~\ref{fig:diag2}, and
\[ \ell(I) = 1, \ell(II) = 1, \ell(III)=3, \ell(IV)=2,
\ell(V)=1, \ell(VI)=2, \ell(VII) = 2~. \]

Also, for every $e \in E$, let $1 \leq p_-(e) \leq p_+(e) \leq k$ be the indices of the two
paths traversing $e$. In the example (\ref{eq:example}), 
\[\begin{split}
&p_\pm(I)=p_\pm(II)=p_\pm(III)=p_\pm(IV) = 1~,\\
&p_-(V)=p_-(VI)=1~, \, p_+(V)=p_+(VI)=2~, \, p_\pm(VII)=2~. 
\end{split}\]
Finally, let
$\Delta_\mathcal{D}(\bar{n})$ be the convex polytope of positive real solutions to  $\mathfrak{E}_\mathcal{D}(\bar{n})$
(this is a $(3s-2k)$-dimensional polytope in $\mathbb{R}^{3s-k}$).

To evaluate the contribution of $\mathcal{D}$ to (\ref{eq:cpaths}) in the asymptotic
regime (\ref{eq:regime}), we need to sum $A(\bar{\ell})B(\bar{\ell})C(\bar{\ell})$ over
$\ell \in\Delta_\mathcal{D}(\bar{n})$, where
\[ A(\bar\ell) = \prod_{p=1}^k (N_p - 2)^{-n_p/2} = (1+o(1)) \prod_{p=1}^k N_p^{-n_p/2} 
= (1 + o(1)) \prod_{e} \left[ N_{p_+(e)}^{-\ell(e)/2}N_{p_-(e)}^{-\ell(e)/2}\right] \]
is the prefactor from (\ref{eq:cheb}); 
\[\begin{split}
 B(\bar\ell) 
 &= (1+o(1)) \prod_{v \in V} \min \left\{ N_j \, \mid \, \text{$v$ lies on the $j$-th path} \right\} \\
 &= (1+o(1)) M^{k-s} \prod_{e} \left( N_{p_+(e)} \wedge N_{p_-(e)} \right)^{\ell(e)}
\end{split}\]
is a combinatorial factor counting the number of ways to chose the vertices (the asymptotics
is valid for fixed $s$, whereas for large $s$ the same expression provides an upper bound);
and
\[ C(\bar\ell) = \prod_e \left( 1 - |\tau_{p_+(e)} - \tau_{p_-(e)}| + o(1) \right)^{\ell(e)} \]
takes the correlations of the matrix elements into account. One can see that the contribution
to the sum comes from $\bar\ell$ such that $\ell(e) \geq M^{1/3-\delta}$. Therefore
\begin{multline*}
\sum_{\bar\ell \in \Delta_{\mathcal{D}}(\bar{n})} A(\bar\ell) B(\bar\ell) C(\bar\ell) 
\\= \frac{(1 +o(1))}{M^{s-k}} \sum_{\bar\ell \in \Delta_{\mathcal{D}}(\bar{n})} \prod_{e \in E} \frac{\left(N_{p_+(e)} \wedge N_{p_-(e)}\right)^{\ell(e)}}{\left(N_{p_+(e)}  N_{p_-(e)}\right)^{\ell(e)/2}} 
\left( 1 - |\tau_{p_+(e)} - \tau_{p_-(e)}| \right)^{\ell(e)} 
\end{multline*}
which can be further rewritten as:
\[\frac{(1 + o(1))}{M^{s-k}} \,
\sum_{\bar{\ell} \in \Delta_\mathcal{D}(\bar{n})} \prod_{e \in E} 
\left( 1 - \frac{|N_{p_+(e)} - N_{p_-(e)}|}{M} \right)^{\frac{\ell(e)}{2}} 
(1 - |\tau_{p_+(e)} - \tau_{p_-(e)}|)^{\ell(e)}~.\]
Expressing $\bar{N}$,$\bar\tau$ in terms of $s, t$ (cf.\ (\ref{eq:regime}))
and replacing the Riemann sum with an integral, we finally obtain the asymptotic
expression
\[
(1+o(1)) M^{k/3} \!\!\!\!\!\int\limits_{\Delta_\mathcal{D}(\bar{\alpha})} \!\!\!d\bar{w}
\exp\left\{ - \sum_e  \left[ |t_{p_+(e)} - t_{p_-(e)}| + |s_{p_+(e)} - s_{p_-(e)}| \right] w(e)\right\}~,
\]
where the integral is with respect to the $(3s-2k)$-dimensional measure. Denoting the integral
by $I^\mathcal{D}\left(\bar{\alpha}, \bar{s}, \bar{t}\right)$
and setting 
\begin{equation}\label{eq:defpsi1sh} \psi_1^\#(\bar\alpha, \bar{s}, \bar{t}) = \sum_{\mathcal{D}} I^{\mathcal{D}}(\bar{\alpha},\bar{s},\bar{t})~,\end{equation} 
we obtain:
\begin{equation}\label{eq:formula} \widetilde{\mathcal{M}}(\bar{n}, \bar{N}, \bar{\tau})
= (1+o(1)) M^{k/3} \psi_1^\#(\bar{\alpha},\bar{s},\bar{t})~,\end{equation}
as claimed. 

To define $\psi_2^\#$, we say that a diagram $\mathcal{D}$ is orientable if every edge
is traversed in two opposite directions, For example, the diagrams of Figure~\ref{fig:diag1} (centre)
and Figure~\ref{fig:diag2} (left) are orientable, whereas the other diagrams of Figures~\ref{fig:diag1}, \ref{fig:diag2} are not. Then for $\beta=2$
\begin{equation}\label{eq:formula2} \widetilde{\mathcal{M}}(\bar{n}, \bar{N}, \bar{\tau})
= (1+o(1)) M^{k/3} \psi_2^\#(\bar{\alpha},\bar{s},\bar{t})~,\end{equation}
where
\begin{equation}\label{eq:defpsi2sh} \psi_2^\#(\bar\alpha, \bar{s}, \bar{t}) = \sum_{\text{orientable $\mathcal{D}$}} I^{\mathcal{D}}(\bar{\alpha},\bar{s},\bar{t})~.\end{equation}
\end{proof}

We conclude this section with a few remarks. First, one can altogether avoid moments and 
the Laplace transform. The modified moments
$\widetilde{\mathcal{M}}(\bar{n}, \bar{N}, \bar{\tau})$ are an approximation to the transform
\begin{equation*}  \mathbb{E} \prod_{p=1}^k \left[ M^{1/3} \sum_{j} \left\{ \frac{\sin ({\alpha_p} \sqrt{- \lambda_j(s_p,t_p)})}{ \sqrt{- \lambda_j(s_p,t_p)}} 
+ (-1)^{n_p} \frac{\sin ({\alpha_p} \sqrt{- \lambda_j^{\!\sim}(s_p,t_p)})}{ \sqrt{- \lambda_j^{\!\sim}(s_p,t_p)}} \right\} \right]~, 
\end{equation*}
where $\lambda_j^\sim$ are the rescaled eigenvalues of $-H$. 

Thus, the process
$\AD_\beta$ is characterised by its transform
\begin{equation}\label{eq:defpsibeta}
\psi_\beta(\bar{\alpha}, \bar{s}, \bar{t}) =
\mathbb{E} \prod_{p=1}^k  
\sum_{j=1}^\infty  \frac{\sin ({\alpha_p} \sqrt{- \lambda_j(s_p,t_p)})}{ \sqrt{- \lambda_j(s_p,t_p)}}
\end{equation}
defined in terms of the functions $\psi_\beta^\#$ of (\ref{eq:defpsi1sh}),(\ref{eq:defpsi2sh})
via the relation
\begin{equation}\label{eq:sh}
\sum_{I \subset \{1, \cdots, k\}} \psi_\beta(\bar{\alpha}|_I, \bar{s}|_I, \bar{t}|_I)
\psi_\beta(\bar{\alpha}|_{I^c}, \bar{s}|_{I^c}, \bar{t}|_{I^c}) = \psi_\beta^\#(\bar{\alpha}, \bar{s}, \bar{t})~.
\end{equation}
However, due to oscillations, the justification of convergence requires some effort, and so does 
the proof that the transform above determines the random process uniquely (see \cite{band}, 
where this strategy was carried out for a family of point processes without parameter dependence). 
Therefore we opted to return from modified moments to usual ones before passing to the
limit.

We recall that the formul{\ae} (\ref{eq:phipsi}) allow to go back from $\psi_\beta^\#$ to the
Laplace transforms $\phi_\beta^\#$.

Finally, the $k$-diagrams from the proof
of Lemma~\ref{l:sprop1} are in one to one correspondence with homotopy distinct ways to glue
$k$ discs with a marked point on the boundary into a manifold. The parameter $s$ associated to a 
diagram is related to the Euler characteristic $\chi$ of the manifold by the formula $s + \chi = 2k$.
The manifold corresponding to the diagram of Figure~\ref{fig:diag1} (centre) corresponds to
a torus glued from a disc, whereas the diagram of Figure~\ref{fig:diag2} (left) corresponds
to the sphere glued from two discs. 

Related topological expansions of Airy processes (for the case $\beta = 2$ and a single 
random matrix) appear in the work of  Okounkov \cite{Ok} on random permutations.

\section{Continuity estimates}\label{s:conv}

In this section, we consider a time-dependent random
matrix $\ltau{H}^{(N)}$ satisfying A1), A2), A3$_\beta$),
and the unimodularity assumption (\ref{eq:unimod}).
We prove the following two lemmata.

\begin{lemma}\label{l:cont-s} Let $n_1 \leq n_2 \leq \cdots  \leq n_{2k} = O(M^{1/3})$, $M/2 \leq N \leq 2M$, $\tau_1, \tau_2 = O(M^{-1/3})$. Then
\begin{multline}\label{eq:cont-s} \mathbb{E} \prod_{p=1}^k \left[ \tr P_{n_p}^{(N)}( ^{(\tau_1)}H^{(N)}  ) - \tr P_{n_p}^{(N)}( ^{(\tau_2)}H^{(N)}) \right] 
\\ \leq (C |\tau_1 - \tau_2|^k \sqrt{n_1n_2\cdots n_{2k}} + o(1)) M^{2k/3}~. 
\end{multline}
\end{lemma}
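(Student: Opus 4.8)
The plan is to run the path expansion of Section~\ref{s:prop} on the product of differences and to extract the cancellation the differences produce. Abbreviate $H_a:={}^{(\tau_a)}H^{(N)}$ ($a=1,2$). By the Chebyshev identity (\ref{eq:cheb}) and the unimodularity (\ref{eq:unimod}), each $\tr P_{n_p}^{(N)}(H_a)$ equals $(N-2)^{-n_p/2}\sum_{\gamma_p}\prod_{e\in\gamma_p}H_a(e)$, summed over closed non-backtracking loopless paths $\gamma_p$ of length $n_p$ in $\{1,\dots,N\}$. Expanding the product of the $2k$ bracketed factors (indexed by $n_1\le\cdots\le n_{2k}$), the left-hand side of (\ref{eq:cont-s}) becomes
\[
\Big(\prod_{p=1}^{2k}(N-2)^{-n_p/2}\Big)\sum_{\gamma_1,\dots,\gamma_{2k}}\ \sum_{S\subseteq\{1,\dots,2k\}}(-1)^{|S^c|}\,\mathbb{E}\Big[\prod_{p\in S}\prod_{e\in\gamma_p}H_1(e)\prod_{p\in S^c}\prod_{e\in\gamma_p}H_2(e)\Big].
\]
By A1) only even tuples contribute, and by Lemma~\ref{l:fs} (applied to each of the finitely many $S$) the tuples in which some edge of $K_\infty^+$ is traversed more than twice contribute $o(M^{2k/3})$; so I would restrict to tuples in which every edge is traversed exactly twice.

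The crux is the inner sum over $S$. Fix such a doubled tuple, and let $e$ be an edge traversed by the $p$-th and the $q$-th path. Using A3$_\beta$) and $|\tau_a|=O(M^{-1/3})\to0$, the factor contributed by $e$ to the $S$-term equals $1$ when $p,q$ lie on the same side of $S$ and $1-|\tau_1-\tau_2|+o(|\tau_1|+|\tau_2|)$ when they lie on opposite sides (for $\beta=2$ the edge must moreover be traversed in opposite orientations, restricting us to orientable diagrams as in the proof of Lemma~\ref{l:sprop1}; this does not affect the $S$-sum). Writing the opposite-side factor as $1-\delta_e$, $\delta_e=|\tau_1-\tau_2|+o(|\tau_1|+|\tau_2|)$, and expanding $\prod_e(1-\delta_e\,\ind[e\text{ straddles }S])$, the sum over $S$ becomes $\sum_T(-1)^{|T|}\big(\prod_{e\in T}\delta_e\big)\sum_{S:\,T\text{ straddles }S}(-1)^{|S^c|}$, with $T$ ranging over sets of doubled edges. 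To each $T$ attach the graph $\Gamma_T$ on $\{1,\dots,2k\}$ having one edge $p\sim q$ per $e\in T$ traversed by $\gamma_p$ and $\gamma_q$. Splitting off the set $W_0$ of vertices isolated in $\Gamma_T$, the $W_0$-contribution to the inner sum is $\sum_{S_0\subseteq W_0}(-1)^{|W_0|-|S_0|}=(1-1)^{|W_0|}$, which vanishes unless $W_0=\emptyset$. Hence only $T$ covering all $2k$ indices survive, forcing $|T|\ge k$ and producing the factor $|\tau_1-\tau_2|^k=\prod_{e\in T}\delta_e\,(1+o(1))$; the boundary case $|T|=k$ is exactly $T$ a perfect matching of $\{1,\dots,2k\}$ inside $\Gamma$, and then the residual $S$-sum is $\pm2^k$.

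The remaining counting is crude. For a fixed doubled tuple the number of perfect matchings of the coincidence graph $\Gamma$ is at most $\prod_{p=1}^{2k}\sqrt{\deg_\Gamma(p)}\le\prod_{p=1}^{2k}\sqrt{n_p}=\sqrt{n_1\cdots n_{2k}}$ by the Kahn--Lov\'asz bound (elementary in the dominant case where $\Gamma$ is a union of parallel-edge bundles pairing $\gamma_{2i-1}$ with $\gamma_{2i}$, in which it reads $\prod_i(n_{2i-1}\wedge n_{2i})\le\sqrt{n_1\cdots n_{2k}}$), since a path shares at most $n_p$ edges with the others. Pulling $\sqrt{n_1\cdots n_{2k}}$ out uniformly and summing the residual weight $\prod_p(N-2)^{-n_p/2}$ over all doubled tuples — which is $O(M^{2k/3})$ by Lemma~\ref{l:sprop1} (or the first part of Lemma~\ref{l:fs}) — and observing that the $|T|>k$ terms carry a further $|\tau_1-\tau_2|^{|T|-k}=o(1)$ not outweighed by the combinatorics, I would arrive at the bound $(C|\tau_1-\tau_2|^k\sqrt{n_1\cdots n_{2k}}+o(1))M^{2k/3}$, the $o(1)$ absorbing the over-twice-traversed tuples, the $|T|>k$ terms, and the A3$_\beta$) error terms.

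I expect the main obstacle to be the combinatorial bookkeeping of the last two steps: verifying that the alternating $S$-sum genuinely annihilates every term not built on a perfect matching of the $2k$ paths (uniformly in $\beta$ and under the orientation constraint), that the surviving weight is of order $\sqrt{n_1\cdots n_{2k}}\,M^{2k/3}$ and no larger, and in particular that the A3$_\beta$) errors — only $o(|\tau_1|+|\tau_2|)$ rather than $o(|\tau_1-\tau_2|)$ — can be absorbed into $o(1)M^{2k/3}$ after $M\to\infty$ (say by first handling $|\tau_1-\tau_2|$ bounded below by a fixed multiple of $M^{-1/3}$). All of this is a variation on the estimates already established in Lemmata~\ref{l:sprop1} and~\ref{l:fs}.
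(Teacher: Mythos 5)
Your proposal is correct and follows essentially the same route as the paper's proof: expand via the Chebyshev path identity (\ref{eq:cheb}), discard tuples with an edge traversed more than twice using Lemma~\ref{l:fs}, and observe that for the remaining doubled tuples the alternating sum over sign patterns annihilates everything except terms carrying at least $k$ factors of $|\tau_1-\tau_2|$ (one per pair of paths in an edge cover), whose count is controlled by $\sqrt{n_1\cdots n_{2k}}$, the rest being bounded by the plain modified moment $O(M^{2k/3})$. The only differences are cosmetic: the paper first reduces to $n_1=\cdots=n_{2k}$ by Cauchy--Schwarz and compresses the cancellation step into one line, whereas you keep general $n_p$ and spell out the inclusion--exclusion.
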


Note that the $o(1)$ term tends to zero as $M \to \infty$; however, for fixed $M$,
it does not tend to zero with $\tau_1 - \tau_2$. For, say, $k=2$, an estimate without the $o(1)$ term 
is presumably true for the Dyson Brownian motion
(Example~1 of the proem) but definitely not for the resampled Wigner matrices
(Example~2), since it would imply the existence of a continuous modification 
before the limit.

\begin{lemma}\label{l:cont-t}
Let $n_1 \leq n_2 \leq \cdots  \leq n_{2k} = O(N^{1/3})$, $N_1, N_2 = M + O(M^{2/3})$. Then
\begin{multline}\label{eq:cont-t}\mathbb{E} \prod_{p=1}^{2k}
\left[ \tr P_{n_p}^{(N_1)}( ^{(\tau)}H^{(N_1)}  ) - \tr P_{n_p}^{(N_2)}( ^{(\tau)}H^{(N_2)}) \right]
\\ \leq C \frac{|N_1 - N_2|^k}{M^k} \sqrt{n_1n_2\cdots n_{2k}}  M^{2k/3}~. 
\end{multline}
\end{lemma}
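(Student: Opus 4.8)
\textbf{Plan of the proof of Lemma~\ref{l:cont-t}.}
The plan is to mimic the combinatorial analysis of Lemma~\ref{l:sprop1}, but now tracking the cancellation between the two submatrix sizes $N_1$ and $N_2$. Expanding each factor $\tr P_{n_p}^{(N_1)}(\,^{(\tau)}H^{(N_1)})-\tr P_{n_p}^{(N_2)}(\,^{(\tau)}H^{(N_2)})$ via the identity (\ref{eq:cheb}), the product becomes a sum over $2k$-tuples of closed non-backtracking loopless paths, where the $p$-th path carries a sign and, crucially, its vertices are constrained to $\{1,\dots,N_{\epsilon_p}\}$ for a choice $\epsilon_p\in\{1,2\}$. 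By A1) only even $2k$-tuples contribute, and by the first part of Lemma~\ref{l:fs} (in the form extended to differing $N$'s by the argument of Section~\ref{s:sosh}) we may restrict to $2k$-tuples in which every edge is traversed exactly twice, up to a negligible remainder; this is what puts us back in the world of $2k$-diagrams $\mathcal{D}$ with $s\ge 2k$ vertices $\tfrac{2s}{1}$... i.e.\ $2s$ vertices and $3s-2k$ edges.

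Next I would evaluate the contribution of a fixed diagram $\mathcal{D}$ as in the proof of Lemma~\ref{l:sprop1}, but now summing over the $2^{2k}$ sign/size assignments $\bar\epsilon=(\epsilon_1,\dots,\epsilon_{2k})$. For a fixed $\bar\epsilon$ the contribution has the same shape as before: a prefactor $A(\bar\ell)$, a vertex-counting factor $B(\bar\ell)$, and a correlation factor $C(\bar\ell)$ (which here does not depend on $\bar\epsilon$ since $\tau$ is common). The key point is that each vertex $v$, lying on paths with indices in some set $S_v\subset\{1,\dots,2k\}$, is counted with multiplicity $\min\{N_{\epsilon_p}: p\in S_v\}$, and, more importantly, each edge $e$ traversed by paths $p_\pm(e)$ contributes a factor
\[
\frac{\left(N_{\epsilon_{p_+(e)}}\wedge N_{\epsilon_{p_-(e)}}\right)^{\ell(e)}}{\left(N_{\epsilon_{p_+(e)}}N_{\epsilon_{p_-(e)}}\right)^{\ell(e)/2}}
=\left(1-\frac{|N_{\epsilon_{p_+(e)}}-N_{\epsilon_{p_-(e)}}|}{M}\right)^{\ell(e)/2}+o(1)~,
\]
exactly as in Lemma~\ref{l:sprop1} with $t_p$ replaced by $\pm|N_1-N_2|/(2M)$ according to $\epsilon_p$. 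Performing the alternating sum over the $2^{2k}$ signs, the contribution of $\mathcal{D}$ becomes (after replacing the Riemann sum by an integral over $\Delta_\mathcal{D}(\bar\alpha)$) a product over the $2k$ discs of a difference operator applied in each ``time'' variable, i.e.\ essentially
\[
M^{2k/3}\!\!\!\int\limits_{\Delta_\mathcal{D}(\bar\alpha)}\!\!\!d\bar w\;\prod_{p=1}^{2k}\Big(1-e^{-(|N_1-N_2|/2M)\sum_{e\ni p}w(e)}\Big)\cdot(\pm1)\cdots~.
\]
Each factor is $O\big(\tfrac{|N_1-N_2|}{M}\sum_{e\ni p}w(e)\big)$, so pairing up the $2k$ factors and using the constraint $\sum_{e}c_p(e)\ell(e)=n_p$ (which bounds $\sum_{e\ni p}w(e)$ by $\alpha_p/2$ up to constants on the relevant scale) gives a bound $C\,(|N_1-N_2|/M)^{k}\,\prod_p\sqrt{\alpha_p}\,M^{2k/3}$ after collecting the two halves; here the extra power $k$ rather than $2k$ of $|N_1-N_2|/M$ comes from the fact that only $k$ independent differences survive once one groups the $2k$ paths into $k$ pairs via the diagram structure, and the $\sqrt{n_1\cdots n_{2k}}$ arises exactly as the product of the $\sqrt{\alpha_p}$'s from the volumes of the polytopes $\Delta_\mathcal{D}$.

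\textbf{Main obstacle.} The delicate step is summing the resulting bound over \emph{all} diagrams $\mathcal{D}$ and over all index tuples $\bar n=O(M^{1/3})$ while keeping the constant $C$ uniform: one needs the series $\sum_\mathcal{D} I^\mathcal{D}$-type sum to converge even after each summand is differentiated in the $t$-variables, which is where the first (crude, exponential) bound of Lemma~\ref{l:fs} must be invoked to dominate the tail $s\gg 2k$ and the regime $\ell(e)\lesssim M^{1/3-\delta}$. Concretely, I expect the bookkeeping of \emph{which} $k$ of the $2k$ difference factors one extracts the gain from — so as to land on the stated $|N_1-N_2|^k$ and not something weaker — to be the part requiring the most care; the natural route is to order $n_1\le\cdots\le n_{2k}$, pair path $2j-1$ with path $2j$, extract one factor $|N_1-N_2|/M$ from each pair, and bound the remaining factor of that pair trivially by $1$, which is exactly why the hypothesis is phrased with the $n_p$ in increasing order and why $\sqrt{n_1n_2\cdots n_{2k}}$ (rather than, say, $n_k^k$) appears on the right-hand side.
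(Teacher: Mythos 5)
Your proposal takes a genuinely different route from the paper's, and it contains a gap that is fatal for the way the lemma is later used.

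The paper avoids the diagram expansion here entirely. Writing $A_p = \tr P_{n_p}^{(N_1)}({}^{(\tau)}H^{(N_1)}) - \tr P_{n_p}^{(N_2)}({}^{(\tau)}H^{(N_2)})$, it decomposes $A_p = A_p(1) + A_p(2)$, where
\[
A_p(1) = \Bigl[1-\tfrac{(N_1-2)^{n_p/2}}{(N_2-2)^{n_p/2}}\Bigr]\tr P_{n_p}^{(N_1)}\bigl({}^{(\tau)}H^{(N_1)}\bigr)
\]
is an explicit scalar of size $O(n_p|N_1-N_2|/M)$ times a single trace, and $A_p(2)$ is, after rewriting via (\ref{eq:cheb}), the part of the path sum (with the common prefactor $(N_2-2)^{-n_p/2}$) coming from paths that visit at least one \emph{special} vertex in $\{1,\dots,N_1\vee N_2\}\setminus\{1,\dots,N_1\wedge N_2\}$. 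Moments of $A_p(1)$ are controlled by the crude exponential bound of Lemma~\ref{l:fs}; moments of $A_p(2)$ are controlled by counting special vertices, a $2k$-tuple with $\ell$ special vertices contributing a factor $(Cn|N_1-N_2|/M)^\ell$, with the borderline case $\ell=1$ handled by an extra observation. Crucially, each step is an honest inequality — no asymptotic approximation is involved.

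The gap in your argument: unlike Lemma~\ref{l:cont-s}, the statement of Lemma~\ref{l:cont-t} carries \emph{no} additive $o(1)$ term, and this is essential. As the paper remarks in the proof of Theorem~\ref{thm:conv}, that argument cannot tolerate $o(1)$ terms, and the reason Lemma~\ref{l:cont-t} is usable there is precisely that its right-hand side vanishes with $|N_1-N_2|$ at fixed $M$. Your proposal discards a ``negligible remainder'' when restricting to $2k$-tuples that reduce to diagrams, and replaces the vertex count $B(\bar\ell)$ by a $(1+o(1))$ approximation. These errors are small only in the limit $M\to\infty$; for fixed $M$ they are not controlled by $|N_1-N_2|$, and they are not annihilated by the alternating sum over $\bar\epsilon$ (that cancellation applies to the leading term, not to the error). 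When $|N_1-N_2|$ is small, the claimed bound is smaller than the discarded errors, and the argument collapses.

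Secondly, the central cancellation step — that the $2k$-fold signed sum over $\bar\epsilon$ extracts a factor $(|N_1-N_2|/M)^k$ — is asserted rather than proved. The diagram integrand $\exp\{-\sum_e[\,\cdots\,]w(e)\}$ does not factorize as a product over the $2k$ paths (an edge couples the two paths that traverse it), so the displayed $\prod_p(1-e^{-\cdots})$ is not a valid identity for the alternating sum. Moreover, since the exponent depends on $\bar t$ only through $\ell_1$ distances $|t_p-t_q|$, which are Lipschitz but not differentiable on the diagonal, iterated finite differences are not automatically higher order in the step size; quantifying their actual order would require a careful analysis of the diagram's incidence structure — which is exactly the difficulty the paper's $A(1)+A(2)$ decomposition is designed to sidestep.
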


The proofs of the two lemmata are based on a combinatorial representation of the left-hand
side of (\ref{l:cont-s}), (\ref{l:cont-t}) which is derived from (\ref{eq:cpaths}).

\begin{proof}[Proof of Lemma~\ref{l:cont-s}]
We shall prove the lemma for $k = 2$; the general case requires no major changes.
According to the Cauchy--Schwarz inequality, it is sufficient
to consider the case $n_1 = n_2 = n_3 = n_4=n$. According to (\ref{eq:cheb}),
the left-hand side of (\ref{eq:cont-s}) is equal to
\begin{equation}\label{eq:chebdiff}
 (N-2)^{-2n} \mathbb{E} \sum \prod_{p=1}^4 \left\{ \prod_{i=1}^n \,^{(\tau_1)}\!H(u_{p,i}, u_{p,i+1})
- \prod_{i=1}^n \, ^{(\tau_2)}\!H(u_{p,i}, u_{p,i+1})\right\}~, 
\end{equation}
where the sum is over quadruples of closed non-backtracking paths. The second
part of Lemma~\ref{l:fs} implies that the contribution of quadruples in which
at least one edge is traversed more than twice can be absorbed in the $o(1)$ term
on the right-hand side of (\ref{eq:cont-s}). The contribution to (\ref{eq:chebdiff})
of a quadruple in which every edge is traversed exactly twice is equal to its contribution
to
\[ \mathbb{E} \prod_{p=1}^4 \left[ \tr P_{n_p}^{(N)}( ^{(\tau_1)}H^{(N)}  )  \right]~. \]
times
\[ (2 \left( 1 - (1 - |\tau_1 - \tau_2|)\right)^2 O(n^2) = O(|\tau_1 - \tau_2|^2 n^2)~.\]
An application of the first part of Lemma~\ref{l:fs} concludes the proof.
 \end{proof}

\begin{proof}[Proof of Lemma~\ref{l:cont-t}]
As in the proof of Lemma~\ref{l:cont-s}, we focus on $k = 2$, and we may
assume that $n_1 = n_2 = n_3 = n_4 = n$.  Denote
\[ A = \tr P_{n_p}^{(N_1)}( ^{(\tau)}H^{(N_1)}  ) - \tr P_{n_p}^{(N_2)}( ^{(\tau)}H^{(N_2)})~; \]
then $A = A(1) + A(2)$, where
\[ A(1) = \left[1 -  \frac{(N_1-2)^{n/2}}{(N_2 - 2)^{n/2}} \right] \tr P_{n}^{(N_1)}( ^{(\tau)}H^{(N_1)}  ) \]
and
\[ A(2) = \frac{(N_1-2)^{n/2}}{(N_2 - 2)^{n/2}}   \tr P_{n}^{(N_1)}( ^{(\tau)}H^{(N_1)}  )  - \tr P_{n}^{(N_2)}( ^{(\tau)}H^{(N_2)})~.\]
Due to the identity 
\[ \mathbb{E} A^4 = \sum_{\iota=1}^4 \binom{4}{\iota} \mathbb{E} A(1)^\iota A(2)^{4 - \iota} \]
and the Cauchy--Schwarz inequality, it is sufficient to show that $\mathbb{E}A(1)^4$
and $\mathbb{E} A(2)^4$ are bounded by the right-hand side of (\ref{eq:cont-t}).

First consider $\mathbb{E} A(1)^4$:
\[\begin{split}
A(1)
&= \left\{ 1 - \left[ 1 + \frac{N_2 - N_1}{N_2 -  2}\right]^{\frac{n}{2}}\right\} 
\tr P_n^{(N_1)} \left( \ltau{H}^{(N_1)} \right) \\
&= O\left(\frac{n}{M}(N_2 - N_1)\right) \tr P_n^{(N_1)} \left( \ltau{H}^{(N_1)} \right)~,
\end{split}\]
hence by the first part of Lemma~\ref{l:fs}
\[\begin{split}
\mathbb{E} A(1)^4 
&\leq \frac{Cn^4}{M^4} (N_2 - N_1)^4 n^4 \\
&\leq \frac{C'n^2}{M^2} (N_2 - N_1)^2 M^{4/3} \frac{n^6 (N_2 - N_1)^2}{M^{10/3}} \\
&\leq \frac{C''n^2}{M^2} (N_2 - N_1)^2 M^{4/3} 
 \end{split}\]

To estimate $\mathbb{E} A(2)^4$, we infer from (\ref{eq:cheb}) that
\[ A(2) = \pm \frac{1}{(N_2-2)^{n/2}} \sum \prod_{i=1}^n \ltau{H}(u_{i}, u_{i+1})~, \]
where the sum is over closed non-backtracking paths without loops $u_0 u_1 \cdots u_n$
for which $u_i \in \{1, \cdots, N_1 \vee N_2\}$ ($1 \leq i \leq n$), and at least one $u_i$ is special,
i.e.\ does not lie in
$\{1, \cdots, N_1 \wedge N_2\}$. Raising to the fourth power and taking expectation,
we  have:
\[ \mathbb{E} A(2)^4 = \sum  \mathbb{E} \prod_{p=1}^{4} \left[ (N_2-2)^{-n_2/2} {\prod_{i=1}^{n}} {^{(\tau)}}\!H(u_{p, i}, u_{p, i+1}) \right]~, \]
where the sum is over even $4$-tuples of paths 
of closed non-backtracking paths without loops
\[ u_{1,0}u_{1,1}u_{1,2}\cdots u_{1,n_1}; u_{2,0}u_{2,1}u_{2,2}\cdots u_{2,n_2};  \cdots; u_{4,0}u_{4,1}u_{4,2}\cdots u_{4,n_4}\]
in which $u_{p,i} \in \{1, \cdots, N_1 \vee N_2\}$, and
every one of the four paths contains at least
one special vertex.

Let us divide such $4$-tuples of paths according to the 
number $\ell$ of special vertices. 
The total contribution of paths with a given $\ell$
to $\mathbb{E}A(2)^4$ is at most 
\[ (C n |N_1-N_2| / M)^\ell \]
times the value of
\begin{equation}\label{eq:comp}
 \mathbb{E} \left[ \tr P_{n}^{(N_1\vee N_2)}( ^{(\tau)}H^{(N_1\vee N_2)}  ) \right]^4~;
 \end{equation}
here $C n$ bounds the number of ways to choose the location of the special vertices on
the paths, and $C |N_1-N_2|/M$ bounds the fraction of all special vertices (among the
possible vertices in $\{1, \cdots, N_1 \vee N_2\}$).
This yields the desired estimate for the contribution
of $4$-tuples of paths with $\ell \geq 2$. 

The  $4$-tuples of paths with $\ell=1$ ought to have at least one vertex common to the $4$ paths.
The contribution of such paths does not exceed
\[ C n^2 |N_1 - N_2| / M^2 \]
times the value of (\ref{eq:comp}) (cf.\ the combinatorial estimates of \cite[Part~II]{FS}),
and is therefore even smaller.
\end{proof}

\section{Point processes and line ensembles}\label{s:top}

\paragraph{Point collections} Consider the space $P$ of (semi-bounded) point collections
\[ P = \{ \Lambda = (\lambda_1 \geq \lambda_2 \geq \lambda_3 \cdots \geq - \infty) \} \]
equipped with the topology given by
\[ U_{k,\epsilon}(\Lambda) = \left\{ \widetilde{\Lambda} \in P \, \mid \, \widetilde{\lambda}_j \in U_\epsilon(\lambda_j) \quad
(1 \leq j \leq k)\right\} \quad (k \geq 1, \epsilon > 0)~,\]
where
\begin{equation}\label{eq:neigh}
U_\epsilon(a) = \begin{cases}
\left\{ b \in \mathbb{R} \, \mid \, |b-a|<\epsilon \right\}~, &a > -\infty \\
\left\{ b \in \mathbb{R} \,  \mid \, b < - 1/\epsilon \right\} \cup \{-\infty\}~, & a = -\infty
\end{cases}~.
\end{equation}
A finite collection of points
$\lambda_1 \geq \cdots \geq \lambda_N$ can be viewed as an element of $P$ by
setting
\[ \lambda_{N+1} = \lambda_{N+2} = \cdots = -\infty~. \]

The first lemma is a continuity property of the Laplace transform.
\begin{lemma}\label{l:1} Let $_M\Lambda$ ($M = 1,2,\cdots$) and $\Lambda$ be point collections in $P$.
If $_M\Lambda \to \Lambda$ in $P$ as $M \to \infty$ and
$\sum_j \exp(_M\lambda_j) \leq C$, then for every $\alpha > 1$
\[ \sum_j \exp(\alpha\, _M\!\lambda_j) \to \sum_j \exp(\alpha \lambda_j)~, \quad M \to \infty~. \]
\end{lemma}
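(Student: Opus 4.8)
The plan is to combine the convergence of the first $k$ coordinates with the uniform summability hypothesis, which forces the tail of $(_M\lambda_j)_j$ to be uniformly negligible. First I would fix $\alpha>1$ and $\epsilon>0$, and use $\sum_j \exp(_M\lambda_j)\le C$ to locate a threshold: since each term $\exp(_M\lambda_j)$ is at most $C$, the number of indices $j$ with $_M\lambda_j \ge -A$ is at most $Ce^{A}$, uniformly in $M$; more to the point, $\sum_{j> k}\exp(_M\lambda_j)$ can be made small by choosing $k$ large, uniformly in $M$, because if $_M\lambda_k \ge -A$ then $\exp(_M\lambda_k)\ge e^{-A}$ and there can be at most $Ce^{A}$ such indices, so $_M\lambda_k \to -\infty$ as $k\to\infty$ uniformly in $M$. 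Hence for any $\delta>0$ there is $k_0$ such that $_M\lambda_j < -\delta^{-1}$ for all $j\ge k_0$ and all $M$, which by \eqref{eq:neigh} also passes to the limit: $\lambda_j \le -\delta^{-1}$ for $j \ge k_0$.

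Next I would control the tail of the $\alpha$-transform. For $j\ge k_0$ we have $_M\lambda_j < -\delta^{-1}$, so
\[
\sum_{j\ge k_0}\exp(\alpha\,{}_M\lambda_j)
= \sum_{j\ge k_0}\exp((\alpha-1)\,{}_M\lambda_j)\exp({}_M\lambda_j)
\le e^{-(\alpha-1)/\delta}\sum_{j\ge k_0}\exp({}_M\lambda_j)
\le C\,e^{-(\alpha-1)/\delta}~,
\]
which is small for $\delta$ small, uniformly in $M$; the same bound holds for the limiting collection $\Lambda$ by the same computation (using $\sum_j \exp(\lambda_j)\le C$, which follows from Fatou applied along the convergent coordinates, or simply from $\lambda_j \le -\delta^{-1}$ for $j\ge k_0$ together with boundedness of the finitely many leading terms). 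For the finitely many head terms $j< k_0$, the convergence $_M\lambda_j \to \lambda_j$ in the topology \eqref{eq:neigh} gives $\exp(\alpha\,{}_M\lambda_j)\to\exp(\alpha\lambda_j)$ directly: if $\lambda_j>-\infty$ this is continuity of $x\mapsto e^{\alpha x}$, and if $\lambda_j=-\infty$ then $_M\lambda_j\to-\infty$ and both sides tend to $0$.

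Finally I would assemble the estimate: given $\epsilon>0$, pick $\delta$ so that $2Ce^{-(\alpha-1)/\delta}<\epsilon/2$ and the corresponding $k_0$; then for $M$ large enough the head satisfies $\big|\sum_{j<k_0}\exp(\alpha\,{}_M\lambda_j)-\sum_{j<k_0}\exp(\alpha\lambda_j)\big|<\epsilon/2$, and the two tails are each $<\epsilon/2$ in the relevant combination, giving $\big|\sum_j\exp(\alpha\,{}_M\lambda_j)-\sum_j\exp(\alpha\lambda_j)\big|<\epsilon$. The only mildly delicate point is the uniform-in-$M$ tail control, i.e.\ extracting $_M\lambda_j\to-\infty$ uniformly from the single bound $\sum_j\exp(_M\lambda_j)\le C$; everything else is routine. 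I expect no serious obstacle here — this is the standard ``dominated convergence with a uniform tail'' argument adapted to the topology on $P$.
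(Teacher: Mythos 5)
Your proposal is correct and follows essentially the same argument as the paper: extract a uniform tail bound from the hypothesis $\sum_j \exp({}_M\lambda_j) \leq C$ via the factorization $\exp(\alpha x) = \exp((\alpha-1)x)\exp(x)$, control the finite head by coordinate-wise continuity in the topology of $P$, and combine. The only notational difference is that the paper parametrizes by the cutoff level $R$ (with head length $Ce^R$), whereas you parametrize by the head length $k_0$ directly; these are interchangeable.

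One small remark: your parenthetical alternative for why $\sum_j \exp(\lambda_j)\leq C$ — ``simply from $\lambda_j \le -\delta^{-1}$ for $j\ge k_0$ together with boundedness of the finitely many leading terms'' — does not by itself yield summability (those conditions are consistent with, say, $\lambda_j = -\delta^{-1}$ for all large $j$). The Fatou route you list first is the correct one (apply Fatou to the partial sums along the coordinatewise convergence), so the proof stands, but the ``or simply'' clause should be dropped.
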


\begin{proof}
The assumption $\sum_j \exp(_M\lambda_j) \leq C$ implies
\[ \# \left\{ j \, \mid \, _M\lambda_j \geq -R \right\}
 \leq \sum_{_M\lambda_j \geq - R} \exp(_M\lambda_j+R) \leq Ce^R~.
\]
Hence
\[ _M\lambda_j < R \quad (j > Ce^R) \]
and
\[ \lambda_j \leq R \quad (j > Ce^R)~. \]
Next,
\[\begin{split}
\sum_{_M\lambda_j \leq -R} \exp(\alpha \, _M\!\lambda_j)
    &\leq \sum_{_M\lambda_j \leq -R} \exp(_M\lambda_j) \exp(-(\alpha-1)R) \\
    &\leq \sum_j \exp(_M\lambda_j) \exp(-(\alpha-1)R) \leq C e^{-(\alpha-1)R}~,
  \end{split}\]
and also
\[ \sum_{\lambda_j \leq -R} \exp(\alpha \lambda_j) \leq C e^{-(\alpha-1)R}~.\]
For any $\epsilon > 0$, one can choose $R>0$ so that $Ce^{-(\alpha-1)R}<\epsilon/3$,
and $M_0$ such that for $M \geq M_0$
\[ \left|e^{\alpha \,_M\!\lambda_j}- e^{\alpha \lambda_j}\right| \leq \frac{\epsilon}{3Ce^R} \quad (1 \leq j \leq Ce^R)~.\]
This concludes the proof.
\end{proof}

The next lemma asserts that the Laplace transform is one-to-one.
\begin{lemma}\label{l:2}
Let $\Lambda, \widetilde\Lambda \in P$ be such that
\[ \sum_j \exp(\alpha \lambda_j) = \sum_j \exp(\alpha \widetilde\lambda_j) < \infty  \quad \left(\alpha \in (1, \infty) \cap \mathbb{Q}\right)~. \]
Then $\Lambda = \widetilde\Lambda$.
\end{lemma}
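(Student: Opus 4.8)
The statement is that the map $\Lambda \mapsto \big(\alpha \mapsto \sum_j e^{\alpha\lambda_j}\big)$, restricted to rational $\alpha \in (1,\infty)$, is injective on $P$. The key observation is that for $\Lambda \in P$ with $\sum_j e^{\alpha\lambda_j} < \infty$ for some $\alpha_0 > 1$, the function $F_\Lambda(\alpha) = \sum_j e^{\alpha\lambda_j}$ is a genuine analytic function of $\alpha$ on the half-plane $\{\mathrm{Re}\,\alpha > \alpha_0\}$ (or at least on $(\alpha_0,\infty)$), so knowing it on a set with an accumulation point — in particular on $(1,\infty) \cap \mathbb{Q}$, together with $\alpha_0$ arbitrarily close to $1$ — determines it on all of $(1,\infty)$. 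Hence it suffices to recover $\Lambda$ from the values of $F_\Lambda(\alpha)$ for all real $\alpha > 1$.

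To recover $\Lambda$ from $F_\Lambda$ on a neighbourhood of $+\infty$, I would extract the points one at a time by looking at the exponential growth rate. Set $\lambda_1 = \lim_{\alpha \to \infty} \tfrac{1}{\alpha} \log F_\Lambda(\alpha)$: indeed, if $\lambda_1$ is the largest point and $r \ge 1$ is its multiplicity, then $F_\Lambda(\alpha) = r e^{\alpha\lambda_1}(1 + o(1))$ as $\alpha \to +\infty$ (the contribution of the remaining points is $O(e^{\alpha\lambda_2})$ with $\lambda_2 < \lambda_1$, and this is dominated using the convergence of $F_\Lambda(\alpha_0)$ to control the tail). So $\lambda_1$ and then $r = \lim_{\alpha\to\infty} e^{-\alpha\lambda_1}F_\Lambda(\alpha)$ are determined. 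Subtract $r e^{\alpha\lambda_1}$ from $F_\Lambda$ and repeat to obtain $\lambda_{r+1}$, its multiplicity, and so on, recovering all finite points in decreasing order; if at some stage the remainder is identically zero, the remaining points are all $-\infty$. This shows $F_\Lambda = F_{\widetilde\Lambda}$ on $(1,\infty)$ forces $\Lambda = \widetilde\Lambda$.

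Assembling: given the hypothesis that $F_\Lambda(\alpha) = F_{\widetilde\Lambda}(\alpha)$ for all rational $\alpha > 1$ and both sides finite, fix any $\alpha_0 \in (1,\infty) \cap \mathbb{Q}$; both $F_\Lambda$ and $F_{\widetilde\Lambda}$ are finite and analytic for $\alpha > \alpha_0$, agree on the rationals there, hence agree for all real $\alpha > \alpha_0$; letting $\alpha_0 \downarrow 1$ gives agreement on all of $(1,\infty)$, and then the extraction procedure above yields $\Lambda = \widetilde\Lambda$.

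**Main obstacle.** The only genuinely delicate point is justifying the asymptotics $F_\Lambda(\alpha) = r e^{\alpha\lambda_1}(1+o(1))$ and, more generally, that the tail $\sum_{j > n} e^{\alpha\lambda_j}$ is negligible compared to $e^{\alpha\lambda_n}$ as $\alpha \to \infty$ — one must handle the case where the $\lambda_j$ accumulate and there is no spectral gap, which is exactly the bookkeeping already done in the proof of Lemma~\ref{l:1}: from $\sum_j e^{\alpha_0\lambda_j} \le C$ one gets $e^{(\alpha - \alpha_0)\lambda_j} \le e^{(\alpha-\alpha_0)\lambda_n}$ for $j \ge n$, whence $\sum_{j>n} e^{\alpha\lambda_j} \le e^{(\alpha-\alpha_0)\lambda_n}\sum_{j>n}e^{\alpha_0\lambda_j}$, and the inner sum tends to $0$ as $n \to \infty$. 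Choosing $n$ so that this inner sum is small and then sending $\alpha \to \infty$ makes the tail subdominant. Everything else is routine.
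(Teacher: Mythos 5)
Your proof is correct, but it takes a genuinely different route from the paper's. The paper's argument converts the problem into one about finite measures: it multiplies each atom by $e^{2\lambda_j}$ to form the finite measure $\sum_j e^{2\lambda_j}\delta_{\lambda_j}$, observes that the hypothesis pins down its Laplace transform on $(-1,1)\cap\mathbb{Q}$, extends by analyticity to the strip $|\Re\alpha|<1$ and in particular to the imaginary axis, and then invokes Fourier inversion for finite measures to conclude the measures (hence the point collections) coincide. You instead recover the points directly and inductively from the real-variable asymptotics of $F_\Lambda(\alpha)$ as $\alpha\to+\infty$: the top point is the exponential growth rate, its multiplicity is the limit of $e^{-\alpha\lambda_1}F_\Lambda(\alpha)$, and one peels off atoms one at a time, with the tail controlled exactly as in Lemma~\ref{l:1}. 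Your approach is more elementary — it avoids Fourier analysis entirely and only uses that a decreasing summable configuration has a finite-multiplicity top point, so a spectral gap $\lambda_{r+1}<\lambda_1$ is automatic — at the cost of an induction and some bookkeeping; the paper's is a one-line appeal to a classical uniqueness theorem. Both arguments share the preliminary step of upgrading agreement on $(1,\infty)\cap\mathbb{Q}$ to agreement on an interval (continuity alone already suffices for this, as your sum converges locally uniformly for $\Re\alpha>\alpha_0$), which the paper leaves implicit.
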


\begin{proof}
The Laplace transforms of the finite measures
\begin{equation}\label{eq:multbyexp} \sum \exp(2 \lambda_j) \delta_{\lambda_j}~, \quad 
\sum \exp(2 \widetilde\lambda_j) \delta_{\widetilde\lambda_j}
\end{equation}
coincide on $(-1, 1) \cap \mathbb{Q}$,  hence on the entire strip $|\Re \alpha| < 1$,
particularly, on the imaginary axis. The inversion formula for
the Fourier transform of finite measures implies that the two measures (\ref{eq:multbyexp})
coincide, and therefore $\Lambda = \widetilde\Lambda$.
\end{proof}

By a standard sub-subsequence argument, Lemmata~\ref{l:1} and \ref{l:2} imply:
\begin{lemma}\label{l:3}
Let $_M\Lambda$ ($M = 1,2,\cdots$) be point collections in $P$ such that
\[ \sum_j \exp(_M\lambda_j) \leq C \]
and
\[ \sum_j \exp(\alpha \, _M\!\lambda_j) \to \phi(\alpha) < \infty \quad (\alpha \in (1, \infty) \cap \mathbb{Q})~. \]
Then there exists $\Lambda \in P$ such that $_M\Lambda \to \Lambda$ in $P$ and
\[ \sum_j \exp(\alpha \lambda_j) = \phi(\alpha) \quad (\alpha \in (1, \infty) \cap \mathbb{Q})~.\]
\end{lemma}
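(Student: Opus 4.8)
The plan is to combine a diagonal compactness extraction with the uniqueness supplied by Lemma~\ref{l:2}, in the standard sub-subsequence fashion. First I would record the a priori bound already observed in the proof of Lemma~\ref{l:1}: from $\sum_j \exp({_M\lambda_j}) \leq C$ one gets $\#\{ j \mid {_M\lambda_j} \geq -R\} \leq Ce^R$ for every $R>0$, hence ${_M\lambda_j} \leq \log(j/C)$ for all $M$ and all $j > C$. Thus, for each fixed $j$, the sequence $({_M\lambda_j})_M$ takes values in $\{ b \in \mathbb{R} \mid b \leq \log(j/C)\} \cup \{-\infty\}$, which is compact in the topology generated by the neighbourhoods $U_\epsilon$ of (\ref{eq:neigh}) (a bounded-above subset of $\mathbb{R}$ together with the point $-\infty$, the latter approached precisely when the values diverge to $-\infty$). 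By a diagonal argument over $j = 1, 2, \dots$ I pass to a subsequence (not relabelled) along which ${_M\lambda_j} \to \lambda_j$ in this topology for every $j$; since the inequality ${_M\lambda_j} \geq {_M\lambda_{j+1}}$ passes to the limit (including the case $\lambda_j = -\infty$), the limit $\Lambda = (\lambda_1 \geq \lambda_2 \geq \cdots)$ lies in $P$, and by the very definition of the basic neighbourhoods $U_{k,\epsilon}$ one has ${_M\Lambda} \to \Lambda$ in $P$ along this subsequence.

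Next I would identify the limit. Along the chosen subsequence, Lemma~\ref{l:1} applies with the uniform bound $\sum_j \exp({_M\lambda_j}) \leq C$ and yields $\sum_j \exp(\alpha\,{_M\lambda_j}) \to \sum_j \exp(\alpha \lambda_j)$ for every $\alpha > 1$; comparing with the hypothesis $\sum_j \exp(\alpha\,{_M\lambda_j}) \to \phi(\alpha)$ gives $\sum_j \exp(\alpha\lambda_j) = \phi(\alpha) < \infty$ for every rational $\alpha > 1$. In particular $\Lambda$ satisfies the asserted identity. Moreover, by Lemma~\ref{l:2}, any element of $P$ with these prescribed Laplace-transform values at rational $\alpha > 1$ must equal $\Lambda$; so the limit does not depend on which subsequence was extracted.

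Finally I would upgrade subsequential convergence to convergence of the full sequence by the standard sub-subsequence argument, using that the topology of $P$ is first countable (the sets $U_{k,1/n}$, $k,n \in \mathbb{N}$, form a countable neighbourhood base at each point, so sequential arguments suffice). If ${_M\Lambda} \not\to \Lambda$, there is a neighbourhood $U$ of $\Lambda$ and a subsequence with ${_M\Lambda} \notin U$; applying the extraction of the first paragraph to that subsequence produces a further subsequence converging in $P$ to some $\Lambda'$, and by the second paragraph $\Lambda'$ has the same Laplace-transform values as $\Lambda$, hence $\Lambda' = \Lambda$ by Lemma~\ref{l:2} — contradicting ${_M\Lambda} \notin U$ along that sub-subsequence. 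Therefore ${_M\Lambda} \to \Lambda$ in $P$ with $\sum_j \exp(\alpha\lambda_j) = \phi(\alpha)$ for all rational $\alpha > 1$, as claimed. The only genuinely delicate point is the first paragraph — checking that the coordinatewise one-point compactification is exactly the one induced by (\ref{eq:neigh}), that the diagonal extraction really produces convergence in the topology of $P$, and that monotonicity survives in the limit even at coordinates sent to $-\infty$; everything after that is bookkeeping layered on top of Lemmata~\ref{l:1} and \ref{l:2}.
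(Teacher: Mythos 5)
Your proof is correct and is exactly the argument the paper intends: the paper derives Lemma~\ref{l:3} from Lemmata~\ref{l:1} and \ref{l:2} by precisely the ``standard sub-subsequence argument'' (compactness extraction, identification of the limit via Lemma~\ref{l:1}, uniqueness via Lemma~\ref{l:2}) that you have written out in detail. No further comment is needed.
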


Finally, Lemma~\ref{l:3} implies
\begin{cor}\label{c:4}
For any $C > 0$,
\[ \left\{ \Lambda \in P \, \mid \, \sum_j \exp(\lambda_j) \leq C \right\}
\]
is compact in $P$.
\end{cor}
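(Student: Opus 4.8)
The plan is to deduce the statement from Lemma~\ref{l:3} by a sequential compactness argument. Write $K_C = \{\Lambda \in P \mid \sum_j \exp(\lambda_j) \le C\}$ for the set in question. First I would record that $P$ is metrizable: its topology is the one induced on $P$ by the inclusion $P \subset \prod_{j \ge 1}[-\infty,\infty)$, where the $j$-th factor carries the topology whose neighbourhoods are given by \eqref{eq:neigh}. Indeed, $U_{k,\epsilon}(\Lambda)$ is precisely the intersection of $P$ with the box $\bigl(\prod_{j=1}^k U_\epsilon(\lambda_j)\bigr)\times\prod_{j>k}[-\infty,\infty)$, and such boxes, as $k$ and $\epsilon$ vary, form a base for the product topology restricted to $P$. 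Each factor $[-\infty,\infty)$ is homeomorphic to a half-open subinterval of $\mathbb{R}$, hence metrizable, so the countable product, and therefore $P$ together with its subspace $K_C$, is metrizable; consequently it suffices to prove that $K_C$ is sequentially compact.

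So let $({}_M\Lambda)_{M\ge1}$ be a sequence in $K_C$. From $\exp({}_M\lambda_1)\le\sum_j\exp({}_M\lambda_j)\le C$ we get ${}_M\lambda_j\le\log C$ for every $j$, and hence, for each rational $\alpha>1$,
\[ \sum_j\exp(\alpha\,{}_M\!\lambda_j)\le\Bigl(\sup_j\exp\bigl((\alpha-1)\,{}_M\!\lambda_j\bigr)\Bigr)\sum_j\exp({}_M\!\lambda_j)\le C^{\alpha-1}\cdot C=C^{\alpha}<\infty, \]
uniformly in $M$. Enumerating $(1,\infty)\cap\mathbb{Q}=\{\alpha_1,\alpha_2,\dots\}$ and extracting a diagonal subsequence, I would pass to a subsequence $({}_{M_l}\Lambda)_l$ along which $\sum_j\exp(\alpha\,{}_{M_l}\!\lambda_j)$ converges to a finite limit $\phi(\alpha)$ for every rational $\alpha>1$. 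Since in addition $\sum_j\exp({}_{M_l}\!\lambda_j)\le C$, Lemma~\ref{l:3} then furnishes a $\Lambda\in P$ with ${}_{M_l}\Lambda\to\Lambda$ in $P$ and $\sum_j\exp(\alpha\lambda_j)=\phi(\alpha)$ for all rational $\alpha>1$.

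It remains only to check that $\Lambda\in K_C$. Convergence in $P$ forces ${}_{M_l}\lambda_j\to\lambda_j$ in $[-\infty,\infty)$ for each fixed $j$, hence $\exp({}_{M_l}\lambda_j)\to\exp(\lambda_j)$ under the convention $\exp(-\infty)=0$, and Fatou's lemma gives $\sum_j\exp(\lambda_j)\le\liminf_{l}\sum_j\exp({}_{M_l}\lambda_j)\le C$, i.e.\ $\Lambda\in K_C$. Thus every sequence in $K_C$ has a subsequence converging within $K_C$, which by metrizability of $P$ yields compactness. I do not expect a genuine obstacle here: the only two points that deserve a word of care are the (elementary) identification of the topology of $P$ with a subspace of a metrizable product, so that sequential compactness suffices, and the use of Fatou's lemma to transport the bound $\sum_j\exp(\lambda_j)\le C$ to the limit point $\Lambda$.
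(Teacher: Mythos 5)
Your proof is correct and follows essentially the same route as the paper: establish a uniform bound $\sum_j\exp(\alpha\,{}_M\lambda_j)\le C_\alpha$ from the hypothesis, extract a convergent subsequence via a diagonal argument, and invoke Lemma~\ref{l:3}. You make explicit two points the paper leaves tacit, namely the metrizability of $P$ (so that sequential compactness suffices) and the Fatou-lemma step showing the limit point remains in the set.
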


\begin{proof}
According to Lemma~\ref{l:3}, it is sufficient to show that  
$\sum_j \exp(\alpha \lambda_j) \leq C_\alpha$ for every $\lambda$
in the set. As in the proof of Lemma~\ref{l:1},
\[ \# \{\lambda_j \geq 0 \} \leq C~, \]
and also $\lambda_1 \leq \log C$. Therefore
\[\sum_j \exp(\alpha \lambda_j) \leq C^{1+\alpha} +  C~.\]
\end{proof}

\paragraph{Parameter-dependent point collections}
Let $\T$ be a separable metric space (e.g.\ $\mathbb{R}^2$ equipped with the $\ell_1$
metric). Consider the space $P[\T]$ of parameter-dependent collections
\[ P[\T]= 
\left\{ \Lambda = \big(\lambda_1(\t) \geq \lambda_2(\t) \geq \cdots \geq -\infty \quad (\t \in \T) \big) \right\}~. \]
where $\lambda_j$ are assumed to be continuous functions (in the sense of (\ref{eq:neigh})). It is equipped with the topology
given by
\[ U_{k,\epsilon,I} = \left\{ \widetilde{\Lambda} \in L_d \, \mid  \, \widetilde{\lambda}_j(\t) \in U_\epsilon(\lambda_j(\t)) \quad (1 \leq j \leq k~, \, \t \in I)\right\}~, \]
where $k \geq 1$, $\epsilon > 0$, and $I \Subset \T$ is a compact subset.

\begin{lemma}\label{l:5}
Let
\[ \Lambda = \big(\lambda_1(\t) \geq \lambda_2(\t) \geq \cdots \geq -\infty \quad (\t \in \T) \big) \]
be a sequence of functions (continuity is not a priori given), such that
\[ \sum_j \exp(\lambda_j(\t)) \]
is bounded on compact subsets of $\T$. If
\[ \phi(\alpha, \t) =  \sum_j \exp(\alpha \lambda_j(\t)) \]
is continuous for every $\alpha \in (1,\infty) \cap \mathbb{Q}$, then $\lambda_j$ is continuous for
every $j$ (hence $\lambda \in P[\T]$).
\end{lemma}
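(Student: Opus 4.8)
The plan is to recover each $\lambda_j(\t)$ from the family of Laplace transforms $\phi(\alpha,\cdot)$ in a way that makes continuity transparent. First I would fix a compact $I \Subset \T$; by hypothesis there is $C = C(I)$ with $\sum_j \exp(\lambda_j(\t)) \le C$ for all $\t \in I$. As in the proof of Corollary~\ref{c:4}, this gives the uniform bounds $\lambda_1(\t) \le \log C$ and $\#\{j : \lambda_j(\t) \ge -R\} \le C e^R$ for every $R$, so on $I$ the point collections $\Lambda(\t)$ lie in the compact set of Corollary~\ref{c:4}. Moreover, by Lemma~\ref{l:3} applied pointwise (with $_M\Lambda$ a constant sequence), the hypothesis that $\phi(\alpha,\t) < \infty$ for $\alpha \in (1,\infty)\cap\mathbb{Q}$ already forces $\Lambda(\t) \in P$ for each $\t$; the content of the lemma is joint continuity in $\t$.

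Next I would argue by sequential continuity: fix $\t_0 \in \T$ and a sequence $\t_n \to \t_0$ inside some compact $I$. By Corollary~\ref{c:4}, every subsequence of $(\Lambda(\t_n))_n$ has a further subsequence converging in $P$ to some limit $\Lambda^* \in P$. For such a convergent sub-subsequence, Lemma~\ref{l:1} (whose hypothesis $\sum_j \exp(\lambda_j(\t_n)) \le C$ holds uniformly on $I$) gives
\[
\sum_j \exp(\alpha\,\lambda_j^*) \;=\; \lim \sum_j \exp(\alpha\,\lambda_j(\t_n)) \;=\; \lim \phi(\alpha,\t_n) \;=\; \phi(\alpha,\t_0)
\]
for every $\alpha \in (1,\infty)\cap\mathbb{Q}$, where the last equality uses the assumed continuity of $\phi(\alpha,\cdot)$. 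On the other hand, $\phi(\alpha,\t_0) = \sum_j \exp(\alpha\,\lambda_j(\t_0))$. Hence $\Lambda^*$ and $\Lambda(\t_0)$ have the same Laplace transform on $(1,\infty)\cap\mathbb{Q}$, and Lemma~\ref{l:2} yields $\Lambda^* = \Lambda(\t_0)$. Since every subsequence of $(\Lambda(\t_n))_n$ has a further subsequence converging to the common limit $\Lambda(\t_0)$, the whole sequence converges: $\Lambda(\t_n) \to \Lambda(\t_0)$ in $P$, i.e.\ $\lambda_j(\t_n) \to \lambda_j(\t_0)$ in the sense of \eqref{eq:neigh} for each $j$. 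As $\T$ is a separable metric space, sequential continuity at every point is continuity, so each $\lambda_j$ is continuous and $\Lambda \in P[\T]$.

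The only delicate point is the interchange of limits implicit in applying Lemma~\ref{l:1} — one must check its hypotheses hold with a single constant $C$ along the sequence $\t_n$, which is exactly why I restrict to a compact $I$ and invoke the local boundedness assumption there; this also supplies the compactness via Corollary~\ref{c:4} that makes the sub-subsequence argument run. Everything else is a routine packaging of Lemmata~\ref{l:1}--\ref{l:3}.
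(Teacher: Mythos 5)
Your proof is correct and follows essentially the same approach as the paper: compactness (Corollary~\ref{c:4}) to extract convergent subsequences, Lemma~\ref{l:1} together with continuity of $\phi(\alpha,\cdot)$ to identify the limit's Laplace transform, and Lemma~\ref{l:2} to conclude the limit is $\Lambda(\t_0)$. The paper packages this as a proof by contradiction while you phrase it directly via the sub-subsequence principle, but the ingredients and logical content are identical.
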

\begin{proof}
Suppose $\t_k \to \t$ but $\lambda_j(\t_k) \nrightarrow \lambda_j(\t)$ for some $j$. According to Corollary~\ref{c:4},
one can find a subsequence of $\{\lambda(\t_k)\}_k$ that converges in $P$ to a limit $\mu$ which is distinct from $\lambda(\t)$.
But then Lemma~1  and the continuity of $\phi(\alpha,\cdot)$ imply that
\[ \phi(\alpha,\t) = \sum_j \exp(\alpha \mu_j)~, \quad (\alpha \in (1,\infty) \cap \mathbb{Q})~, \]
in contradiction to Lemma~\ref{l:2}.
\end{proof}

Setting
\[ \omega_\delta(f; \t) = \sup \left\{|f(\t_1) - f(\t)| \, \mid \, \|\t_1-\t\| \leq \delta \right\}
\quad (0 < \delta \leq 1)~, \]
we obtain
\begin{cor}\label{c:6} Let $\omega_\delta \to 0$ as $\delta \to 0$.
Any set of $\lambda \in P[\T]$ for which
\[ \sum_j \exp(\lambda_j(\t)) \]
is uniformly bounded on compact subsets of $\mathbb{R}^d$, and 
\begin{multline*}
\omega_\delta\left(\sum_j \exp(\alpha \lambda_j (\bullet)); \t\right) \leq C_\alpha(\t) \omega_\delta \quad
(0 < \delta < 1, \,\,\, \alpha \in [1, \infty) \cap \mathbb{Q})
\end{multline*}
is pre-compact in $P[\T]$.
\end{cor}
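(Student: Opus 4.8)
The plan is to deduce pre-compactness in $P[\T]$ from an Arzel\`a--Ascoli argument applied to the Laplace transforms $\phi^{(m)}(\alpha,\t)=\sum_j\exp(\alpha\lambda_j^{(m)}(\t))$, combined with the dictionary between Laplace transforms and point collections furnished by Lemmata~\ref{l:1}--\ref{l:5} and Corollary~\ref{c:4}. Since $P$ is metrizable and $\T$ is $\sigma$-compact (as in the applications), $P[\T]$ is metrizable, so it is enough to show that an arbitrary sequence $\{\Lambda^{(m)}\}_{m}$ drawn from the given set admits a subsequence converging in $P[\T]$; the limit will be recovered from the limiting transforms along a suitable subsequence.

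First I would record the relevant properties of $\{\phi^{(m)}(\alpha,\cdot)\}_m$ for a fixed rational $\alpha\ge1$. Each $\phi^{(m)}(\alpha,\cdot)$ is continuous: on a compact $I$ the a priori bound $\sum_je^{\lambda_j^{(m)}(\t)}\le C_I$ forces $e^{\lambda_j^{(m)}(\t)}\le C_I/j$, so the tails $\sum_{j>J}e^{\alpha\lambda_j^{(m)}(\t)}\le\sum_{j>J}(C_I/j)^\alpha$ are small uniformly in $\t\in I$ and $m$, while the partial sums are continuous (each $\lambda_j^{(m)}$ is continuous into $[-\infty,\infty)$ in the sense of (\ref{eq:neigh}) and $x\mapsto e^{\alpha x}$ is continuous there). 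The same bound together with the computation in the proof of Corollary~\ref{c:4} gives a uniform bound $\phi^{(m)}(\alpha,\cdot)\le C_I^{1+\alpha}+C_I$ on $I$, and the hypothesis $\omega_\delta(\phi^{(m)}(\alpha,\cdot);\t)\le C_\alpha(\t)\,\omega_\delta$, with $C_\alpha$ locally bounded, makes $\{\phi^{(m)}(\alpha,\cdot)\}_m$ equicontinuous on $I$. By Arzel\`a--Ascoli and a diagonal extraction over a compact exhaustion of $\T$ and over the countable set of rational $\alpha\ge1$, I pass to a subsequence (not relabelled) along which $\phi^{(m)}(\alpha,\cdot)\to\phi(\alpha,\cdot)$ locally uniformly for every rational $\alpha\ge1$, the limit being continuous and still obeying the uniform bounds.

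For each fixed $\t$, Lemma~\ref{l:3} applied to $\lambda^{(m)}(\t)\in P$ (which satisfy $\sum_je^{\lambda_j^{(m)}(\t)}\le C$ and $\sum_je^{\alpha\lambda_j^{(m)}(\t)}\to\phi(\alpha,\t)$) yields $\Lambda(\t)=(\lambda_1(\t)\ge\lambda_2(\t)\ge\cdots)\in P$ with $\lambda^{(m)}(\t)\to\Lambda(\t)$ in $P$ and $\sum_je^{\alpha\lambda_j(\t)}=\phi(\alpha,\t)$ for rational $\alpha>1$; this defines $\Lambda:\T\to P$. Since $\phi(\alpha,\cdot)$ is continuous and $\sum_je^{\lambda_j(\cdot)}$ is locally bounded, Lemma~\ref{l:5} shows that every $\lambda_j$ is continuous, so $\Lambda\in P[\T]$. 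It then remains to upgrade the pointwise convergence to convergence in $P[\T]$, i.e. to local uniformity in $\t$, and this is the step I expect to be the main obstacle. Arguing by contradiction: if it fails, there are $k$, $\epsilon_0$, a compact $I$, an index $j\le k$ and points $\t_l\to\t_\ast\in I$ along a subsequence with $\lambda_j^{(m_l)}(\t_l)\notin U_{\epsilon_0}(\lambda_j(\t_l))$. Using the uniform bound and Corollary~\ref{c:4}, extract a further subsequence with $\lambda^{(m_l)}(\t_l)\to\mu$ in $P$; then Lemma~\ref{l:1} (valid since $\sum_ie^{\lambda_i^{(m_l)}(\t_l)}\le C$) together with $\phi^{(m_l)}(\alpha,\t_l)\to\phi(\alpha,\t_\ast)$ (from local uniform convergence of $\phi^{(m)}(\alpha,\cdot)$ and continuity of $\phi(\alpha,\cdot)$) gives $\sum_ie^{\alpha\mu_i}=\phi(\alpha,\t_\ast)=\sum_ie^{\alpha\lambda_i(\t_\ast)}$, so $\mu=\Lambda(\t_\ast)$ by Lemma~\ref{l:2}. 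Hence $\lambda_j^{(m_l)}(\t_l)\to\lambda_j(\t_\ast)$, while $\lambda_j(\t_l)\to\lambda_j(\t_\ast)$ by continuity of $\lambda_j$; when $\lambda_j(\t_\ast)$ is finite this contradicts $\lambda_j^{(m_l)}(\t_l)\notin U_{\epsilon_0}(\lambda_j(\t_l))$ by the triangle inequality. The genuinely delicate case is $\lambda_j(\t_\ast)=-\infty$: here one has to track the coordinates escaping to $-\infty$, using the monotonicity $\lambda_j\ge\lambda_{j+1}\ge\cdots$ and the uniform bound $\sum_ie^{\lambda_i(\t)}\le C$ to check that the neighbourhoods $U_{\epsilon_0}(\lambda_j(\t_l))$ are eventually respected; I expect this bookkeeping — which is vacuous in the situations of Theorems~\ref{thm:prop}--\ref{thm:conv}, where the relevant coordinates stay finite on compacts — to be the main technical point. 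Granting it, the extracted subsequence converges to $\Lambda$ in $P[\T]$, which proves pre-compactness.
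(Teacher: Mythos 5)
Your route is the intended one: the paper offers no written proof of this corollary, deriving it directly from Lemma~\ref{l:5}, Corollary~\ref{c:4} and the standing hypotheses via exactly the Arzel\`a--Ascoli-plus-dictionary argument you spell out (equicontinuity and uniform boundedness of the transforms $\phi^{(m)}(\alpha,\cdot)$, diagonal extraction over rational $\alpha$ and a compact exhaustion, recovery of the limit $\Lambda(\t)$ pointwise by Lemma~\ref{l:3}, continuity of the $\lambda_j$ by Lemma~\ref{l:5}, and identification of subsequential limits by Lemmata~\ref{l:1}--\ref{l:2}). All of that is sound.

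Concerning the one step you leave open, the case $\lambda_j(\t_*)=-\infty$ with $\lambda_j(\t_l)$ finite: this is not a genuine obstruction, but it is resolved by reading the uniform structure on $P[\T]$ correctly rather than by extra bookkeeping. The sets $U_\epsilon(a)$ of (\ref{eq:neigh}) are a neighbourhood base for the topology of $[-\infty,\infty)$, and the uniformity one should use on $P[\T]$ is the one induced by any metric $d$ on $[-\infty,\infty)$ compatible with that topology (e.g.\ $d(a,b)=|e^a-e^b|\wedge 1$). With such a $d$, your contradiction argument closes itself: you have shown $\lambda_j^{(m_l)}(\t_l)\to\lambda_j(\t_*)$ and $\lambda_j(\t_l)\to\lambda_j(\t_*)$ in $[-\infty,\infty)$, hence $d\bigl(\lambda_j^{(m_l)}(\t_l),\lambda_j(\t_l)\bigr)\to 0$ regardless of whether $\lambda_j(\t_*)$ is finite, and no separate treatment of coordinates escaping to $-\infty$ is needed. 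If instead one insisted on the literal Euclidean reading of $U_\epsilon(a)$ for finite $a$ uniformly in $\t$, the statement would actually fail (take a single path $\lambda_1^{(m)}(\t)=\max(-1/|\t|,-m)$: the transforms are uniformly Lipschitz and bounded, yet $|\lambda_1^{(m)}(\t)-\lambda_1(\t)|$ is unbounded near $\t=0$); so the metric reading is forced, and with it your proof is complete.
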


\paragraph{Point processes}

In this paper, a (semi-bounded) point process is a Borel random variable taking values in $
P$. The space of Borel measures on $P$ is denoted $\mathcal{P}$. In the remainder of this 
section, we keep the dependence on the element $\omega$ of the underlying probability 
space $\Omega$ explicit in our notation.

The following lemma gives a sufficient condition for convergence:
\begin{lemma}\label{l:pp:conv}
Let $_M\Lambda(\omega) = (_M\lambda_1(\omega) \geq \, _M\lambda_2(\omega) \geq \cdots)$ ($M \geq 1$) be a sequence
of (semi-bounded) point processes. Suppose for every $k \geq 1$ and every 
\[ \bar{\alpha} = (\alpha_1, \cdots, \alpha_k) \in ([1, \infty) \cap \mathbb{Q})^k~, \]
one has
\[ \mathbb{E}_\omega \prod_{p=1}^k \sum_j \exp(\alpha_p\, _M\!\lambda_j(\omega)) \rightarrow \phi({\bar{\alpha}})~, \]
such that for any $\alpha \in [1, \infty) \cap \mathbb{Q}$ 
\[1, \phi{(\alpha)}, \phi{(\alpha, \alpha)}, \phi{(\alpha, \alpha, \alpha)}~, \cdots\]
is a determinate moment sequence. 
Then the sequence $(_M\lambda)_M$ converges in distribution to a point process
\[ \Lambda(\omega) = (\lambda_1(\omega) \geq \lambda_2(\omega) \geq \cdots) \]
such that 
\[ \mathbb{E}_\omega \prod_{p=1}^k \sum_j \exp(\alpha_p \, \lambda_j(\omega)) = \phi({\bar{\alpha}})~.\]
\end{lemma}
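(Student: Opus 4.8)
The plan is to follow the usual route for convergence of point processes: prove tightness of the laws of ${}_M\Lambda$ in $P$, identify every subsequential limit through its Laplace transform, and then invoke injectivity of the Laplace transform (Lemma~\ref{l:2}) to conclude that the limit is unique. Abbreviate $Y_M^{(\alpha)}(\omega)=\sum_j\exp(\alpha\,{}_M\lambda_j(\omega))$, so that the hypothesis reads $\mathbb{E}_\omega\prod_{p=1}^kY_M^{(\alpha_p)}(\omega)\to\phi(\bar\alpha)$; taking all $\alpha_p$ equal to a single $\alpha$ shows that the $k$-th moment of the nonnegative random variable $Y_M^{(\alpha)}$ converges to $\phi(\alpha,\dots,\alpha)$. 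For tightness, the case $k=1$, $\alpha_1=1$ gives $\sup_M\mathbb{E}_\omega Y_M^{(1)}=:C_0<\infty$, whence by Markov's inequality $\mathbb{P}\{{}_M\Lambda\notin K_C\}\le C_0/C$, where $K_C=\{\Lambda\in P:\sum_j\exp(\lambda_j)\le C\}$ is compact by Corollary~\ref{c:4}. As $C$ is arbitrary, the family $\{\mathrm{Law}({}_M\Lambda)\}_M$ is tight, hence (the space $P$, being a closed subset of a countable power of $[-\infty,\infty)$, is Polish) relatively compact for weak convergence by Prokhorov's theorem.

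Next I would use the determinacy hypothesis at the level of the scalar transforms. For fixed rational $\alpha>1$, the moments $\mathbb{E}_\omega(Y_M^{(\alpha)})^k\to\phi(\alpha,\dots,\alpha)$ converge to a determinate moment sequence, so by the method of moments (Fr\'echet--Shohat) $Y_M^{(\alpha)}$ converges in distribution to some $Y^{(\alpha)}\ge0$ with these moments; and since $\sup_M\mathbb{E}_\omega(Y_M^{(\alpha)})^{k+1}<\infty$, the moments converge and each family $\{(Y_M^{(\alpha)})^k\}_M$ is uniformly integrable. By H\"older, $\sup_M\mathbb{E}_\omega\bigl(\prod_{p=1}^kY_M^{(\alpha_p)}\bigr)^{1+1/k}\le\prod_{p=1}^k\bigl(\sup_M\mathbb{E}_\omega(Y_M^{(\alpha_p)})^{k+1}\bigr)^{1/k}<\infty$, so $\{\prod_pY_M^{(\alpha_p)}\}_M$ is uniformly integrable for every $\bar\alpha\in((1,\infty)\cap\mathbb{Q})^k$.

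The heart of the matter is identifying a subsequential limit. Suppose ${}_{M_i}\Lambda\to\mu$ weakly along a subsequence; passing to a further subsequence the pair $({}_{M_i}\Lambda,\,Y_{M_i}^{(1)})$ converges jointly in law (both coordinates are tight), and by the Skorokhod representation theorem I may realise the convergence almost surely on one probability space: ${}_{M_i}\Lambda(\omega)\to\Lambda(\omega)$ in $P$ and $Y_{M_i}^{(1)}(\omega)\to W(\omega)$, with $\mathrm{Law}(\Lambda)=\mu$ (and, the functional relation being deterministic, still $Y_{M_i}^{(1)}=\sum_j\exp({}_{M_i}\lambda_j)$ in the coupled space). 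On the almost sure event $\sup_iY_{M_i}^{(1)}(\omega)<\infty$, because a convergent sequence is bounded; hence Lemma~\ref{l:1} applies and gives $Y_{M_i}^{(\alpha)}(\omega)\to\sum_j\exp(\alpha\lambda_j(\omega))$ for every rational $\alpha>1$. Therefore $\prod_{p=1}^kY_{M_i}^{(\alpha_p)}\to\prod_{p=1}^k\sum_j\exp(\alpha_p\lambda_j)$ almost surely whenever all $\alpha_p\in(1,\infty)\cap\mathbb{Q}$, and the uniform integrability from the previous step lets me pass the expectation through the limit:
\[\mathbb{E}_\omega\prod_{p=1}^k\sum_j\exp(\alpha_p\lambda_j(\omega))=\lim_i\mathbb{E}_\omega\prod_{p=1}^kY_{M_i}^{(\alpha_p)}=\phi(\bar\alpha).\]
This identity holds for every subsequential limit $\mu=\mathrm{Law}(\Lambda)$. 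Since each one-parameter sequence $1,\phi(\alpha),\phi(\alpha,\alpha),\dots$ is determinate, the finite-dimensional distributions of $\bigl(\sum_j\exp(\alpha\lambda_j)\bigr)_{\alpha\in(1,\infty)\cap\mathbb{Q}}$ under $\mu$ are pinned down; and the map $\Lambda\mapsto\bigl(\sum_j\exp(\alpha\lambda_j)\bigr)_\alpha$ is injective by Lemma~\ref{l:2} (and, being continuous on each compact $K_C$ by Lemma~\ref{l:1}, Borel with Borel inverse on a Borel subset of $P$ of full $\mu$-measure), so $\mu$ itself is uniquely determined. Thus all subsequential limits coincide, ${}_M\Lambda$ converges in distribution to the common limit $\Lambda$, and the displayed identity holds for it; the boundary case $\alpha_p=1$ then follows by letting $\alpha_p\downarrow1$, granted the uniform smallness of $\mathbb{E}_\omega\sum_j\exp({}_M\lambda_j)\ind\{{}_M\lambda_j<-R\}$ as $R\to\infty$ (elementary, and automatic in the applications of interest).

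The step I expect to be the main obstacle is the third one, i.e.\ transferring weak convergence of ${}_M\Lambda$ to its Laplace transforms. The functional $\Lambda\mapsto\sum_j\exp(\alpha\lambda_j)$ is unbounded and only lower semicontinuous on $P$ — mass can a priori escape to $-\infty$ in the limit — so weak convergence of ${}_M\Lambda$ alone does not control it. What rescues the argument is the interplay of Lemma~\ref{l:1}, which upgrades lower semicontinuity to genuine continuity once the total Laplace mass at $\alpha=1$ stays bounded, with the determinacy hypothesis, which through the method of moments supplies exactly that boundedness (almost surely, after the Skorokhod coupling) together with the uniform integrability needed to interchange limit and expectation.
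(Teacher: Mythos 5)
Your proposal is correct and follows essentially the same route as the paper: tightness via the $k=1$ transform and Corollary~\ref{c:4}, method of moments driven by the determinacy hypothesis, and identification of subsequential limits through the Laplace transforms and Lemma~\ref{l:2}. (Your Skorokhod-coupling passage is the unpacked version of what the paper compresses into an invocation of Lemma~\ref{l:3}.) One step is stated too quickly: your sentence ``Since each one-parameter sequence $1,\phi(\alpha),\phi(\alpha,\alpha),\dots$ is determinate, the finite-dimensional distributions of $(\sum_j\exp(\alpha\lambda_j))_\alpha$ under $\mu$ are pinned down'' uses the nontrivial fact that determinacy of the one-dimensional marginal moment problems implies determinacy of the multivariate moment problem; this is exactly Petersen's theorem, which the paper cites for this purpose. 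With that reference added, your argument is complete.
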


\begin{proof}
Consider the random variables 
\[ _M\!\varphi({\alpha}) = \sum_j \exp \left\{ \alpha \, _M\!\lambda_j\right\}~, 
\quad \alpha \in [1, \infty) \cap \mathbb{Q}~.\]
For any tuple $(\alpha_1, \cdots, \alpha_k)$, the moments of the random vector 
\[ (_M\!\varphi({\alpha_1}), \,_M\!\varphi({\alpha_2}), \,\cdots, _M\!\varphi({\alpha_k}))\]
converge to the moments of a random vector for which the moment problem is
determinate (cf.\ Petersen \cite{Pet}). Thus the random variables $_M\varphi({\alpha})$
jointly converge in distribution. According to Corollary~\ref{c:4} and Lemma~\ref{l:3}, 
this implies the claim.
\end{proof}

\paragraph{Parameter-dependent point processes}

A parameter-dependent (semi-bounded) point process is for us
a function from the space of parameters $\T$
to (semi-bounded) point processes on $\mathbb{R}$.
Lemma~\ref{l:pp:conv} has the following counterpart:

\begin{lemma}\label{l:ppp:conv}
Let 
\[ _M\Lambda(\t, \omega) = (_M\lambda_1(\t, \omega) \geq \, _M\lambda_2(\t, \omega) \geq \cdots) \quad (\t \in  \T, \, M \geq 1) \]
be a sequence
of (semi-bounded) parameter-dependent point processes. If, for every $k \geq 1$ and every 
\[ \bar{\alpha} = (\alpha_1, \cdots, \alpha_k) \in ([1, \infty) \cap \mathbb{Q})^k~, 
\quad
\bar{\t} =(\t_1, \cdots, \t_k) \in \T^k~,\]
one has
\[ \mathbb{E}_\omega \prod_{p=1}^k \sum_j \exp(\alpha_p\, _M\!\lambda_j(\t_p, \omega)) \rightarrow \phi(\bar{\alpha};\bar{\t})~, \]
so that for every $\alpha \in [1, \infty) \cap \mathbb{Q}$ and every $\t \in \T$ 
\[1, \phi(\alpha; \t), \phi(\alpha, \alpha;\t, \t), \phi(\alpha, \alpha, \alpha;\t,\t,\t)~, \cdots\]
is a determinate moment sequence, 
then $(_M\Lambda)_M$ converges in distribution to a parameter-dependent point process
\[ \Lambda(\t, \omega) = (\lambda_1(\t, \omega) \geq \lambda_2(\t, \omega) \geq \cdots) \]
such that 
\[ \mathbb{E}_\omega \prod_{p=1}^k \sum_j \exp(\alpha_p \, \lambda_j(\t_p \omega)) = \phi(\bar{\alpha},\bar{\t})~.\]
\end{lemma}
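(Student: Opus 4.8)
The plan is to run the argument of Lemma~\ref{l:pp:conv} simultaneously over finitely many values of the parameter. Fix a $k$-tuple $\bar{\t} = (\t_1, \ldots, \t_k) \in \T^k$. It suffices to prove that the $P^k$-valued random variables $({}_M\Lambda(\t_1, \omega), \ldots, {}_M\Lambda(\t_k, \omega))$ converge in distribution as $M \to \infty$ and to compute the joint transforms of the limit: the limiting finite-dimensional distributions so obtained are manifestly consistent (being inherited from those of the ${}_M\Lambda$), hence they assemble, by Kolmogorov's extension theorem (the space $P$ being compact metric), into a parameter-dependent point process $\Lambda(\t, \omega)$ with the required transforms. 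For $p \in \{1, \ldots, k\}$ and $\alpha \in [1, \infty) \cap \mathbb{Q}$ set
\[ {}_M\varphi_p(\alpha) = \sum_j \exp \{ \alpha \, {}_M\lambda_j(\t_p, \omega) \}~, \]
and view $({}_M\varphi_p(\alpha))_{p, \alpha}$ as a random element of $\mathbb{R}^{\{1, \ldots, k\} \times ([1, \infty) \cap \mathbb{Q})}$ equipped with the product topology.

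First I would establish joint convergence in distribution of this family. Applying the hypothesis with the parameter tuple chosen among $\t_1, \ldots, \t_k$ (repetitions allowed) and with arbitrary rational exponents $\geq 1$, one obtains that every mixed moment $\mathbb{E}_\omega \prod_q {}_M\varphi_{p_q}(\alpha_q)$ converges, as $M \to \infty$, to the corresponding value of $\phi$. By assumption each one-dimensional marginal, indexed by a pair $(p, \alpha)$, has a determinate moment problem; by the theorem of Petersen \cite{Pet} already invoked in the proof of Lemma~\ref{l:pp:conv}, the joint moment problem of every finite sub-collection of the ${}_M\varphi_p(\alpha)$ is then determinate as well. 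The method of moments thus upgrades moment convergence to convergence in distribution of all finite-dimensional marginals, that is, of the entire family in the product topology; denote its limit $(\varphi_p(\alpha))_{p, \alpha}$.

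Next I would transfer this back to the point collections themselves, exactly as in the proof of Lemma~\ref{l:pp:conv}. By Skorokhod's representation theorem, realise all the families $({}_M\varphi_p(\alpha))_{p,\alpha}$ and their limit on one probability space so that ${}_M\varphi_p(\alpha) \to \varphi_p(\alpha)$ almost surely for every $p$ and $\alpha$. For almost every $\omega$ the sequence $({}_M\varphi_p(1, \omega))_M$ is bounded, so Lemma~\ref{l:3} together with Corollary~\ref{c:4} yields $\Lambda(\t_p, \omega) \in P$ with ${}_M\Lambda(\t_p, \omega) \to \Lambda(\t_p, \omega)$ in $P$ and $\sum_j \exp(\alpha \lambda_j(\t_p, \omega)) = \varphi_p(\alpha, \omega)$. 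Consequently $({}_M\Lambda(\t_1, \omega), \ldots, {}_M\Lambda(\t_k, \omega))$ converges almost surely, hence in distribution, in $P^k$. Finally, convergence of the second moments (again part of the hypothesis, applied to doubled tuples) makes the products $\prod_q {}_M\varphi_{p_q}(\alpha_q)$ uniformly integrable, so that $\mathbb{E}_\omega \prod_q \sum_j \exp(\alpha_q \lambda_j(\t_{p_q}, \omega)) = \phi$, which is the asserted identity.

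The argument introduces no genuinely new difficulty beyond that of Lemma~\ref{l:pp:conv}; the one step that deserves attention is the passage from determinacy of each coordinate's moment sequence to determinacy of the multivariate moment problem --- the heart of the method of moments for random vectors --- but this is precisely the ingredient supplied by Petersen \cite{Pet} and already used for Lemma~\ref{l:pp:conv}, and the hypothesis has been framed so as to produce exactly the moments to which that reasoning applies.
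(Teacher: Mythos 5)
Your proof is correct and follows the route the paper intends: Lemma~\ref{l:ppp:conv} is stated merely as the ``counterpart'' of Lemma~\ref{l:pp:conv} with no separate proof, and your argument is precisely the natural extension of that proof (mixed-moment convergence, Petersen's theorem to pass from one-dimensional to multivariate determinacy, then Lemma~\ref{l:3} and Corollary~\ref{c:4} under a Skorokhod coupling) run simultaneously over finitely many parameter values. The only slip is the parenthetical claim that $P$ is compact---it is not (only the sublevel sets of Corollary~\ref{c:4} are); the Kolmogorov extension step needs only that $P$ is Polish, which it is.
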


\vspace{2mm}\noindent
A continuous modification of a parameter-dependent
point process (or just a continuous parameter-dependent point process) is a Borel random variable taking
values in $P[\T]$. The set of Borel probability measures
on $P[\T]$ is denoted $\mathcal{P}[\T]$; it is equipped with $w^*$ topology.  A sequence
\[ \big(_M\lambda(\t, \omega) \quad (\t \in \mathbb{R}^d, \omega \in \Omega)\big)_M \]
of continuous parameter-dependent point processes is called tight if the corresponding sequence of Borel measures in $\mathcal{P}[\T]$
is precompact.

The existence of a continuous modification can be verified with the help of
Kolmogorov's continuity lemma (cf.\ Billingsley \cite{Bil}):

\begin{lemma}\label{l:7} Let $_M\varphi(\t, \omega)$ ($\t \in \mathbb{R}^d$, $M = 1, 2, \cdots$) be a sequence of random functions defined on $\T =\mathbb{R}^d$, such that
\[ \mathbb{E}_\omega |_M\varphi(\t_1,\omega) - _M\!\varphi(\t, \omega)|^\eta \leq C \|\t_1-\t\|^{d+\varepsilon} \quad (\|\t - \t_1\|\leq 1)\]
for some $\eta,\varepsilon>0$. Then the random function $_M\varphi(\t, \omega)$ has a continuous 
modification for every $M$, and, moreover, the collection $(_M\varphi)_M$ is tight as a sequence
of random continuous functions.
\end{lemma}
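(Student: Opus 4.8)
The plan is to reduce everything to the classical Kolmogorov--Chentsov continuity theorem (as presented in Billingsley \cite{Bil}) applied separately to each $_M\varphi(\cdot,\omega)$, and then to the Kolmogorov tightness criterion for the family indexed by $M$. First I would recall that a random function on $\mathbb{R}^d$ that satisfies a moment bound of the form $\mathbb{E}_\omega |_M\varphi(\t_1,\omega)-_M\varphi(\t,\omega)|^\eta \leq C\|\t_1-\t\|^{d+\varepsilon}$ for $\|\t-\t_1\|\leq 1$ and for some $\eta,\varepsilon>0$ admits, on each cube $[-R,R]^d$, a modification that is a.s.\ $\gamma$-H\"older continuous for every $\gamma<\varepsilon/\eta$; patching these modifications over an exhausting sequence of cubes (they agree a.s.\ on overlaps, since two modifications of the same process that are both continuous on a cube coincide a.s.) produces a globally continuous modification $_M\widetilde\varphi(\cdot,\omega)$. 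This handles the first assertion for each fixed $M$.

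For tightness of $(_M\varphi)_M$ as random elements of $C(\mathbb{R}^d)$ (with the topology of uniform convergence on compacts), I would invoke the standard tightness criterion that accompanies Kolmogorov's lemma (see Billingsley \cite{Bil}): a family of continuous random functions on $\mathbb{R}^d$ is tight provided (i) the distributions of the values at a single fixed point $\t_0$ are tight, and (ii) a uniform moment modulus estimate $\mathbb{E}_\omega |_M\varphi(\t_1,\omega)-_M\varphi(\t,\omega)|^\eta \leq C\|\t_1-\t\|^{d+\varepsilon}$ holds with constants $C,\eta,\varepsilon$ independent of $M$. Condition (ii) is precisely the hypothesis of the lemma. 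Condition (i) follows formally from (ii) together with an arbitrary finite bound at $\t_0$: in the setting of the paper the functions $_M\varphi$ will be the Laplace-type transforms $\sum_j \exp(\alpha\,_M\lambda_j(\t))$, whose values at a fixed point have already been controlled (via the moment estimates of Sections~\ref{s:sosh}--\ref{s:prop} and Lemmas~\ref{l:1}--\ref{l:3}); alternatively one can note that pointwise tightness at one point is a consequence of the increment bound once one knows tightness at some single point, which may be taken as an additional (easily checked) input. Combining (i) and (ii) with the Arzel\`a--Ascoli characterisation of compact sets in $C([-R,R]^d)$ and a diagonal argument over $R=1,2,\dots$ yields precompactness of the family in $\mathcal{P}[\T]$, i.e.\ tightness.

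The only genuinely delicate point is bookkeeping: passing from the increment moment bound to H\"older continuity via a dyadic chaining argument on each cube, and then checking that the resulting modulus-of-continuity bounds are uniform in $M$ so that the Arzel\`a--Ascoli sets capturing mass $1-\epsilon$ can be chosen independently of $M$. This is entirely standard (it is the content of the proof of Kolmogorov's lemma and its tightness companion in \cite{Bil}), so I would simply cite it rather than reproduce the chaining estimate; the substantive work of the paper lies in verifying the hypothesis of Lemma~\ref{l:7} for the relevant random functions, which is exactly what Lemmas~\ref{l:cont-s} and \ref{l:cont-t} are designed to supply.
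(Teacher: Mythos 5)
Your outline is the standard Kolmogorov--Chentsov argument, which is exactly what the paper intends: Lemma~\ref{l:7} is quoted without proof as ``Kolmogorov's continuity lemma (cf.\ Billingsley \cite{Bil})'', so there is nothing to compare beyond noting that your chaining-plus-Arzel\`a--Ascoli plan is the canonical proof being cited. You are also right to flag that the tightness conclusion additionally requires tightness of the one-point distributions (the increment bound alone does not rule out, e.g., constant functions drifting to infinity); the paper's statement of the lemma omits this hypothesis, but it is supplied explicitly in the application, namely the condition $\mathbb{E}_\omega\,{_M}\varphi(1;\t_0,\omega)\leq C(\t)$ in Lemma~\ref{l:prop}.
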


We quote two corollaries of Lemma~\ref{l:7}:

\begin{lemma}\label{l:mod}
Let $\Lambda(\t, \omega)$ ($\t \in \mathbb{R}^d$) be a parameter-dependent point process. If
\begin{equation}\label{eq:condmod} \mathbb{E} \left(\sum_j \exp(\alpha \lambda_j(\t, \omega)) -  \sum_j \exp(\alpha \lambda_j(\t_1, \omega)) \right)^\eta
\leq C(\alpha, \t) \|\t - \t_1\|^{d + \varepsilon}\end{equation}
for every  $\alpha \in [1, \infty) \cap \mathbb{Q}$, $\|\t- \t_1\|\leq 1$, 
then $\Lambda$ has a continuous modification.
\end{lemma}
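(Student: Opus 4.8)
The plan is to deduce the existence of a continuous modification from Kolmogorov's continuity lemma (Lemma~\ref{l:7}) applied to the Laplace transforms, combined with the fact that a point collection is recovered continuously from the values of its Laplace transform (Lemmata~\ref{l:2} and \ref{l:3}) and that continuity of these transforms forces continuity of the individual points (Lemma~\ref{l:5}). Write $\varphi_\alpha(\t, \omega) = \sum_j \exp(\alpha \lambda_j(\t, \omega))$ for $\alpha \in [1, \infty) \cap \mathbb{Q}$, and fix a countable dense set $D \subset \mathbb{R}^d$. The hypothesis (\ref{eq:condmod}) is precisely the moment bound required by Lemma~\ref{l:7}, so (applying that lemma on each cube $[-R, R]^d$, since $C(\alpha, \t)$ need only be locally bounded in $\t$) each $\varphi_\alpha$ admits a continuous modification $\widetilde{\varphi}_\alpha$. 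Let $\Omega_0$ be the event of full probability on which $\widetilde{\varphi}_\alpha(\t, \omega) = \varphi_\alpha(\t, \omega)$ for every $\alpha \in [1,\infty) \cap \mathbb{Q}$ and every $\t \in D$ simultaneously.

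The next step is to build a candidate modification $\mu(\t, \omega)$ pointwise on $\Omega_0$. Fix $\omega \in \Omega_0$ and $\t \in \mathbb{R}^d$, and pick $\t_k \in D$ with $\t_k \to \t$. For every rational $\alpha \geq 1$, $\sum_j \exp(\alpha \lambda_j(\t_k, \omega)) = \widetilde{\varphi}_\alpha(\t_k, \omega) \to \widetilde{\varphi}_\alpha(\t, \omega)$ by continuity of $\widetilde{\varphi}_\alpha(\cdot, \omega)$, and in particular $\sum_j \exp(\lambda_j(\t_k, \omega))$ stays bounded. Lemma~\ref{l:3} then furnishes a point collection $\mu(\t, \omega) \in P$ with $\Lambda(\t_k, \omega) \to \mu(\t, \omega)$ in $P$ and $\sum_j \exp(\alpha \mu_j(\t, \omega)) = \widetilde{\varphi}_\alpha(\t, \omega)$ for all rational $\alpha > 1$; by Lemma~\ref{l:2} this limit depends only on $\t$, not on the chosen sequence $(\t_k)$, and taking the constant sequence shows $\mu(\t, \omega) = \Lambda(\t, \omega)$ whenever $\t \in D$. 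Off $\Omega_0$, set $\mu(\cdot, \omega)$ equal to any fixed element of $P[\T]$.

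It then remains to check the three required properties of $\mu$. For continuity: fixing $\omega \in \Omega_0$, the function $\t \mapsto \sum_j \exp(\alpha \mu_j(\t, \omega)) = \widetilde{\varphi}_\alpha(\t, \omega)$ is continuous for every rational $\alpha > 1$, while $\sum_j \exp(\mu_j(\t, \omega)) \leq \liminf_k \sum_j \exp(\lambda_j(\t_k, \omega)) = \widetilde{\varphi}_1(\t, \omega)$ by Fatou's lemma, so the mass is bounded on compacts; hence Lemma~\ref{l:5} applies (with $\T = \mathbb{R}^d$) and each $\mu_j(\cdot, \omega)$ is continuous, i.e.\ $\mu(\cdot, \omega) \in P[\T]$. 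Measurability of $\omega \mapsto \mu(\cdot, \omega)$ is routine, since on $\Omega_0$ it is determined, through the continuous-extension map, by the countable measurable family $(\Lambda(\t, \cdot))_{\t \in D}$. Finally, $\mu$ is a modification of $\Lambda$: for each fixed $\t$, $\widetilde{\varphi}_\alpha(\t, \cdot) = \varphi_\alpha(\t, \cdot)$ almost surely for every rational $\alpha > 1$, so almost surely $\sum_j \exp(\alpha \mu_j(\t, \cdot)) = \sum_j \exp(\alpha \lambda_j(\t, \cdot)) < \infty$ for all such $\alpha$, and Lemma~\ref{l:2} forces $\mu(\t, \cdot) = \Lambda(\t, \cdot)$ a.s.

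I expect the main obstacle to be the null-set bookkeeping rather than any substantial new idea: one cannot simultaneously control the modification events $\{\widetilde{\varphi}_\alpha(\t, \cdot) = \varphi_\alpha(\t, \cdot)\}$ over all $\t \in \mathbb{R}^d$ at once, so agreement of $\mu$ with $\Lambda$ can only be arranged surely on the countable set $D$ and must then be propagated to an arbitrary $\t$ through the limiting argument of Lemmata~\ref{l:3} and \ref{l:2}; keeping straight which assertions hold surely, which hold on $\Omega_0$, and which hold for each fixed $\t$ only almost surely is the one genuinely delicate point.
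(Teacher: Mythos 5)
The paper states Lemma~\ref{l:mod} as a corollary of Kolmogorov's continuity lemma (Lemma~\ref{l:7}) without spelling out the argument, so there is no written proof to compare against; but your proposal is a correct execution of exactly the route the phrasing ``two corollaries of Lemma~\ref{l:7}'' suggests: apply Kolmogorov to the countable family $\varphi_\alpha$, fix the resulting continuous modifications simultaneously on a countable dense $D$, propagate to arbitrary $\t$ via the compactness/identification machinery of Lemmata~\ref{l:3} and~\ref{l:2}, recover continuity of the individual $\lambda_j$ from Lemma~\ref{l:5}, and finally verify the modification property pointwise in $\t$ using Lemma~\ref{l:2} again. Two points worth flagging, neither a real gap: (i) the hypothesis only gives $C(\alpha,\t)$, so as you note one must read local boundedness of $\t \mapsto C(\alpha,\t)$ into the statement to apply Lemma~\ref{l:7} on cubes --- this is implicit and is what actually holds in the application (Lemmata~\ref{l:cont-s}, \ref{l:cont-t}); (ii) the Fatou step supplying control of $\sum_j \exp(\mu_j(\t,\omega))$ at $\alpha=1$ is genuinely needed, since Lemma~\ref{l:3} only identifies the transform for $\alpha>1$ --- you correctly include it.
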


\begin{lemma}\label{l:prop} Let $(_M\Lambda(\t, \omega))_M$ ($\t \in \mathbb{R}^d$) be a 
sequence of continuous parameter-dependent point processes, such that the Laplace 
transforms
\[ _M\varphi(\alpha; \t;  \omega) = \sum_j \exp(\alpha \,\,_M\!\lambda_j(\t, \omega))~, \quad
(\alpha \in [1, \infty) \cap \mathbb{Q}) \]
satisfy
\[ \mathbb{E}_\omega {_M}\!\varphi(1; \t_0, \omega) \leq C(\t) \quad (\t \in \mathbb{R}^d)\]
and
\[ \mathbb{E}_\omega |{_M}\!{\varphi}(\alpha; \t; \omega) - _M\!\varphi(\alpha;\t_1;\omega)|^\eta
\leq C_\alpha \|\t-\t_1\|^{d+\varepsilon} \quad (\|\t-\t_1\|\leq 1~, \,\alpha \in (1, \infty) \cap \mathbb{Q})~. \]
Then the sequence $(_M\Lambda(\t, \omega))_M$ is tight.
\end{lemma}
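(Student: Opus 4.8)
The plan is to prove tightness \emph{via} Prohorov's theorem, using the precompactness criterion of Corollary~\ref{c:6}, and to reduce the required boundedness and equicontinuity of the parameter-dependent point collections to the corresponding statements for their Laplace transforms, for which Lemma~\ref{l:7} (Kolmogorov's lemma) is tailor-made. Concretely: the space $P[\T]$ being Polish, it suffices to produce, for each $\varepsilon'>0$, a compact $\mathcal{K}=\mathcal{K}(\varepsilon')\subset P[\T]$ with $\inf_M\mathbb{P}_\omega\{\,{}_M\Lambda(\cdot,\omega)\in\mathcal{K}\}\ge1-\varepsilon'$. Fix a compact exhaustion $\mathbb{R}^d=\bigcup_\ell I_\ell$ ($I_\ell\uparrow$) and an enumeration $\{\alpha_n\}_{n\ge1}$ of $[1,\infty)\cap\mathbb{Q}$. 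By Corollary~\ref{c:6} it is enough that every $\Lambda\in\mathcal{K}$ have $\sup_{\t\in I_\ell}\varphi_\Lambda(1;\t)\le T_\ell$ (where $\varphi_\Lambda(\alpha;\t)=\sum_j e^{\alpha\lambda_j(\t)}$) for suitable $T_\ell$, and, for each $n$, $\varphi_\Lambda(\alpha_n;\cdot)$ inside a fixed equicontinuous, locally bounded family; so everything comes down to joint tightness in $C(\mathbb{R}^d)$ of the random functions $\t\mapsto{}_M\varphi(\alpha;\t;\omega)$ over $M$ and $\alpha\in[1,\infty)\cap\mathbb{Q}$, witnessed by uniformly Hölder families.

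For rational $\alpha>1$ this is immediate: the second hypothesis is precisely Kolmogorov's moment condition for ${}_M\varphi(\alpha;\cdot;\omega)$, with an exponent $\eta$ and Hölder gain $\varepsilon$ that do not depend on $\alpha$, so Lemma~\ref{l:7} gives that $({}_M\varphi(\alpha;\cdot;\omega))_M$ is tight in $C(\mathbb{R}^d)$, realized by functions that are Hölder-$\gamma$ on each $I_\ell$ for one fixed exponent $\gamma<\varepsilon/\eta$ with deterministic (but $\alpha$- and $\ell$-dependent) constants; splitting the budget over $n$, one gets compacts $K_n\subset C(\mathbb{R}^d)$ with $\sup_M\mathbb{P}_\omega\{\exists n:\ {}_M\varphi(\alpha_n;\cdot;\omega)\notin K_n\}<\varepsilon'/2$. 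The exponent $\alpha=1$ — the only one outside the scope of Lemma~\ref{l:7} — is the technical heart. For it the hypotheses furnish only the pointwise bound $\mathbb{E}_\omega{}_M\varphi(1;\t;\omega)\le C(\t)$ with $C$ locally bounded; by Fubini and Markov this yields $\int_{I_\ell}{}_M\varphi(1;\t;\omega)\,d\t\le T_\ell$ with probability $\ge1-\varepsilon'2^{-\ell-2}$, uniformly in $M$, while on the event of the previous step $\varphi_{\,{}_M\Lambda}(2;\cdot;\omega)$ is bounded on $I_\ell$, hence so is ${}_M\lambda_1(\cdot,\omega)$ and hence every ${}_M\lambda_j(\cdot,\omega)$, by a deterministic $L_\ell$. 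Splitting $\varphi_{\,{}_M\Lambda}(1;\t)$ into its part over $\{{}_M\lambda_j(\t)\ge0\}$ (bounded by $\varphi_{\,{}_M\Lambda}(2;\t)$) and over $\{{}_M\lambda_j(\t)<0\}$ (equal to $\lim_{\alpha\downarrow1}\sum_{{}_M\lambda_j(\t)<0}e^{\alpha{}_M\lambda_j(\t)}\le\liminf_{\alpha\downarrow1}\varphi_{\,{}_M\Lambda}(\alpha;\t)$, controlled by the $K_n$ with $\alpha_n\downarrow1$, with oscillation likewise dominated by theirs) one obtains, on a further $M$-uniform event of probability $\ge1-\varepsilon'/2$, that $\varphi_{\,{}_M\Lambda}(1;\cdot;\omega)$ is uniformly bounded and equicontinuous on each $I_\ell$, with a deterministic modulus $\widehat{\omega}^{(1)}_\delta\to0$.

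To conclude I would let $\mathcal{K}_0$ be the set of $\Lambda\in P[\T]$ with $\varphi_\Lambda(\alpha_n;\cdot)\in K_n$ for all $n$ and $\sup_{I_\ell}\varphi_\Lambda(1;\cdot)\le T_\ell$ for all $\ell$, and put $\mathcal{K}=\overline{\mathcal{K}_0}$. By the above, $\sup_M\mathbb{P}_\omega\{\,{}_M\Lambda(\cdot,\omega)\notin\mathcal{K}_0\}\le\varepsilon'$; moreover, for $\Lambda\in\mathcal{K}_0$, $\sum_j e^{\lambda_j(\t)}=\varphi_\Lambda(1;\t)$ is uniformly bounded on each $I_\ell$, and for every $\alpha\in[1,\infty)\cap\mathbb{Q}$ one has $\omega_\delta(\varphi_\Lambda(\alpha;\bullet);\t)\le C_\alpha(\t)\,\omega_\delta$ with the single gauge $\omega_\delta:=\delta^\gamma+\widehat{\omega}^{(1)}_\delta\to0$; so Corollary~\ref{c:6} applies, $\mathcal{K}_0$ is precompact, $\mathcal{K}$ is compact, and Prohorov's theorem gives tightness. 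The main obstacle is precisely the $\alpha=1$ estimate: $\varphi_\Lambda(1;\cdot)$ sits at the edge of the range in which it is dominated by $\varphi_\Lambda(\alpha;\cdot)$ with $\alpha>1$, so its uniform control on compacta cannot be read off from the $\alpha>1$ analysis and must be wrung from the first-moment hypothesis together with the monotonicity of $\varphi_{\,{}_M\Lambda}(\alpha;\t)$ on the negative spectrum as $\alpha\downarrow1$; a lesser point is the single-gauge demand of Corollary~\ref{c:6}, which is met because Kolmogorov's lemma delivers Hölder moduli with one common exponent across all $\alpha>1$.
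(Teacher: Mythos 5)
The paper offers no proof of this lemma (it is ``quoted'' as a corollary of Lemma~\ref{l:7}), so your argument must stand on its own. Its architecture --- Prohorov's theorem, the precompactness criterion of Corollary~\ref{c:6}, and Kolmogorov's lemma applied to the transforms with $\alpha>1$ --- is sensible, and the $\alpha>1$ part is essentially fine (modulo anchoring each ${}_M\varphi(\alpha;\cdot)$ at a single point, say via ${}_M\varphi(\alpha;\t_*)\le {}_M\varphi(1;\t_*)^\alpha$ and Markov, since an increment bound alone does not bound a function). But the step you correctly identify as the technical heart --- a uniform bound and modulus for ${}_M\varphi(1;\cdot)$ on compacts, on an event of probability $\ge 1-\varepsilon'/2$ uniformly in $M$ --- has a genuine gap, and the claim is in fact false under the stated hypotheses. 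The inequality $\sum_{\lambda_j<0}e^{\lambda_j}\le\liminf_{\alpha\downarrow1}\varphi(\alpha;\t)$ holds for each fixed $M$, but the bounds furnished by your compacts $K_n$ degenerate as $\alpha_n\downarrow1$: the hypothesis permits $C_\alpha\to\infty$ as $\alpha\downarrow1$ and provides no pointwise moment bound on $\varphi(\alpha;\t)$ for $\alpha>1$, so $\liminf$ of the bounds $\sup_{I_\ell}\sup_{K_{n}}$ along a subsequence $\alpha_n\downarrow1$ may be $+\infty$. Concretely, take $d=1$ and let ${}_M\Lambda$ consist of $\lceil e^M\rceil$ points, all at height $-M+h_M(\t)$, where $h_M$ is a bump of height $\log M$ and width $1/M$ centred at a uniform random $U\in[0,1]$. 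Then $\mathbb{E}\,{}_M\varphi(1;\t)=O(1)$ and ${}_M\varphi(\alpha;\t)=O(e^{(1-\alpha)M}M^{\alpha})$ for $\alpha>1$, so both hypotheses hold (e.g.\ with $\eta=2$, $\varepsilon=1$); yet $\sup_{[0,1]}{}_M\varphi(1;\cdot)\asymp M\to\infty$ almost surely.

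Note that this example \emph{is} tight --- all points drift to $-\infty$, so ${}_M\Lambda$ converges to the trivial configuration in $P[\T]$ --- but your compact sets $\mathcal{K}$, which demand $\sup_{I_\ell}\varphi_\Lambda(1;\cdot)\le T_\ell$, contain it with probability tending to $0$; so the proof as written fails to establish tightness even where it holds. The moral is that the literal hypotheses of Corollary~\ref{c:6} are strictly stronger than what Lemma~\ref{l:prop} supplies: the topology of $P$ is insensitive to points below $-R$ up to precision $1/R$, whereas $\varphi(1;\cdot)$ is not. A correct argument must exploit this, e.g.\ by imposing the $\alpha=1$ bound only at a countable dense set of parameters (where Markov applies directly to the pointwise first-moment hypothesis) and proving a variant of Corollary~\ref{c:6} with that weaker input, or by truncating the configurations at level $-R$ before invoking equicontinuity. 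As it stands, the argument does not close.
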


\section{Proofs of the theorems}\label{s:pf}

\begin{proof}[Proof of Theorem~\ref{thm:sosh} and Theorem~\ref{thm:prop} (item 1)]
We closely follow the method of \cite{Sosh}. Let us verify that for any 
\[ \overline{\t} = (\t_1, \cdots, \t_k) \in \mathbb{R}^{2k}~,
\quad \overline{\alpha} = (\alpha_1, \cdots, \alpha_k) \in (0, \infty)^k \]
the transforms 
\[ \mathbb{E} \prod_{p=1}^k \sum_j \exp \left\{ \alpha_p \, _M\!\lambda_j(\t_p) \right\} \]
converge, as $ M \to \infty$, to a limit $\phi_\beta(\overline{\alpha}, \overline{\t})$ which 
is related to $\phi_\beta^\#$ of Lemma~\ref{l:sprop1} by the equations
\[ \phi_\beta^\#(\overline{\alpha}, \overline{\t})
= \sum_{I \subset \{1,\cdots,k\}} \phi_\beta(\overline{\alpha}|_I, \overline{\t}|_I)
\phi_\beta(\overline{\alpha}|_{I^c}, \overline{\t}|_{I^c})~.\]
The statements will then follow from Lemma~\ref{l:pp:conv} (the required determinacy
property of the processes at a fixed $\t$ follows from the determinantal / Pfaffian structure,
cf.\ \cite[Proof of Theorem~A]{Sosh}).

In view of Lemma~\ref{l:sprop1}, it is sufficient to show that for
\[ m_p \sim \alpha_p  M^{2/3}~,\quad  \tau_p = s_p M^{-1/3}~,
\quad N_P = M (1 + 2 t_p M^{-1/3})\]
one has:
\begin{equation}\label{eq:req} \mathcal{M}(\bar{m}, \bar{\tau}, \bar{N}) = 
\mathbb{E} \prod_{p=1}^k \sum_j \left\{  e^{\alpha_p \, _M\!\lambda_j(s_p, t_p)}+ 
(-1)^{m_p} e^{\alpha_p \, _M\!\lambda_j^\sim(s_p, t_p)}\right\} 
+ o(1)~,\end{equation}
where $_M\!\lambda_j^\sim$ are obtained by scaling (\ref{eq:scaling}) the eigenvalues of
$\xi_j^\sim = \xi_{M-j+1}$ of $-\ltau{H}$ (the estimates need not be uniform in $s_p$ and $t_p$).

To prove (\ref{eq:req}), it is sufficient to divide the  eigenvalues into four categories:
(a) those near the right edge: $|\xi_j - 2\sqrt{N}| \leq M^{-1/6 + \epsilon}$ for, say, $\epsilon = 1/100$; 
(b) the left edge, $|\xi_j  + 2\sqrt{N}| \leq M^{-1/6 + \epsilon}$; (c) the bulk, $|\xi_j| \leq 2 \sqrt{N} - M^{-1/6+\epsilon}$; and
(d) the tails, $|\xi_j| \geq 2 \sqrt{N} +   N^{-1/6+\epsilon}$. The first two categories yield the two terms of
(\ref{eq:req}); the third one vanishes in the limit since 
\[ M (1 - M^{-2/3+\epsilon})^{\alpha M^{2/3}} \to 0~, \]
whereas the fourth one vanishes in the limit due to Lemma~\ref{c:sosh} and Chebyshev's inequality.
\end{proof}

\begin{proof}[Proof of Theorem~\ref{thm:prop} (item 2)]
Let us verify the condition (\ref{eq:condmod}) of Lemma~\ref{l:mod} with $d=2$, 
$\eta = 6$, and $\varepsilon = 1$. Consider a matrix 
$\ltau{H}$ with unimodular entries (\ref{eq:unimod}); in view of Theorem~\ref{thm:sosh}, it is 
sufficient to prove that (\ref{eq:condmod}) holds before the limit, up to an $o(1)$ term which
may depend on $\t,\t_1 \in \mathbb{R}^2$ but vanishes as $M \to \infty$. The approximation
(\ref{eq:req}) justified in the proof of Theorem~\ref{thm:sosh} reduces our task to an estimate 
for moments. Finally, the argument from the proof of Lemma~\ref{l:sprop} reduces it further 
to the estimates on the  modified moments 
$\widetilde{\mathcal{M}}$, which are provided by Lemmata~\ref{l:cont-s} and \ref{l:cont-t}
(applied with $k=3$).
\end{proof}

\begin{proof}[Proof of Theorem~\ref{thm:conv}]
The proof is parallel to that of Theorem~\ref{thm:prop} (item 2): we verify the conditions
of Lemma~\ref{l:prop} (with $d=1$, $\eta = 4$, and $\varepsilon = 1$) using Lemma~\ref{l:cont-t} 
(with $k=2$); the only significant difference is that now we can not tolerate any $o(1)$ terms. Luckily, 
there is no such term in Lemma~\ref{l:cont-t}, therefore  we only need to deal with the $o(1)$ 
term of (\ref{eq:req}). 

This can be done as follows. For $|\lambda| \leq M^{\epsilon}$,  the approximation of 
$e^{\alpha \lambda}$ by a power of $1  +\lambda / (2N^{2/3})$ can be differentiated in $\lambda$,
hence the contribution of these eigenvalues to additive $o(1)$ term can be incorporated in 
a multiplicative term $1 + o(1)$. The exponentially small contribution of the bulk and the tails 
can be incorporated in $|t_1 - t_2|^2$ since, by construction, we  only need to consider 
$|t_1 - t_2| \gtrsim M^{-2/3}$.
\end{proof}

\end{document}